\documentclass[11pt]{article}
\usepackage[utf8]{inputenc}
\usepackage[english]{babel}

\usepackage{amsmath}
\usepackage{graphicx,psfrag,epsf}
\usepackage{enumerate}
\usepackage{natbib}
\usepackage{times}
\usepackage{bm}

\usepackage{authblk}
\author[1]{Javier Zapata\thanks{Corresponding Author}}
\author[2]{Sang-Yun Oh}
\author[3]{Alexander Petersen}
%\author[3]{Alexander Petersen\thanks{Research supported by NSF grant DMS-1811888}}
\affil[1,2]{Department of Statistics and Applied Probability \protect\\ University of California Santa Barbara}
\affil[2]{Lawrence Berkeley Lab, Berkeley, CA}
\affil[3]{Department of Statistics, Brigham Young University, Provo, Utah, U.S.A.}

\usepackage{algorithm,algorithmic}

\makeatletter
\providecommand{\@LN}[2]{}
\makeatother

%%% User-defined macros should be placed here, but keep them to a minimum.
%%%%%%%%%%%%%%%%%%%%%%%%%%%%%%
%%%%%%%%%%%%%%%%%%%%%%%%%%%%%%
%%%%%%%%%%%%%%%%%%%%%%%%%%%%%%

\usepackage{amsmath,amsthm}
%% Please use the following statements for
%% managing the text and math fonts for your papers:
%\usepackage{times}
%\usepackage[cmbold]{mathtime}
%\usepackage{bm}
%\usepackage{natbib}

%\theoremstyle{plain}
\newtheorem{theorem}{Theorem}
\newtheorem{corollary}{Corollary}
\newtheorem{lemma}{Lemma}
\newtheorem{proposition}{Proposition}

\theoremstyle{definition}
\newtheorem{definition}{Definition}
\newtheorem{assumption}{Assumption}
\newtheorem{remark}{Remark}

%\pdfminorversion=4
% NOTE: To produce blinded version, replace "0" with "1" below.
\newcommand{\blind}{1}

% DON'T change margins - should be 1 inch all around.
\addtolength{\oddsidemargin}{-.5in}%
\addtolength{\evensidemargin}{-.5in}%
\addtolength{\textwidth}{1in}%
\addtolength{\textheight}{-.3in}%
\addtolength{\topmargin}{-.8in}%

%Remove the spacing between paragraphs and have a small paragraph indentation
\setlength{\parskip}{0cm}
\setlength{\parindent}{1em}
\usepackage{microtype}
%Remove space around section headings.
\usepackage[compact]{titlesec}
\titlespacing{\section}{0pt}{2ex}{1ex}
\titlespacing{\subsection}{0pt}{1ex}{0ex}
\titlespacing{\subsubsection}{0pt}{0.5ex}{0ex}

 \usepackage{mathtools}
 \usepackage{amssymb}
 \usepackage{amsfonts}
 \usepackage{amsbsy} 
 \usepackage{color}
\usepackage{url}
\usepackage{enumerate}
\usepackage{tikz}
\usepackage{enumitem}
\usepackage{float}
\usepackage{caption}
\captionsetup[figure]{font=small}
\usepackage{subcaption} % for figures
\usepackage{graphicx} % for figures

%\usepackage{amsthm,amssymb,amsfonts,amsbsy,amsmath,hyperref,color,url,enumerate,tikz,bm,enumitem,float,subcaption}

%%% User-defined macros should be placed here, but keep them to a minimum.
%\def\T{{ \mathrm{\scriptscriptstyle T} }} 
\def\T{{^\mathrm{\scriptscriptstyle T} }} 
\definecolor{DarkBlue}{rgb}{0,.08,.45}
\definecolor{DarkRed}{rgb}{.7,0,.4}

% Colors for HCP brain map visualization
\definecolor{HCP_Blue}{rgb}{0.12156, 0.47058, 0.70588}
\definecolor{HCP_LightGreen}{rgb}{0.69803, 0.87450, 0.54117}
\definecolor{HCP_Green}{rgb}{0.2, 0.62745, 0.17254}
\definecolor{HCP_Red}{rgb}{0.89019, 0.10196, 0.10980}
\definecolor{HCP_Purple}{rgb}{0.41568, 0.23921, 0.60392}
\def\HCPblue{\textcolor{HCP_Blue}}
\def\HCPlightgreen{\textcolor{HCP_LightGreen}}
\def\HCPgreen{\textcolor{HCP_Green}}
\def\HCPred{\textcolor{HCP_Red}}
\def\HCPpurple{\textcolor{HCP_Purple}}
% command use to describe lines in figures using packages tikz and caption
\DeclareRobustCommand\sampleline[1]{%
  \tikz\draw[#1] (0,0) (0,\the\dimexpr\fontdimen22\textfont2\relax)
  -- (2em,\the\dimexpr\fontdimen22\textfont2\relax);%
}

%\addtolength\topmargin{35pt}
\def\bco{\iffalse}
\def\cov{{\rm cov}}
\def\var{{\rm var}}

\def\diag{{\rm diag}}
\def\trace{{\rm tr}} %\def\trace{{\rm trace}}

\newcommand\independent{\protect\mathpalette{\protect\independenT}{\perp}}
\def\independenT#1#2{\mathrel{\rlap{$#1#2$}\mkern2mu{#1#2}}}

\def\ci{\cite}
\def\cp{\citep}

\def\Up{\Upsilon}

\def\ra{\rightarrow}
\def\mc{\mathcal}
\def\inv{^{-1}}

\def\s1n{\sum_{i=1}^n}
\def\p1n{\prod_{i=1}^n}
\def\i01{\int_0^1}
\def\zo{[0,1]}
\def\dt{\mathrm{d}t}

\def\suml{\sum_{l = 1}^\infty}
\def\R{\mathcal{R}}

\def\pr{\mathrm{pr}}

\def\Ltp{(L^2\zo)^p}
\def\Lt{L^2\zo}
\DeclarePairedDelimiterX{\ip}[2]{\langle}{\rangle}{#1, #2} % \ip{x}{y} will give <x,y>
\DeclarePairedDelimiterX{\ipp}[2]{\langle}{\rangle_p}{#1, #2} % \ipp{x}{y} will give <x,y>_p
\DeclarePairedDelimiterX{\norm}[1]{\lVert}{\rVert}{#1} % \norm{x} will give ||x||
\DeclarePairedDelimiterX{\normp}[1]{\lVert}{\rVert_p}{#1} % \norm{x} will give ||x||
\DeclarePairedDelimiterX{\normHS}[1]{\lVert}{\rVert_{\mathrm{HS}}}{#1}

%% For operator norm
\DeclareFontEncoding{FMS}{}{}
\DeclareFontSubstitution{FMS}{futm}{m}{n}
\DeclareFontEncoding{FMX}{}{}
\DeclareFontSubstitution{FMX}{futm}{m}{n}
\DeclareSymbolFont{fouriersymbols}{FMS}{futm}{m}{n}
\DeclareSymbolFont{fourierlargesymbols}{FMX}{futm}{m}{n}
\DeclareMathDelimiter{\VERT}{\mathord}{fouriersymbols}{152}{fourierlargesymbols}{147}

%\def\references{\bibliography{FunGM_ParSep}}

% xr package
\usepackage{xr}

\makeatletter
\newcommand*{\addFileDependency}[1]{% argument=file name and extension
  \typeout{(#1)}
  \@addtofilelist{#1}
  \IfFileExists{#1}{}{\typeout{No file #1.}}
}
\makeatother

%\myexternaldocument{supplement}

\begin{document}

\bibliographystyle{agsm}

\def\spacingset#1{\renewcommand{\baselinestretch}%
{#1}\small\normalsize} \spacingset{1}

%%%%%%%%%%%%%%%%%%%%%%%%%%%%%%%%%%%%%%%%%%%%%%%%%%%%%%%%%%%%%%%%%%%%%%%%%%%%%%

\if1\blind
{
  \title{\bf Partial Separability and Functional Graphical Models for Multivariate Gaussian Processes}
%   \author{\textbf{Javier Zapata}
%     \hspace{.2cm}\\
%     \textit{Department of Statistics and Applied Probability, University of California Santa Barbara}\\
%     \textbf{Sang-Yun Oh} \\
%      \textit{Department of Statistics and Applied Probability, University of California Santa Barbara} \\ 
%     \textit{Lawrence Berkeley Lab, Berkeley, California, U.S.A  }\\
%     and \\
%     \textbf{Alexander Petersen} \\
%      \textit{Department of Statistics, Brigham Young University, Provo, Utah, U.S.A.} \\ 
%     }
  \maketitle
} \fi

\if0\blind
{
  \bigskip
  \bigskip
  \bigskip
  \begin{center}
    {\LARGE\bf Title}
\end{center}
  \medskip
} \fi

\bigskip

\begin{abstract}
%There should be a single paragraph summary which should not contain formulae or symbols, followed by some key words in alphabetical order.  Typically there are 3--8 key words, which should contain nouns and be singular rather than plural.  The summary contains bibliographic references only if they are essential.  It should indicate results rather than describe the contents of the paper: for example, `A simulation study is performed' should be replaced by a more informative phrase such as `In a simulation our estimator had smaller mean square error than its main competitors.'\\

The covariance structure of multivariate functional data can be highly complex, especially if the multivariate dimension is large, making extensions of statistical methods 
for standard multivariate data to the functional data setting challenging.  For example, Gaussian graphical models have recently been extended to the setting of multivariate functional data by applying multivariate methods to the coefficients of truncated basis expansions.  However, a key difficulty compared to multivariate data is that the covariance operator is compact, and thus not invertible. The methodology in this paper addresses the general problem of covariance modeling for multivariate functional data, and functional Gaussian graphical models in particular. As a first step, a new notion of separability for the covariance operator of multivariate functional data is proposed, termed partial separability, leading to a novel Karhunen-Lo\`eve-type expansion for such data.  Next, the partial separability structure is shown to be particularly useful in order to provide a well-defined functional Gaussian graphical model that can be identified with a sequence of finite-dimensional graphical models, each of identical fixed dimension. This motivates a simple and efficient estimation procedure through application of the joint graphical lasso. Empirical performance of the method for graphical model estimation is assessed through simulation and analysis of functional brain connectivity during a motor task.
%Empirical performance of the method for graphical model estimation is assessed through simulation and analysis of functional brain connectivity during a motor task.
\end{abstract}

\noindent%
{\it Keywords:} Functional Data, Functional Brain Connectivity; Inverse Covariance, Separability
\vfill

\newpage

\section{Introduction}
\label{sec: intro}

The analysis of functional data continues to be an important field for statistical development given the abundance of data collected over time via sensors or other tracking equipment.  Frequently, such time-dependent signals are vector-valued, resulting in multivariate functional data.  Prominent examples include longitudinal behavioral tracking \cp{mull:14:1}, blood protein levels \cp{mull:05:2}, traffic measurements \cp{chio:14,chio:16}, and neuroimaging data \cp{mull:16:2,happ:18}, for which dimensionality reduction and regression have been the primary methods investigated.  As for standard multivariate data, the nature of dependencies between component functions of multivariate functional data constitute an important question requiring careful consideration.   

Dependencies between functional magnetic resonance imaging (fMRI) signals for a large number of regions across the brain during a motor task experiment are the motivating example for this paper.  Since fMRI signals are collected simultaneously, it is natural to model these as a multivariate process $\left\{X(t) \in \mathbb{R}^p: t \in \mathcal{T}\right\}$, where $\mc{T} \subset \mathbb{R}$ is a time interval over which the scans are taken \cp{qiao:19}.  The dual multivariate and functional aspects of the data make the covariance structure of $X$ %quite 
complex, particularly if the multivariate dimension $p$ is large.  This leads to difficulties in extending highly useful multivariate analysis techniques, such as graphical models, to multivariate functional data without further structural assumptions.  For example, in the analogous setting of spatio-temporal data, it is common to impose further structure to the covariance, usually assuming that the spatial and temporal effects can be separated in some way.  However, similar notions for multivariate functional data have not yet been considered.

As for ordinary multivariate data, the conditional independence properties of $X$ are perhaps of greater interest than the marginal covariance, leading to the consideration of inverse covariance operators and graphical models for functional data. If $X$ is Gaussian, each component function $X_j$ corresponds to a node in the functional Gaussian graphical model, which is a single network of $p$ nodes.  This is inherently different from the estimation of time-dependent graphical models \cp{zhou:10,kola:11,qiu:16,qiao2020doubly}, in which the graph is dynamic and has nodes corresponding to scalar random variables.  In this paper, the graph is considered to be static while each node represents an infinite-dimensional functional object.  This is an important distinction, as covariance operators for functional data are compact and thus not invertible in the usual sense, so that presence or absence of edges cannot in general be identified immediately with zeros in any precision operator.  In the past few years, there has been some investigation into this problem. \ci{duns:16} developed a Bayesian framework for graphical models on product function spaces, including the extension of Markov laws and appropriate prior distributions. \ci{qiao:19} implemented a truncation approach, whereby each function is represented by the coefficients of a truncated basis expansion using functional principal components analysis, and a finite-dimensional graphical model is estimated by a modified graphical lasso criterion. \cite{li:18} developed a non-Gaussian variant, where conditional independence was replaced by a notion of so-called additive conditional independence.

The methodology proposed in this paper is within the setting of multivariate Gaussian processes as in \cite{qiao:19}, and exploits a notion of separability for multivariate functional data to develop efficient estimation of suitable inverse covariance objects. There are at least three novel contributions of this methodology to the fields of functional data analysis and Gaussian graphical models.  First, a structure termed partial separability is defined for the covariance operator of multivariate functional data, yielding a novel Karhunen-Lo\`eve type representation. %termed the partially separability Karhunen-Lo\`eve expansion (PSKL), given its similarity to the well-known analog for univariate functional data.  %The infinite dimension inevitably leads to the problem of noninvertibility of associated covariance operators, with the consequence that a functional object encapsulating conditional covariance is not immediately well-defined.  
The second contribution is to show that, when the process is indeed partially separable, the functional graphical model is well-defined and can be identified with a sequence of finite-dimensional graphical models.  In particular, the assumption of partial separability overcomes the problem of noninvertibility of the covariance operator when $X$ is infinite-dimensional, in contrast with \cite{duns:16,qiao:19} which assumed that the functional data were concentrated on finite-dimensional subspaces.  Third, an intuitive estimation procedure is developed %for the partially separable functional Gaussian graphical model 
based on simultaneous estimation of multiple graphical models.  Furthermore, theoretical properties are derived under the regime of fully observed functional data.  Empirical performance of the proposed method is then compared to that of \cite{qiao:19} through simulations involving dense and noisily observed functional data, including a setting where partial separability is violated.  Finally, the method is applied to the study of brain connectivity (also known as functional connectivity in the neuroscience literature) using data from the Human Connectome Project corresponding to a motor task experiment. Through these practical examples, our proposed method is shown to provide improved efficiency in estimation and computation.  %We also note that a downloadable version of both PSKL and psFGGM methods has been developed in R, and is freely available from \url{https://cran.r-project.org/web/packages/fgm}.
An R package 
%(\textit{blinded name})
%\href{https://cran.r-project.org/web/packages/fgm}{\textbf{fgm}} 
%`fgm' 
\texttt{fgm}
implementing the proposed methods is freely available via the CRAN repository.

\section{Preliminaries} 
\label{sec: prelim}

\subsection{Gaussian Graphical Models}
Consider a $p$-variate random variable $\theta = (\theta_1, \ldots, \theta_p)\T,$ $p > 2.$  For any distinct indices $j,k=1,\ldots,p,$ let $\theta_{-(j,k)} \in \R^{p-2}$ denote the subvector of $\theta$ obtained by removing its $j$th and $k$th entries.  A graphical model \citep{laur:96}  for $\theta$ is an undirected graph $G = (V, E),$ where $V = \{1,\ldots,p\}$ is the node set and $E \subset V \times V \setminus \{(j,j): j \in V\}$ is called the edge set.  The edges in $E$ encode the presence or absence of conditional independencies amongst the distinct components of $\theta$ by excluding $(j,k)$ from $E$ if and only if $\theta_j \independent \theta_k \mid \theta_{-(j,k)}.$  In the case that $\theta \sim \mc{N}_p(0, \Sigma),$ the corresponding Gaussian graphical model is intimately connected to the positive definite covariance matrix $\Sigma$ through its inverse $\Omega = \Sigma^{-1},$ known as the precision matrix of $\theta$. Specifically, the edge set $E$ can be readily obtained from $\Omega$ by the relation $(j,k) \in E$ if and only if $\Omega_{jk} \neq 0$ \citep{laur:96}. This identification of edges in $E$ with the non-zero off-diagonal entries of $\Omega$ is due to the simple fact that the latter are proportional to the conditional covariance between components.  Thus, the zero/non-zero structure of $\Omega$ serves as an adjacency matrix of the graph $G,$ making disposable a vast number of statistical tools for sparse inverse covariance estimation in order to recover a sparse graph structure from data. 

\subsection{Functional Gaussian Graphical Models}
\label{ss: mfd}

We first introduce some notation.  Let $L^2\zo$ denote the space of square-integrable measurable functions on $\zo$ endowed with the standard inner product $\ip{g_1}{g_2} = \i01 g_1(t)g_2(t)\,\dt$ and associated norm $\norm{\cdot}.$  $\Ltp$ is its $p$-fold Cartesian product or direct sum, endowed with inner product $\ipp{f_1}{f_2} = \sum_{j = 1}^p \ip{f_{1j}}{f_{2j}}$ and its associated norm $\normp{\cdot}.$  For a generic compact covariance operator $\mc{A}$ defined on an arbitrary Hilbert space, let $\lambda_j^{\mc{A}}$ denote its $j$-th largest eigenvalue.  Suppose $f \in \Ltp,$ $g \in \Lt,$ $a \in \R^p,$ $\Delta$ is a $p\times p$ matrix, and $\mc{B}: \Lt \ra \Lt$ is a linear operator.  Then $ag \in \Ltp$ takes values $\{g(x)\}a \in \R^p,$ $\Delta f \in \Ltp$ takes values $\Delta \{ f(x)\} \in \R^p,$ $\mc{B}(f) = (\mc{B}(f_1),\ldots,\mc{B}(f_p)) \in \Ltp,$ and $(\Delta \mc{B})(f) = \mc{B}(\Delta f).$ The tensor products $g\otimes g$ and $f \otimes_p f$  signify the operators $(g\otimes g)(\cdot) = \ip{g}{\cdot}g$ and $(f \otimes_p f)(\cdot) = \ipp{f}{\cdot}f$ on $\Lt$ and $\Ltp,$ respectively.

In this paper, multivariate functional data constitute a random sample from a multivariate process $\left\{X(t) \in \mathbb{R}^p: t \in\zo\right\}$, which, for the moment, is assumed to be zero-mean such that $X \in \Ltp$ almost surely and $E\left(\normp{X}^2\right) < \infty$.  
If $X$ is also assumed to be Gaussian, then its distribution is uniquely characterized by its covariance operator $\mc{G},$ the infinite-dimensional counterpart of the covariance matrix for standard multivariate data.  In fact, one can think of it as a matrix of operators \mbox{$\mc{G} = \{\mc{G}_{jk}: j,k\in V \},$} where each entry $\mc{G}_{jk}$ is a linear, trace class integral operator on $L^2\zo$ \citep{hsin:15} with kernel $G_{jk}(s,t) = \cov\{X_j(s), X_k(t)\}.$ That is, for any $g \in L^2\zo,$ $\mc{G}_{jk}(g)(\cdot) = \i01 G_{jk}(\cdot,t)g(t) \dt.$  Then $\mc{G}$ is an integral operator on $\Ltp$ with $\{\mc{G}(f)\}_j = \sum_{k = 1}^p \mc{G}_{jk}(f_k)$ ($f \in \Ltp,$ $j \in V$).

A functional Gaussian graphical model for $X$ is a graph $G = (V, E)$ that encodes the conditional independency structure amongst its components.  As in the finite-dimensional case, the edge set can be recovered from the conditional covariance functions
\begin{equation}
\label{eq: cond_cov}
C_{jk}(s, t) = \cov\{X_j(s), X_k(t)\mid X_{-(j, k)}\} \quad (j,k \in V, j \neq k ),
\end{equation}
through the relation $(j,k) \in E$ if and only if $C_{jk}(s,t) = 0$ for all $s,t \in [0,1].$  However, unlike the finite-dimensional case, the covariance operator $\mc{G}$ is compact and thus not invertible, with the consequence that the connection between conditional independence and an inverse covariance operator is lost, as the latter does not exist.  This is an established issue for infinite-dimensional functional data, for instance in linear regression models with functional predictors; see \cite{mull:16:3} and references therein.  Thus, a common approach is to regularize the problem by first performing dimensionality reduction, most commonly through a truncated basis expansion of the functional data.  Specifically, one chooses an orthonormal functional basis $\{\phi_{jl}\}_{l = 1}^\infty$ of $L^2\zo$ for each $j,$ and expresses each component of $X$ as
\begin{equation}
    \label{eq: basisExp}
    X_j(t) = \sum_{l = 1}^\infty \xi_{jl}\phi_{jl}(t), \quad \xi_{jl} = \i01 X_j(t)\phi_{jl}(t)\dt.
\end{equation}
These expansions are then truncated at a finite number of basis functions to perform estimation, and the basis size is allowed to diverge with the sample size to obtain asymptotic properties.

Previous work related to functional Gaussian graphical models include \ci{duns:16} and \ci{qiao:19}.  The former rigorously considered the notion of conditional independence for functional data, and proposed a family of priors for the covariance operator $\mc{G}.$  The latter truncated \eqref{eq: basisExp} at $L$ terms using the functional principal component basis \citep{hsin:15}, and set $\xi_j = (\xi_{j1},\ldots,\xi_{jL})\T$ ($j \in V).$  \cite{qiao:19} define a $pL \times pL$ covariance matrix $\Gamma$ blockwise for the concatenated vector $(\xi_1\T,\ldots,\xi_p\T)\T$, as $\Gamma = (\Gamma_{jk})_{j,k=1}^p, (\Gamma_{jk})_{lm} = \cov(\xi_{jl}, \xi_{km}), \ (l,m=1,\ldots,L).$
Then, a functional graphical lasso algorithm was developed to estimate $\Gamma\inv$ with sparse off-diagonal blocks in order to estimate the edge set. 

The method of \ci{qiao:19} constitutes an intuitive approach to functional graphical model estimation, but encounters some difficulties that we seek to address in this paper.  From a theoretical point of view, even when $p$ is fixed, consistent estimation of the graphical model requires that one permit $L$ to diverge, so that the number of covariance parameters needing to be estimated is $(pL)^2.$  Additionally, the identification of zero off-diagonal blocks in $\Gamma\inv$ was only shown to be linked to the true functional graphical model under the strict assumption that each $X_j$ take values in a finite-dimensional space almost surely.  In many practical applications, the dimension $p$ can be high, the number of basis functions $L$ may need to be large in order to retain a suitable representation of the observed data, or both of these may occur simultaneously.  It is thus desirable to introduce structure on $\mc{G}$ in order to provide a parsimonious basis expansion for multivariate functional data that is amenable to graphical model estimation.

\section{Partial Separability}
\label{sec: notions}

\subsection{A Parsimonious Basis for Multivariate Functional Data}

Functional principal component analysis is a commonly used tool for functional data, with one of its most useful features being the parsimonious reduction of each univariate component $X_j$ to a countable sequence of uncorrelated random variables through the Karhunen-Lo\`eve expansion \citep{hsin:15}, taking the form of \eqref{eq: basisExp} when the basis is chosen as the eigenbasis of $\mc{G}_{jj}$.  \cite{chio:14} extended this to multivariate functional data via the spectral decomposition $\mc{G} = \sum_{m = 1}^\infty \lambda_m^\mc{G} \rho_m \otimes_p \rho_m,$ leading to the multivariate Karhunen-Lo\`eve expansion
$
X(t) = \sum_{l = 1}^\infty \ipp{X}{\rho_l}\rho_l(t)$ 
where $\{\rho_l\}_{l = 1}^\infty$ is an orthonormal basis of $\Ltp.$ While this decomposition is indeed parsimonious, the multivariate aspect of the data is lost since the random coefficients $\ipp{X}{\rho_l}$ are scalar.  As a consequence, one cannot readily apply tools from finite-dimensional Gaussian graphical model estimation to these coefficients.  We begin by proposing a novel structural assumption on the eigenfunctions of $\mc{G},$ termed \emph{partial separability}, and will then demonstrate its advantages for defining and estimating functional Gaussian graphical models.

\begin{definition}
\label{def: partial}
A covariance operator $\mc{G}$ on $\Ltp$ is \emph{partially separable} if there exist orthonormal bases $\{e_{lj}\}_{j = 1}^p$ $(l \in \mathbb{N})$ of $\R^p$ and $\{\varphi_l\}_{l = 1}^\infty$ of $\Lt$ such that the eigenfunctions of $\mc{G}$ take the form $e_{lj}\varphi_l$ ($l \in \mathbb{N},$ $j \in V$).
\end{definition}
We first draw a connection to separability of covariance operators as they appear in spatiotemporal analyses, after which the implications of partial separability will be further explored.  Dependent functional data arise naturally in the context of a spatiotemporal random field that is sampled at $p$ discrete spatial locations.  In many instances, it is assumed that the covariance of $X$ is \emph{separable}, meaning that there exists a $p\times p$ covariance matrix $\Delta$ and covariance operator $\mc{B}$ on $L^2\zo$ such that $\mc{G} = \Delta \mc{B}$ \cp{gnei:06,gent:07,asto:17}.  Letting $\{e_j\}_{j = 1}^p$ and $\{\varphi_l\}_{l = 1}^\infty$ be the orthonormal eigenbases of $\Delta$ and $\mc{B}$ respectively, it is clear that $e_j\varphi_l$ are the eigenfunctions of $\mc{G}.$  Hence, a separable covariance operator $\mc{G}$ satisfies the conditions of Definition~\ref{def: partial}.  It should also be noted that the property of $\mc{G}$ having eigenfunctions of the form $e_j\varphi_l$ has also been referred to as \emph{weak separability} \cp{chen:18}, and is a consequence and not a characterization of separability.  %When $\mc{G}$ is weakly separable with eigenfunction $e_j\varphi_l,$ denote the eigenvalues by $\lambda_{jl}^\mc{G} = \ipp{\mc{G}(e_j\varphi_l)}{e_j\varphi_l}.$  
The connections between these three separability notions are summarized in the following result, whose proof is simple, and thus omitted.

\begin{proposition}
\label{prop: sepCon}
Suppose $\mc{G}$ is partially separable according to Definition~\ref{def: partial}.  Then $\mc{G}$ is also weakly separable if and only if the bases $\{e_{lj}\}_{j = 1}^p$ do not depend on $l.$  If $\mc{G}$ is weakly separable, then it is also separable if and only if the eigenvalues take the form $\ipp{\mc{G}(e_j\varphi_l)}{e_j\varphi_l} = c_jd_l$ for positive sequences $\{c_j\}_{j = 1}^p,$ $\{d_l\}_{l = 1}^\infty.$
\end{proposition}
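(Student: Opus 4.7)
My plan is to verify each equivalence by unwinding the definitions, since partial, weak, and full separability all assert the existence of an eigendecomposition of $\mc{G}$ in a specific product form.

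For the first equivalence, the reverse direction is immediate: if the bases $\{e_{lj}\}_{j=1}^p$ do not depend on $l$, then setting $e_j := e_{1j}$ produces an $l$-independent basis of $\R^p$ such that the eigenfunctions of $\mc{G}$ take the weakly separable form $e_j\varphi_l$. For the forward direction, I would assume $\mc{G}$ is weakly separable with some eigenbasis $\{e_j'\varphi_l'\}$ and reconcile this with the given partially separable eigenbasis $\{e_{lj}\varphi_l\}$ using the uniqueness of the spectral decomposition up to orthonormal choices within eigenspaces. The associated eigenvalues $\lambda_{lj} := \ipp{\mc{G}(e_{lj}\varphi_l)}{e_{lj}\varphi_l}$ determine in which eigenspace each $e_{lj}\varphi_l$ lies, and the weakly separable basis provides a choice of orthonormal basis within each eigenspace that does not depend on $l$, allowing one to replace $\{e_{lj}\}$ by the $l$-independent $\{e_j'\}$ in the partially separable representation.

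For the second equivalence, the forward direction is a direct computation using the composition rule $(\Delta \mc{B})(f) = \mc{B}(\Delta f)$ from Section~\ref{ss: mfd}. If $\mc{G} = \Delta\mc{B}$ with $\Delta e_j = c_j e_j$ and $\mc{B}\varphi_l = d_l \varphi_l$, then
\[
\mc{G}(e_j\varphi_l) = \mc{B}(\Delta(e_j\varphi_l)) = \mc{B}(c_j e_j \varphi_l) = c_j d_l\, e_j\varphi_l,
\]
so that the eigenvalue associated to $e_j\varphi_l$ factors as $c_j d_l$. For the converse, given weak separability together with $\ipp{\mc{G}(e_j\varphi_l)}{e_j\varphi_l} = c_j d_l$, I would define the operators $\Delta := \sum_{j=1}^p c_j\, e_j e_j\T$ on $\R^p$ and $\mc{B} := \sum_{l=1}^\infty d_l\, \varphi_l \otimes \varphi_l$ on $\Lt$, check on each basis element that $(\Delta\mc{B})(e_j\varphi_l) = c_j d_l\, e_j\varphi_l = \mc{G}(e_j\varphi_l)$, and conclude $\mc{G} = \Delta\mc{B}$ by linearity and continuity, using that $\{e_j\varphi_l\}_{j,l}$ is an orthonormal basis of $\Ltp$.

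The only substantive subtlety, and the step I expect to be the main obstacle, lies in the forward direction of the first equivalence when $\mc{G}$ has degenerate eigenvalues: partial and weak separability may then produce different orthonormal eigenbases within a common eigenspace, so the claim must be interpreted as the existence of a partially separable representation with $l$-independent $\{e_{lj}\}$, rather than uniqueness of the representation. Once this bookkeeping is dispatched, the remaining arguments reduce to routine computations on orthonormal bases, which presumably explains why the authors omit the proof.
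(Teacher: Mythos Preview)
The paper explicitly omits the proof of this proposition as ``simple,'' so there is no argument to compare against; your outline correctly unwinds the definitions and matches the authors' evident intent.

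One minor gap deserves mention: in the forward direction of the second equivalence you write ``If $\mc{G}=\Delta\mc{B}$ with $\Delta e_j=c_je_j$ and $\mc{B}\varphi_l=d_l\varphi_l$,'' but separability only asserts $\mc{G}=\Delta\mc{B}$ for \emph{some} $\Delta,\mc{B}$, with no a~priori link between these and the weakly separable basis $\{e_j\},\{\varphi_l\}$ already in hand. The fix is immediate. Using the paper's convention $(\Delta\mc{B})(f)=\mc{B}(\Delta f)$ one computes $\mc{G}(e_j\varphi_l)=(\Delta e_j)\,\mc{B}(\varphi_l)$, and hence
\[
\ipp{\mc{G}(e_j\varphi_l)}{e_j\varphi_l}=(e_j\T\Delta e_j)\,\ip{\mc{B}(\varphi_l)}{\varphi_l}=:c_jd_l,
\]
which already factors without assuming $e_j,\varphi_l$ diagonalize $\Delta,\mc{B}$. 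Alternatively, since $e_j\varphi_l$ is by hypothesis an eigenfunction of $\mc{G}$ and $\mc{G}(e_j\varphi_l)=(\Delta e_j)\,\mc{B}(\varphi_l)$ must then be a scalar multiple of the rank-one element $e_j\varphi_l$, one is forced to have $\Delta e_j\propto e_j$ and $\mc{B}\varphi_l\propto\varphi_l$ whenever the eigenvalue is nonzero, justifying your assumption after the fact. Your identification of the eigenvalue-degeneracy subtlety in the first equivalence, and the existential reading of ``the bases do not depend on $l$'' that it necessitates, is exactly right.
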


The next result gives several characterizations of partial separablity.  The proof of this and all remaining theoretical results can be found in the Appendix.
\begin{theorem}
\label{thm: PSequiv}
Let $\{\varphi_l\}_{l = 1}^\infty$ by an orthonormal basis of $\Lt.$  The following are equivalent:
\begin{enumerate}
 \item $\mc{G}$ is partially separable with $\Lt$ basis $\{\varphi_l\}_{l = 1}^\infty.$
    \item There exists a sequence of $p\times p$ covariance matrices $\{\Sigma_l\}_{l = 1}^\infty$ such that
    $$
    \mc{G} = \sum_{l = 1}^\infty \Sigma_l \varphi_l \otimes \varphi_l.
    $$
    \item The covariance operator of each $X_j$ can be written as $\mc{G}_{jj} = \sum_{l = 1}^\infty \sigma_{ljj} \varphi_l \otimes \varphi_l,$ with $\sigma_{ljj} > 0$ and $\suml \sigma_{ljj} < \infty,$ and $\cov(\ip{X_j}{\varphi_l},\ip{X_k}{\varphi_{l'}}) = 0$ ($j,k\in V,$\, $l \neq l'$). 
    \item The expansion 
    \begin{equation}
    \label{eq: PS_KL}
        X = \suml \theta_{l}\varphi_l, \quad \theta_l = (\ip{X_1}{\varphi_l},\ldots,\ip{X_p}{\varphi_l})\T,
    \end{equation}
    holds almost surely in $\Ltp$, where the $\theta_l$ are mutually uncorrelated random vectors.
\end{enumerate}
\end{theorem}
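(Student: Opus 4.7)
The plan is to establish the cycle of equivalences $(1)\Leftrightarrow(2)\Leftrightarrow(3)\Leftrightarrow(4)$. Throughout, the key computational tool is that for any $f,g \in \Lt$ and any $a,b \in \R^p$, the rank-one operator $(ab\T)(f\otimes g)$ acts on $h \in \Ltp$ by $\ip{g}{h_\cdot}\,a b\T$ applied componentwise; equivalently, the coefficient of $\mc{G}_{jk}$ in the tensor basis $\{\varphi_l \otimes \varphi_{l'}\}$ of Hilbert-Schmidt operators on $\Lt$ is $\ip{\mc{G}_{jk}(\varphi_{l'})}{\varphi_l} = \cov(\ip{X_j}{\varphi_l},\ip{X_k}{\varphi_{l'}})$.

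For $(1)\Leftrightarrow(2)$, suppose $\mc{G}$ is partially separable with eigenfunctions $e_{lj}\varphi_l$ and corresponding eigenvalues $\lambda_{lj}\ge 0$. The spectral decomposition gives $\mc{G} = \sum_{l,j} \lambda_{lj}\,(e_{lj}\varphi_l)\otimes_p(e_{lj}\varphi_l)$, and a rearrangement using the identity $(a\varphi_l)\otimes_p(a\varphi_l) = (aa\T)(\varphi_l\otimes\varphi_l)$ produces $\mc{G} = \sum_l \Sigma_l\,\varphi_l\otimes\varphi_l$ with $\Sigma_l = \sum_j \lambda_{lj}\,e_{lj}e_{lj}\T$, which is manifestly positive semidefinite. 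Conversely, given (2), diagonalize each $\Sigma_l = \sum_j \lambda_{lj}\,e_{lj}e_{lj}\T$ and reverse the identity above to recover the eigenstructure required by Definition~\ref{def: partial}; trace-class summability of $\mc{G}$ forces $\sum_{l,j}\lambda_{lj}<\infty$.

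For $(2)\Leftrightarrow(3)$, the $(j,j)$-entry of the representation in (2) immediately yields $\mc{G}_{jj} = \sum_l (\Sigma_l)_{jj}\,\varphi_l\otimes\varphi_l$ and, using the coefficient formula from the first paragraph, $\cov(\ip{X_j}{\varphi_l},\ip{X_k}{\varphi_{l'}}) = (\Sigma_l)_{jk}\,\mathbf{1}\{l=l'\}$, giving (3). For the reverse direction, define $(\Sigma_l)_{jk} = \cov(\ip{X_j}{\varphi_l},\ip{X_k}{\varphi_l})$; positive semidefiniteness of $\Sigma_l$ follows since for any $a\in\R^p$, $a\T\Sigma_l a = \var(\sum_j a_j\ip{X_j}{\varphi_l})\ge 0$. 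The vanishing cross-basis covariances in (3) imply that only diagonal-in-$l$ terms survive in the $(\varphi_l\otimes\varphi_{l'})$-expansion of each $\mc{G}_{jk}$, producing $\mc{G}_{jk} = \sum_l (\Sigma_l)_{jk}\,\varphi_l\otimes\varphi_l$, which reassembles to (2).

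For $(3)\Leftrightarrow(4)$, note that since $\{\varphi_l\}$ is an orthonormal basis of $\Lt$, each component admits the Parseval expansion $X_j = \sum_l \ip{X_j}{\varphi_l}\varphi_l$ in $\Lt$ almost surely, so stacking components yields $X = \sum_l \theta_l \varphi_l$ in $\Ltp$ with $\theta_l$ as defined in (4), regardless of any separability assumption. The uncorrelatedness $\cov(\theta_l,\theta_{l'})=0$ for $l\ne l'$ is, entrywise, exactly the condition $\cov(\ip{X_j}{\varphi_l},\ip{X_k}{\varphi_{l'}})=0$ from (3). Moreover, the eigenexpansion $\mc{G}_{jj} = \sum_l \sigma_{ljj}\,\varphi_l\otimes\varphi_l$ with $\sigma_{ljj}=\var(\theta_{lj})>0$ and $\sum_l \sigma_{ljj} = E\|X_j\|^2<\infty$ is equivalent to the within-component case ($j=k$) of uncorrelatedness together with the integrability already assumed on $X$.

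The main obstacle I expect is the bookkeeping in $(2)\Leftrightarrow(3)$: one must verify that the reconstructed matrices $\Sigma_l$ are actually positive semidefinite and that $\sum_l \|\Sigma_l\|_{\mathrm{tr}}<\infty$ so that the series representing $\mc{G}$ converges in the appropriate operator norm. Both follow from the trace-class property of $\mc{G}$ and the identity $\operatorname{tr}(\Sigma_l) = \sum_j \sigma_{ljj}$, but care is required in justifying term-by-term manipulations of the tensor expansion.
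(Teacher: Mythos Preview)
Your proposal is correct and follows essentially the same approach as the paper. The only structural difference is that the paper closes the cycle with $(1)\Leftrightarrow(4)$ (invoking the multivariate Karhunen--Lo\`eve theorem and the identity $\sum_j e_{lj}e_{lj}\T = I$), whereas you link $(3)\Leftrightarrow(4)$ directly via the Parseval expansion $X_j = \sum_l \ip{X_j}{\varphi_l}\varphi_l$; your route is slightly more elementary but equivalent in substance.
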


As will be seen in Section~\ref{ss: ps_FGGM}, the matrices $\Sigma_l$ in point 2 of Theorem~\ref{thm: PSequiv} contain all of the necessary information to form the functional graphical model when $X$ is Gaussian and $\mc{G}$ is partially separable.  For clarity, when $\mc{G}$ is partially separable, the expansion in point 2 is assumed to be ordered according to decreasing values of $\trace\left(\Sigma_l\right).$  Property 3 reveals that the $\mc{G}_{jj}$ share common eigenfunctions and are thus simultaneously diagonalizable, with projections of any features onto different eigenfunctions being uncorrelated.  This is related to the concept known as \emph{coregionalization} \citep{bane:06} but is more general in that it allows for different eigenvalues for each $\mc{G}_{jj}.$  Consequently, one obtains the vector Karhunen-Lo\`eve type expansion in \eqref{eq: PS_KL}. If one truncates \eqref{eq: PS_KL} at $L$ components, the covariance matrix of the concatenated vector $(\theta_1\T,\ldots,\theta_L\T)\T$ is block diagonal, with the $p\times p$ matrices $\Sigma_l = \var(\theta_l)$ constituting the diagonal blocks.   Figure~\ref{fig:Cov_Structure} visualizes this covariance structure in comparison with that of \cite{qiao:19}, along with comparisons of the inverse covariance structure.  The latter comparison is the more striking and relevant one, since the model of \cite{qiao:19} possesses a  potentially full inverse covariance structure, whereas that under partial separability remains block diagonal.  As a consequence, the model of \ci{qiao:19} has $\mathcal{O}(L^2p^2)$ free parameters, while the corresponding model under partial separability has only $\mathcal{O}(Lp^2)$ free parameters. % needing estimation from $(Lp)^2$ to simply $Lp^2.$

\begin{figure}[t]
%\captionsetup[subfigure]{justification=centering}
%\captionsetup[subfigure]{justification=centering,singlelinecheck=false}
\centering
%%%\subcaptionbox{$diag(\Sigma_1, \dots, \Sigma_L)$ matrix}{\includegraphics[width=0.4\textwidth]{Cov_PS.png}}\label{fig:Cov_Structure:a}	
\begin{subfigure}[t]{0.23 \linewidth}
		%\centering\captionsetup{width=1.3\linewidth}%
		%\captionsetup{width=1.3\linewidth}%
		\includegraphics[width=0.9\linewidth]{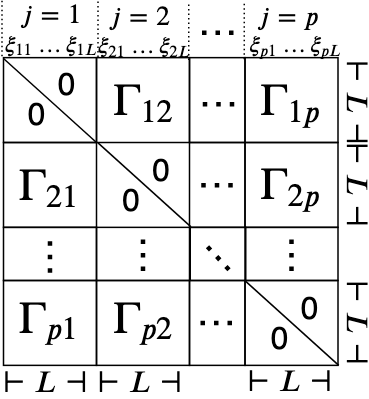}
		\caption{}
		%\centering \caption{\footnotesize Covariance matrix \\ $diag\big(\Sigma_1, \dots, \Sigma_L\big)$}
		\label{fig:Cov_Structure:a}		
\end{subfigure}
%\qquad
%%%\subcaptionbox{$\Gamma$ matrix}{\includegraphics[width=0.4\textwidth]{Cov_univKL.png}}\label{fig:Cov_Structure:b}
\hspace{-.4cm}
\begin{subfigure}[t]{0.23 \linewidth}
		\centering\captionsetup{width=1\linewidth}%
		%\captionsetup{width=1\linewidth}%
		\includegraphics[width=0.9\linewidth]{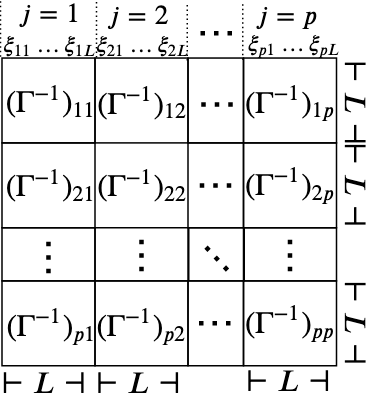}
		\caption{}
		%\caption{\footnotesize Covariance matrix \\ $\Gamma$}
		\label{fig:Cov_Structure:b}		
\end{subfigure}
%\par\medskip
%%%\subcaptionbox{$diag(\Omega_1, \dots, \Omega_L)$ matrix}{\includegraphics[width=0.4\textwidth]{Precision_PS.png}}\label{fig:Cov_Structure:c}
\hspace{0.2cm}
\begin{subfigure}[t]{0.23 \linewidth}
		\centering\captionsetup{width=1\linewidth}%
		\includegraphics[width=0.9\linewidth]{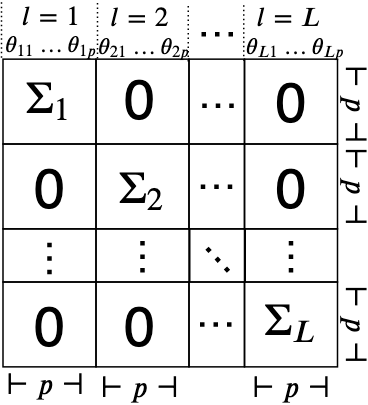}
		\caption{}
		%\caption{\footnotesize  Inverse cov. matrix \\ $diag\big((\Sigma_1)^{-1}, \dots, (\Sigma_L)^{-1}\big)$}
		\label{fig:Cov_Structure:c}		
\end{subfigure}
%\qquad
%\subcaptionbox{$\Gamma^{-1}$ matrix}{\includegraphics[width=0.4\textwidth]{Precision_univKL.png}}\label{fig:Cov_Structure:d}
\hspace{-.4cm}
\begin{subfigure}[t]{0.23 \linewidth}
		\centering\captionsetup{width=1\linewidth}%
		\includegraphics[width=0.9\linewidth]{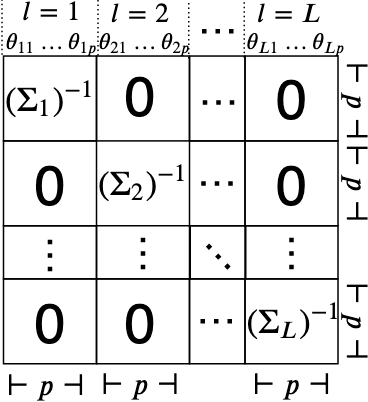}
		\caption{}
		\label{fig:Cov_Structure:d}		
\end{subfigure}

\caption{ Covariance structures of $\mathbb{R}^{Lp}$-valued random coefficients from different $L$-truncated  Karhunen-Lo\`eve type expansions. (a) and (b): covariance and precision matrices, respectively, of functional principal component coefficients $(\xi_1^T, \dots, \xi_p^T)^T$ in \eqref{eq: basisExp} as in \cite{qiao:19}.  (c) and (d): block diagonal covariance and precision matrices, respectively, of coefficients $(\theta_1\T,\ldots,\theta_L\T)\T$ under partial separability in \eqref{eq: PS_KL}.}
\label{fig:Cov_Structure}
\end{figure}

Lastly, we establish optimality and uniqueness properties for the basis $\{\varphi_l\}_{l = 1}^\infty$ of a partially separable $\mc{G}$.  A key object is the trace class covariance operator 
\begin{equation}
    \label{eq: H_cov}
    \mc{H} = \frac{1}{p}\sum_{j = 1}^p \mc{G}_{jj}.
\end{equation}
Let $\lambda_l = \lambda_l^\mc{H}$ ($l \in \mathbb{N})$ denote the eigenvalues of $\mc{H},$ in nonincreasing order.
\begin{theorem}
\label{thm: psOptimal} Suppose the eigenvalues of $\mc{H}$ in \eqref{eq: H_cov} have multiplicity one. 
\begin{enumerate}
    \item For any $L \in \mathbb{N},$ and for any orthonormal set $\{\tilde{\varphi}_l\}_{l = 1}^L$ in $L^2[0,1],$
$
\sum_{l = 1}^L \sum_{j = 1}^p \var(\ip{X_j}{\tilde{\varphi}_l}) \leq  \sum_{l = 1}^L \lambda_l,
$
with equality if and only if $\{\tilde{\varphi}\}_{l=1}^L$ span the first $L$ eigenspaces of $\mc{H}.$  
\item If $\mc{G}$ is partially separable with $\Lt$ basis $\{\varphi_l\}_{l = 1}^\infty,$ then 
\begin{equation}
    \label{eq: H_covPS}
    \mc{H} = \suml \lambda_l \varphi_l \otimes\varphi_l, \quad \lambda_l = \frac{1}{p}\trace(\Sigma_l).
\end{equation}
\end{enumerate}
\end{theorem}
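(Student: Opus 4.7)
The plan is to establish part 2 first, since identifying the spectral decomposition of $\mc{H}$ feeds naturally into the variational statement in part 1. To prove part 2, I would invoke Theorem~\ref{thm: PSequiv}, which equates partial separability with basis $\{\varphi_l\}$ to the representation $\mc{G} = \suml \Sigma_l\, \varphi_l \otimes \varphi_l$. Reading off the $(j,j)$ diagonal yields $\mc{G}_{jj} = \suml (\Sigma_l)_{jj}\, \varphi_l \otimes \varphi_l$ as an operator on $\Lt$, and averaging over $j = 1,\ldots,p$ gives
$$
\mc{H} = \suml \tfrac{1}{p}\trace(\Sigma_l)\,\varphi_l \otimes \varphi_l.
$$
Because each $\Sigma_l$ is a covariance matrix, the coefficients are nonnegative and summable (they add up to $\trace(\mc{H}) < \infty$), and together with the orthonormality of $\{\varphi_l\}$ this is a genuine spectral decomposition of $\mc{H}$. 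The ordering convention following Theorem~\ref{thm: PSequiv} arranges the coefficients in nonincreasing order, and under multiplicity one this forces $\lambda_l = \tfrac{1}{p}\trace(\Sigma_l)$ and identifies $\varphi_l$ as the $l$th eigenfunction (up to sign).

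For part 1, the key reduction is the identity
$$
\sum_{j=1}^p \var(\ip{X_j}{\tilde{\varphi}_l}) = \sum_{j=1}^p \ip{\mc{G}_{jj}\tilde{\varphi}_l}{\tilde{\varphi}_l} = p\,\ip{\mc{H}\tilde{\varphi}_l}{\tilde{\varphi}_l},
$$
which converts the statement into a bound on $\sum_{l=1}^L \ip{\mc{H}\tilde\varphi_l}{\tilde\varphi_l}$ for orthonormal sets of size $L$. This is exactly the Ky Fan maximum principle for the self-adjoint compact positive operator $\mc{H}$: the maximum over orthonormal $L$-tuples equals $\sum_{l=1}^L \lambda_l$, attained on the top $L$ eigenfunctions. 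A direct proof expands $\tilde\varphi_l = \sum_m c_{lm}\varphi_m$ in the eigenbasis from part 2 and rearranges summation order.

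For the equality case, set $a_m = \sum_{l=1}^L |c_{lm}|^2$; Parseval in $l$ gives $\sum_m a_m = L$, and extending $\{\tilde\varphi_l\}$ to a full orthonormal basis plus Parseval in $m$ gives $0 \leq a_m \leq 1$. Then
$$
\sum_{l=1}^L \ip{\mc{H}\tilde\varphi_l}{\tilde\varphi_l} = \sum_{m=1}^\infty \lambda_m a_m \leq \sum_{l=1}^L \lambda_l,
$$
with equality iff $a_m = 0$ for every $m > L$. Multiplicity one makes $\lambda_L > \lambda_{L+1}$, so this last condition is in turn equivalent to $\tilde\varphi_l \in \mathrm{span}\{\varphi_1,\ldots,\varphi_L\}$ for each $l$, i.e., the $\tilde\varphi_l$ span the first $L$ eigenspaces of $\mc{H}$. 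The main subtlety I anticipate is precisely this equality step: the inequality itself is immediate from Ky Fan once the identity above is recognized, but articulating that the $\tilde\varphi_l$ need not equal $\pm\varphi_l$ individually, only that their span agrees, requires the weight-pushing argument sketched here. Everything else is bookkeeping from Theorem~\ref{thm: PSequiv}.
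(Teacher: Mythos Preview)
Your argument is correct and matches the paper's approach: both reduce part~1 to the identity $\sum_{j}\var(\ip{X_j}{\tilde\varphi_l}) = p\,\ip{\mc{H}\tilde\varphi_l}{\tilde\varphi_l}$ and then invoke the Ky Fan maximum principle for the compact self-adjoint operator $\mc{H}$, while part~2 is read off directly from the representation $\mc{G} = \suml \Sigma_l\,\varphi_l\otimes\varphi_l$ of Theorem~\ref{thm: PSequiv}. Your treatment of the equality case via the weights $a_m = \sum_{l\leq L}|c_{lm}|^2$ is in fact more explicit than the paper's, which simply asserts the conclusion.

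One organizational point: your framing ``establish part~2 first, since identifying the spectral decomposition of $\mc{H}$ feeds naturally into part~1'' is slightly misleading. Part~1 does \emph{not} assume partial separability, so you cannot appeal to ``the eigenbasis from part~2'' in its proof. What you actually need there is just the eigenbasis of $\mc{H}$, which exists by the spectral theorem for compact self-adjoint operators regardless of any separability structure; part~2 then identifies that eigenbasis with the partial separability basis \emph{when} $\mc{G}$ is partially separable. The paper accordingly proves part~1 first, independently of part~2. This is a labeling issue rather than a mathematical gap, since your Ky Fan argument only uses the spectral decomposition of $\mc{H}$, but the write-up should make clear that part~1 stands on its own.
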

Part 1 states that, independent of partial separability, the eigenbasis of $\mc{H}$ is optimal in terms of retaining the greatest amount of total variability in vectors of the form $(\ip{X_1}{\tilde{\varphi}_l},\ldots,\ip{X_p}{\tilde{\varphi}_l})\T,$ subject to orthogonality constraints.  Part 2 indicates that, if $\mc{G}$ is partially separable, the unique basis of $\Lt$ that makes Definition~\ref{def: partial} hold corresponds to this optimal basis. 

\subsection{Consequences for Functional Gaussian Graphical Models}
\label{ss: ps_FGGM}

Assume that $\mc{G}$ is partially separable according to Definition~\ref{def: partial}, so that the partially separable Karhunen-Lo\`eve expansion in \eqref{eq: PS_KL} holds. If we further assume that $X$ is Gaussian, then $\theta_l \sim \mc{N}(0, \Sigma_l),$ $l \in \mathbb{N}$, are independent, where $\Sigma_l$ is positive definite for each $l.$ These facts follow from Theorem~\ref{thm: PSequiv}.  Recall that, in order to define a coherent functional Gaussian graphical model, one needs that the conditional covariance functions $C_{jk}$ in \eqref{eq: cond_cov} between component functions $X_j$ and $X_k$ be well-defined.  The expansion in \eqref{eq: PS_KL} facilitates a simple connection between the $C_{jk}$ and the inverse covariance matrices $\Omega_l = \Sigma_l\inv,$ as follows.  Let $\Sigma_{l} = (\sigma_{ljk})_{j,k=1}^p.$ For any fixed $j, k \in V,$ define the partial covariance between $\theta_{lj}$ and $\theta_{lk}$ as 
\begin{equation}
\label{eq: cond_cov_theta}
\tilde{\sigma}_{ljk} = \sigma_{ljk} - \cov\{\theta_{lj}, \theta_{l, -(j,k)}\}\var\{\theta_{l,-(j,k)}\}\inv \cov\{\theta_{l, -(j,k)}, \theta_{lk}\}.
\end{equation}
It is well-known that these partial covariances are directly related to the precision matrix $\Omega_l = (\omega_{ljk})_{j,k=1}^p$ by 
$
\tilde{\sigma}_{ljk} = -\omega_{ljk}/ (\omega_{ljj}\omega_{lkk} - \omega_{ljk}^2),
$
so that $\tilde{\sigma}_{ljk} = 0$ if and only if $\omega_{ljk} = 0.$  The next result establishes that the conditional covariance functions $C_{jk}$ can be expanded in the partial separability basis $\{\varphi_l\}_{l=1}^\infty$ with coefficients $\tilde{\sigma}_{ljk}$.
\begin{theorem}
\label{thm: ps-FGGM}
If $\mc{G}$ is partially separable, then the cross-covariance kernel between $X_j$ and $X_k$, conditional on the multivariate subprocess $\{X_{-(j,k)}(u):\, u \in [0,1]\},$ is
\begin{equation}
\label{eq: cond_cov_ps}
C_{jk}(s,t) = \sum_{l = 1}^\infty \tilde{\sigma}_{ljk}\varphi_l(s)\varphi_l(t) \quad (j,k \in V,\, j \neq k,\, s,t \in [0,1]).
\end{equation}
\end{theorem}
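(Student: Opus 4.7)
The plan is to leverage the partially separable Karhunen-Lo\`eve expansion from Theorem~\ref{thm: PSequiv} to reduce the conditional covariance calculation on the function space to a coordinate-wise Gaussian computation, and then reassemble the answer through the basis $\{\varphi_l\}$. Concretely, under partial separability we have $X_j(s) = \suml \theta_{lj}\varphi_l(s)$ and $X_k(t) = \suml \theta_{lk}\varphi_l(t)$, with the random vectors $\{\theta_l\}_{l \ge 1}$ being jointly Gaussian and \emph{mutually independent} across $l$ (the Gaussian upgrade from uncorrelatedness in point 4 of Theorem~\ref{thm: PSequiv}, using that $\theta_l \sim \mc{N}(0,\Sigma_l)$).

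First I would verify that conditioning on the subprocess $\{X_{-(j,k)}(u) : u \in [0,1]\}$ is equivalent to conditioning on the countable family of coordinate vectors $\{\theta_{l,-(j,k)}\}_{l \ge 1}$. Since $\{\varphi_l\}$ is an orthonormal basis of $\Lt$, the coefficients are measurable functions of the sample path and, conversely, the sample path of $X_{-(j,k)}$ is reconstructed (almost surely, in $(\Lt)^{p-2}$) from the $\theta_{l,-(j,k)}$. Hence the two generated $\sigma$-algebras coincide, and I may condition on the coordinates.

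Next I would exploit independence of $\{\theta_l\}$ across $l$. For $l \neq l'$, the pair $(\theta_{lj},\theta_{l'k})$ is independent of $\theta_{l,-(j,k)}$ and $\theta_{l',-(j,k)}$ jointly with the other $\theta_m$'s (via between-block independence), giving
\begin{equation*}
\cov(\theta_{lj},\theta_{l'k} \mid X_{-(j,k)}) = 0, \qquad l \neq l'.
\end{equation*}
For $l = l'$, the conditional distribution of $\theta_{lj}$ and $\theta_{lk}$ given $X_{-(j,k)}$ reduces to the conditional distribution given only $\theta_{l,-(j,k)}$, again by independence across blocks. This is a finite-dimensional Gaussian conditioning, whose covariance is the Schur complement $\tilde{\sigma}_{ljk}$ defined in \eqref{eq: cond_cov_theta} (using that $\Sigma_l$ is positive definite, so $\var(\theta_{l,-(j,k)})$ is invertible).

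Finally I would assemble the expansion: for fixed $(s,t)$,
\begin{equation*}
C_{jk}(s,t) = \cov\!\left(\suml \theta_{lj}\varphi_l(s),\, \sum_{l'=1}^\infty \theta_{l'k}\varphi_{l'}(t) \,\Big|\, X_{-(j,k)}\right) = \suml \tilde{\sigma}_{ljk}\varphi_l(s)\varphi_l(t),
\end{equation*}
where the cross-terms in the double sum vanish by the previous step. The step I expect to require the most care is justifying the interchange of the infinite sums with the conditional covariance. I would handle this by truncating at level $L$, invoking the finite-$L$ identity (which is a direct Gaussian computation), and then passing to the limit using mean-square convergence of the KL expansion together with the $L^2$-contractivity of conditional expectation; the summability $\suml |\tilde{\sigma}_{ljk}| \le \suml \trace(\Sigma_l) < \infty$ (ensured by $\mc{G}$ being trace class via Theorem~\ref{thm: psOptimal}) guarantees absolute convergence of the limiting series.
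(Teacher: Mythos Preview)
Your proposal is correct and follows essentially the same route as the paper's own proof: expand $X_j$ and $X_k$ in the partially separable Karhunen--Lo\`eve basis, use independence of the $\theta_l$ across $l$ (together with the identification of $\sigma(X_{-(j,k)})$ with $\sigma(\{\theta_{l,-(j,k)}\}_l)$) to kill the cross terms and reduce the diagonal ones to the finite-dimensional Gaussian partial covariances $\tilde{\sigma}_{ljk}$, then invoke summability for convergence. The paper's argument is terser---it interchanges the sums with the conditional covariance without comment and cites only $\suml \tilde{\sigma}_{ljk}^2 \le \suml \sigma_{ljj}\sigma_{lkk} < \infty$ for convergence---so your added care about the $\sigma$-algebra equivalence and the truncation/limit justification is welcome, but the underlying approach is identical.
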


Now, the conditional independence graph for the multivariate Gaussian process can be defined by $(j, k) \notin E$ if and only if $C_{jk}(s, t) \equiv 0$.  Due to the above result, the edge set $E$ is connected to the sequence of edge sets $\{E_l\}_{l=1}^\infty,$ for which $(j, k) \notin E_l$ if and only if $\tilde{\sigma}_{ljk} = \omega_{ljk} = 0,$ corresponding to the sequence of Gaussian graphical models $(V,E_l)$ for each $\theta_l.$  

\begin{corollary}
\label{cor: PSedges}
Under the setting of Theorem~\ref{thm: ps-FGGM}, the functional graph edge set $E$ is related to the sequence of edge sets $E_l$ by $E = \bigcup_{l = 1}^\infty E_l.$
\end{corollary}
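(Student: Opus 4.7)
The plan is to unwind the definitions of $E$ and $E_l$ and then invoke the expansion \eqref{eq: cond_cov_ps} from Theorem~\ref{thm: ps-FGGM} to translate the condition on the conditional cross-covariance function $C_{jk}$ into conditions on its scalar coefficients $\tilde{\sigma}_{ljk}$. Writing out the set equality, $E = \bigcup_{l = 1}^\infty E_l$ is equivalent to the biconditional that for every pair $(j,k)$ with $j \neq k$, $C_{jk}(s,t) \equiv 0$ on $[0,1]^2$ if and only if $\tilde{\sigma}_{ljk} = 0$ for all $l \geq 1$.

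The ``if'' direction is immediate: if $\tilde{\sigma}_{ljk} = 0$ for every $l$, then each summand in \eqref{eq: cond_cov_ps} vanishes and hence $C_{jk} \equiv 0$. The essential step is the converse. For this, the key observation is that the system $\{\varphi_l \otimes \varphi_l\}_{l = 1}^\infty$ is orthonormal in $L^2([0,1]^2)$, since separability of the tensor and orthonormality of $\{\varphi_l\}$ in $L^2[0,1]$ yield
\begin{equation*}
\int_0^1 \int_0^1 \varphi_l(s)\varphi_l(t)\varphi_{l'}(s)\varphi_{l'}(t)\,ds\,dt = \left(\int_0^1 \varphi_l(s)\varphi_{l'}(s)\,ds\right)^2 = \delta_{ll'}.
\end{equation*}
The trace class structure of the $\Sigma_l$ in Theorem~\ref{thm: PSequiv} and the formula \eqref{eq: cond_cov_theta} ensure that $\{\tilde{\sigma}_{ljk}\} \in \ell^2$, so the series in \eqref{eq: cond_cov_ps} converges in $L^2([0,1]^2)$. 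Then, if $C_{jk} \equiv 0$, taking $L^2$ inner product with $\varphi_{l'} \otimes \varphi_{l'}$ against both sides of \eqref{eq: cond_cov_ps} extracts $\tilde{\sigma}_{l'jk} = 0$ for every $l'$. Equivalently, one may argue operator-theoretically: the integral operator on $L^2[0,1]$ with kernel $C_{jk}$ is the zero operator, so applying it to $\varphi_{l'}$ gives $\tilde{\sigma}_{l'jk}\varphi_{l'} = 0$, forcing $\tilde{\sigma}_{l'jk} = 0$.

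Finally, I would close the argument by recalling the well-known identity, stated in the excerpt preceding Theorem~\ref{thm: ps-FGGM}, that $\tilde{\sigma}_{ljk} = 0$ if and only if $\omega_{ljk} = 0$, i.e.\ $(j,k) \notin E_l$. Combined with the biconditional above, $(j,k) \notin E$ if and only if $(j,k) \notin E_l$ for every $l$, which is precisely the complementary formulation of $E = \bigcup_{l = 1}^\infty E_l$. The only mild technical obstacle is the justification of the coefficient-extraction step, i.e.\ ensuring that the expansion in Theorem~\ref{thm: ps-FGGM} converges in a strong enough sense to permit inner-product extraction; once that is handled via the orthonormality of $\{\varphi_l \otimes \varphi_l\}$ in $L^2([0,1]^2)$, the remainder of the proof is purely definitional.
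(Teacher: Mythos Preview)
Your proposal is correct and follows essentially the same approach as the paper, which dispatches the corollary in a single sentence by noting that $(j,k) \notin E$ iff $\tilde{\sigma}_{ljk} = 0$ for all $l$ iff $(j,k) \notin E_l$ for all $l$. You have simply filled in the justification for the middle biconditional via the orthonormality of $\{\varphi_l \otimes \varphi_l\}$ in $L^2([0,1]^2)$, which the paper treats as immediate.
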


We have thus established that, under partial separability, the problem of functional graphical model estimation can be simplified to estimation of a sequence of decoupled graphical models.  When partial separability fails, the edge sets remain meaningful. Recall from Theorem~\ref{thm: psOptimal} that the eigenbasis of $\mc{H}$ is optimal in a sense independent of partial separability, so that the vectors $\theta_l = (\ip{X_1}{\varphi_l},\ldots,\ip{X_p}{\varphi_l})\T$ are still the coefficients of $X$ in an optimal expansion. Although one loses a direct connection between the $E_l$ and the edge set of the functional graph, each $E_l$ remains the edge set of the Gaussian graphical model for the coefficient vector $\theta_l$ in this optimal expansion.  Moreover, the equivalence $E = \bigcup_{l = 1}^\infty E_l$ may hold independent of partial separability.  For instance, Proposition~\ref{prop: markov} in Section~\ref{s-sec: notions} of the Appendix gives sufficient conditions, based on a Markov-type property and a edge coherence assumption, under which the equivalence holds.

\section{Graph Estimation and Theory}
\label{sec: graphEst}

\subsection{Joint Graphical Lasso Estimator}

Consider a $p$-variate process $X,$ with means $\mu_j(t) = E\{X_j(t)\}$ and covariance operator $\mc{G}$. Let $\{\varphi_l\}_{l = 1}^\infty$ be an orthonormal eigenbasis of $\mc{H}$ in \eqref{eq: H_cov}, and set $\theta_{lj} = \ip{X_j}{\varphi_l},$ $\Sigma_l = \var(\theta_l)$. The targets are the edge sets $E_l,$ where $(j,k) \in E_l$ if and only if $(\Sigma_l\inv)_{jk} \neq 0$, as motivated by the developments of Section~\ref{ss: ps_FGGM}.  Specifically, when $X$ is Gaussian and $\mc{G}$ is partially separable, the conditional independence graph of $X$ has edge set $E = \bigcup_{l = 1}^\infty E_l.$  When partial separability fails, these targets still provide useful information about the conditional independencies of $X$ when projected onto the eigenbasis of $\mc{H},$ which is optimal in the sense of Theorem~\ref{thm: psOptimal}.  Furthermore, when $X$ is not Gaussian, rather than representing conditional independence, the $E_l$ represent the sparsity structure of the partial correlations of $\theta_l,$ which may still be of interest.
By Theorem~\ref{thm: PSequiv}, $\trace(\Sigma_l) = \lambda_l \downarrow 0$ as $l \rightarrow \infty$. As a practical consideration, this makes estimators of $\Sigma_l$ progressively more unstable to work with as $l$ increases. To avoid this, we work with $\Xi_l = R_l\inv,$ where $R_l$ is the correlation matrix corresponding to $\Sigma_l.$  $\Xi_l$ and $\Omega_l$ share the same edge information as entries in these two matrices are either zero or nonzero simultaneously.

We first define the estimation procedure with targets $\Xi_l,$  from a random sample $X_1,\ldots,X_n$, each distributed as $X.$ $X$ is not required to be Gaussian, nor $\mc{G}$ to be partially separable, in developing the theoretical properties of the estimators, which also allow the dimension $p$ to diverge with $n.$  In order to make these methods applicable to any functional data set, it is assumed that preliminary mean and covariance estimates $\hat{\mu}_j$ and $\hat{\mc{G}}_{jk},$ $j,k = 1,\ldots,p,$ have been computed for each component.  As an example, if the $X_i$ are fully observed, cross-sectional estimates 
\begin{equation}
\label{eq: mean_cov_fo}
\hat{\mu}_j = \frac{1}{n}\sum_{i = 1}^n X_{ij}, \quad \hat{\mc{G}}_{jk} = \frac{1}{n}\sum_{i = 1}^n (X_{ij} - \hat{\mu}_j) \otimes (X_{ik} - \hat{\mu}_k), 
\end{equation}
can be used. For practical observational designs, smoothing can be applied to the pooled data to estimate these quantities \cp{mull:05:4, mull:11:2}.  Given such preliminary estimates, the estimate of $\mc{H}$ is
$
\hat{\mc{H}} = p\inv\sum_{j = 1}^p \hat{\mc{G}}_{jj},
$
leading to empirical eigenfunctions $\hat{\varphi}_l,$ which are uniquely defined only up to a sign and for $1 \leq l \leq np = \mathrm{rank}(\mc{H}).$  These quantities produce estimates of $\sigma_{ljk} = \ip{\mc{G}_{jk}(\varphi_l)}{\varphi_l}$ by plugin as
\begin{equation}
\label{eq: Sigma_est}
%s_{ljk} = \left(S_l\right)_{jk} = \widehat{\cov}(\ip{X_{1j}}{\varphi_l}, \ip{X_{1k}}{\varphi_l}) = \int_{[0,1]^2} \hat{G}_{jk}(s,t)\hat{\varphi}_l(s)\hat{\varphi}_l(t)\,\d s \d t. %notation fixed for Biometrika
s_{ljk} = \left(S_l\right)_{jk}  %\hat{\cov}(\ip{X_{1j}}{\varphi_l}, \ip{X_{1k}}{\varphi_l}) 
= \ip{\hat{\mc{G}}_{jk}(\hat{\varphi}_l)}{\hat{\varphi}_l}.
%\int_{[0,1]^2} \hat{G}_{jk}(s,t)\hat{\varphi}_l(s)\hat{\varphi}_l(t)\,\d s \d t, %notation fixed for Biometrika for \widehat{\cov}
\end{equation}

A group graphical lasso approach \cp{dana:14} will be used to estimate the $\Xi_l.$  Let $(\hat{R}_l)_{jk} = \hat{r}_{ljk} = s_{ljk}/[s_{ljj}s_{lkk}]^{1/2}$ be the estimated correlations. The estimates target the first $L \leq np$ inverse correlation matrices $\Xi_l$ by
\begin{equation}
\label{eq: Xi_est}
(\hat{\Xi}_1,\ldots,\hat{\Xi}_L) = \arg \min_{\Up_l \succ 0, \Up_l = \Up_l\T} \sum_{l = 1}^L\left\{\trace (\hat{R}_l \Up_l) - \log(|\Up_l|)\right\} + P(\Up_1,\ldots,\Up_L),
\end{equation}
In the Gaussian case, these are penalized likelihood estimators with penalty
\begin{equation}
\label{eq: penalty}
P(\Up_1,\ldots,\Up_L) = \gamma\left\{\alpha \sum_{l = 1}^L\sum_{j \neq k} |\upsilon_{ljk}| + (1-\alpha)\sum_{j \neq k}\left(\sum_{l = 1}^L \upsilon_{ljk}^2\right)^{1/2}\right\}, \quad (\Up_l)_{jk} = \upsilon_{ljk}.
\end{equation}
The parameter $\gamma > 0$ controls the overall penalty level, while $\alpha \in [0,1]$ distributes the penalty between the two penalty terms.  Then the estimated edge set is $(j, k) \in \hat{E}_l$ if and only if $\hat{\Xi}_{ljk} \neq 0.$ The joint graphical lasso was chosen to borrow structural information across multiple bases instead of multiple classes as was done in \ci{dana:14}.  If $\alpha = 1$, the first penalty will encourage sparsity in each $\hat{\Xi}_l$ and the corresponding edge set $\hat{E}_l,$ but the overall estimate $\hat{E} = \bigcup_{l = 1}^L \hat{E}_l$ may not be sparse.  While consistent graph recovery is still possible with $\alpha = 1$ as demonstrated below in Theorem~\ref{thm: selection}, the influence of the second penalty term when $\alpha < 1$ ensures that the overall graph estimate is sparse, enhancing interpretation.

In practice, tuning parameters $\gamma$ and $\alpha$ can be chosen with cross-validation to minimize \eqref{eq: Xi_est} for out-of-sample data. Specifically, the procedure would select $\gamma$ and $\alpha$ that minimize the average of \eqref{eq: Xi_est} evaluated over each fold, where $\Up_1,\dots,\Up_L$ are computed with the training set and $\hat R_l$ are from the validation set. Another practically useful, and less computationally intensive, approach is to choose these parameters to yield a desired sparsity level of the estimated graph \citep{qiao:19}.  This latter approach is implemented in the data example of Section~\ref{sec: app}.

\subsection{Asymptotic Properties}
\label{ss: asym}

The goal of the current section is to provide lower bounds on the sample size $n$ so that, with high probability, $\hat{E}_l = E_l$ ($l = 1,\ldots,L)$.  The approach follows that of \cite{ravikumar2011high}, adapting the results to the case of functional graphical model estimation in which multiple graphs are estimated simultaneously.  For simplicity, and to facilitate comparisons with the asymptotic properties of \cite{qiao:19}, the results are derived under the setting of fully observed functional data, so that $\hat{\mu}$ and $\hat{\mc{G}}_{jk}$ are as in \eqref{eq: mean_cov_fo}.  As a preliminary result, we first derive a concentration inequality for the estimated covariances $s_{ljk}$ in \eqref{eq: Sigma_est}, requiring the following mild assumptions.
\begin{assumption}
\label{asm: lambda}
The eigenvalues $\lambda_l$ of $\mc{H}$ have multiplicity one, and are thus strictly decreasing.  
\end{assumption}
\begin{assumption}
\label{asm: subGauss}
There exists $\varsigma^2 > 0$ such that $E\left(e^{s\theta_{lj}}\right) \leq e^{s^2\varsigma^2\sigma_{ljj}/2}$ for all $l \in \mathbb{N}$, $j\in V,$ and $s \in \R$; that is, the standardized scores $\theta_{lj}/\sigma_{ljj}^{1/2}$ are sub-Gaussian random variables with parameter $\varsigma^2.$ Furthermore, there is $M$ independent of $p$ such that $\sup_{j \in V} \sum_{l = 1}^\infty \sigma_{ljj} < M < \infty.$
\end{assumption}
Assumption 2 can be relaxed to accommodate eigenvalues with multiplicity greater than 1, at the cost of an increased notational burden.  The eigenvalue spacings play a key role through the quantities $\tau_1 = 2\sqrt{2}(\lambda_1 - \lambda_2)\inv$ and $\tau_l = 2\sqrt{2}\max\left\{(\lambda_{l-1} - \lambda_l)\inv, (\lambda_{l} - \lambda_{l+1})\inv\right\},$ for $l \geq 2.$  Assumption~\ref{asm: subGauss} clearly holds in the Gaussian case, and can be relaxed to accommodate different parameters $\varsigma^2_l$ for each $l,$ though for simplicity these are assumed uniform.
\begin{theorem}
\label{thm: concIneq}
Suppose that Assumptions~\ref{asm: lambda} and \ref{asm: subGauss} hold.  Then there exist constants $C_1, C_2, C_3 > 0$ such that, for any $0 < \delta \leq C_3$ and for all $l \leq np$ and $j,k \in V,$
\begin{equation}
\label{eq: concIneq}
\pr\left(|s_{ljk} - \sigma_{ljk}| \geq \delta\right) \leq C_2\exp\left(-C_1 \tau_l^{-2} n \delta^2\right).
\end{equation}
\end{theorem}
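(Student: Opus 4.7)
The plan is to decompose $s_{ljk} - \sigma_{ljk}$ into an oracle piece involving the true eigenfunction $\varphi_l$ and a remainder piece driven by the eigenfunction perturbation $\hat{\varphi}_l - \varphi_l$, then handle each with a different concentration tool. Specifically, I would write
$$
s_{ljk} - \sigma_{ljk} = \underbrace{\ip{\hat{\mc{G}}_{jk}(\varphi_l)}{\varphi_l} - \sigma_{ljk}}_{T_1} + \underbrace{\ip{\hat{\mc{G}}_{jk}(\hat{\varphi}_l)}{\hat{\varphi}_l} - \ip{\hat{\mc{G}}_{jk}(\varphi_l)}{\varphi_l}}_{T_2},
$$
where $T_2$ is further split as $\ip{\hat{\mc{G}}_{jk}(\hat{\varphi}_l - \varphi_l)}{\hat{\varphi}_l} + \ip{\hat{\mc{G}}_{jk}(\varphi_l)}{\hat{\varphi}_l - \varphi_l}$. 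The target exponent $-C_1 \tau_l^{-2} n \delta^2$ suggests that the oracle term $T_1$ should already satisfy a bound with no $\tau_l$ dependence, while the entire $\tau_l$ factor enters through the perturbation piece $T_2$ via Davis-Kahan.

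For $T_1$, note that $\ip{\hat{\mc{G}}_{jk}(\varphi_l)}{\varphi_l} = n^{-1}\sum_{i=1}^n (\theta_{ilj} - \bar\theta_{lj})(\theta_{ilk} - \bar\theta_{lk})$ where $\theta_{ilj} = \ip{X_{ij}}{\varphi_l}$. By Assumption~\ref{asm: subGauss} these scores are sub-Gaussian with parameter proportional to $\sigma_{ljj}^{1/2}$, so their products are sub-exponential. A standard Bernstein inequality for i.i.d.\ sub-exponential sums (plus a small correction for the sample-mean centering) delivers $\pr(|T_1| \geq \delta/2) \leq 2\exp(-c n \delta^2)$ for $\delta$ below an absolute constant, using that $\sigma_{ljj}$ is bounded uniformly in $l$ and $j$ by Assumption~\ref{asm: subGauss}.

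For $T_2$, I would first bound $|T_2| \leq 2 \VERT \hat{\mc{G}}_{jk} \VERT_{\mathrm{op}} \norm{\hat{\varphi}_l - \varphi_l}$. The eigenfunction perturbation is controlled by the Davis--Kahan $\sin\Theta$ theorem applied to $\hat{\mc{H}}$ versus $\mc{H}$: with the appropriate sign convention, $\norm{\hat{\varphi}_l - \varphi_l} \leq \tau_l \VERT \hat{\mc{H}} - \mc{H} \VERT_{\mathrm{op}}$, the very definition of $\tau_l$ in the statement being tailored to this bound. Both $\VERT \hat{\mc{G}}_{jk} \VERT_{\mathrm{op}}$ and $\VERT \hat{\mc{H}} - \mc{H} \VERT_{\mathrm{op}}$ are then bounded using operator/Hilbert--Schmidt concentration for sample covariance operators of sub-Gaussian Hilbert-valued variables: $\VERT \hat{\mc{G}}_{jk} \VERT_{\mathrm{op}}$ stays within a constant of $\VERT \mc{G}_{jk} \VERT_{\mathrm{op}} \leq M$ with high probability, and $\VERT \hat{\mc{H}} - \mc{H} \VERT_{\mathrm{op}} \leq c\sqrt{\log(1/\delta)/n}$ with sub-Gaussian tails of the form $\exp(-c n \delta^2)$ for small $\delta$. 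Substituting gives $|T_2| \lesssim \tau_l \sqrt{\log(1/\delta)/n}$ on the good event, so requiring $|T_2| \leq \delta/2$ costs an event of probability at most $C\exp(-c n \delta^2 / \tau_l^2)$.

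The main obstacle is keeping the constants independent of $p$, $l$, $j$, $k$ while obtaining the exact $\tau_l^{-2}$ scaling in the exponent. The uniform-in-$l$ sub-Gaussianity from Assumption~\ref{asm: subGauss} together with the uniform bound $\sup_j \sum_l \sigma_{ljj} < M$ ensures that the tail constants from Bernstein and from the operator-norm concentration inequalities are absolute, so all $l$-dependence is channelled through the single factor $\tau_l$ inherited from Davis--Kahan. The restriction $\delta \leq C_3$ comes from the sub-exponential regime of Bernstein's inequality and from linearising the eigenfunction perturbation (valid only when $\VERT \hat{\mc{H}} - \mc{H} \VERT_{\mathrm{op}}$ is smaller than the relevant spectral gap), which is precisely where an unadorned $\delta$ rather than $\delta / \tau_l$ can be maintained. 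Combining the two tail bounds via a union bound yields \eqref{eq: concIneq} with the claimed form.
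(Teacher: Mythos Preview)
Your approach is correct and very close in spirit to the paper's proof: both rely on splitting $s_{ljk}-\sigma_{ljk}$ into an ``oracle'' covariance-deviation piece and an eigenfunction-perturbation piece, controlling the latter via the Bosq/Davis--Kahan inequality $\norm{\hat\varphi_l-\varphi_l}\le \tau_l\normHS{\hat{\mc H}-\mc H}$, and handling operator deviations by a Bernstein-type tail bound for Hilbert-valued sums under Assumption~\ref{asm: subGauss}. The one substantive difference is the placement of the empirical versus true cross-covariance operator. The paper writes
\[
|s_{ljk}-\sigma_{ljk}|\le |\ip{\mc G_{jk}(\varphi_l)}{\varphi_l-\hat\varphi_l}|+|\ip{\mc G_{jk}(\varphi_l-\hat\varphi_l)}{\hat\varphi_l}|+|\ip{(\mc G_{jk}-\hat{\mc G}_{jk})(\hat\varphi_l)}{\hat\varphi_l}|,
\]
so that the \emph{deterministic} $\mc G_{jk}$ (with operator norm bounded by $M$ via Assumption~\ref{asm: subGauss}) multiplies the perturbation $\hat\varphi_l-\varphi_l$, yielding directly $|s_{ljk}-\sigma_{ljk}|\le 2M\tau_l\normHS{\hat{\mc H}-\mc H}+\normHS{\hat{\mc G}_{jk}-\mc G_{jk}}$. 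Your decomposition instead places the \emph{random} $\hat{\mc G}_{jk}$ in $T_2$, which forces the extra step of controlling $\lVert\hat{\mc G}_{jk}\rVert_{\mathrm{op}}$ on a high-probability event before the Davis--Kahan bound can be cashed in. This is harmless but slightly less clean. Conversely, your $T_1$ is a scalar i.i.d.\ average and admits a plain sub-exponential Bernstein bound, whereas the paper's corresponding term $\ip{(\hat{\mc G}_{jk}-\mc G_{jk})(\hat\varphi_l)}{\hat\varphi_l}$ is bounded by $\normHS{\hat{\mc G}_{jk}-\mc G_{jk}}$ and handled through the Hilbert-space Bernstein inequality (Theorem~2.5 of Bosq), which is the paper's Lemma~\ref{lma: cov_concIneq}. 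The two routes lead to the same bound with the same $\tau_l^{-2}$ exponent; the paper's is marginally tidier, yours marginally more elementary on the oracle side.
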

Concentrations inequalities such as \eqref{eq: concIneq} are generally required in penalized estimation problems where the dimension diverges to infinity.  For the current problem, even if the dimension $p$ of the process remains fixed, the dimension still diverges since one requires $L$ to diverge with $n.$  Furthermore, in contrast to standard multivariate scenarios, the bound in Theorem~\ref{thm: concIneq} contains the additional factor $\tau_l^{-2}$. Since $\lambda_l \downarrow 0,$ $\tau_l$ diverges to infinity with $l,$ so that \eqref{eq: concIneq} reflects the increased difficulty of estimating covariances corresponding to eigenfunctions with smaller eigenvalue gaps. 

\begin{remark}
\label{rmk: qiaoComp}
A similar result to Theorem~\ref{thm: concIneq} was obtained by \ci{qiao:19} under a specific eigenvalue decay rate and truncation parameter scheme. Imposing similar assumptions on the eigenvalues of $\mc{H},$ we have $\tau_l = O(l^{1 + \beta_1})$ for some $\beta_1 > 1$, so that for any $0 < \beta_2 < 1/(4\beta_1)$ and $L = n^{\beta_2},$ \eqref{eq: concIneq} implies
$$
\max_{l = 1,\ldots,L} \max_{j,k \in V} \pr\left(|s_{ljk} - \sigma_{ljk}| \geq \delta\right) \leq C_2\exp\{-C_1 n^{1 - 2\beta_2(1 + \beta_1)} \delta^2\},
$$
matching the rate of \ci{qiao:19}.  In addition to establishing the concentration inequality for a general eigenvalue decay rate, our proof is greatly simplified by using the inequality
\begin{equation}
\label{eq: bosqIneq}
|s_{ljk} - \sigma_{ljk}| \leq 2\tau_l\lVert \mc{G}_{jk}\rVert_{\mathrm{HS}} \lVert \hat{\mc{H}} - \mc{H}\rVert_{\mathrm{HS}} + \lVert \hat{\mc{G}}_{jk} - \mc{G}_{jk} \rVert_{\mathrm{HS}},
\end{equation}
where $\lVert \cdot \rVert_{\mathrm{HS}}$ is the Hilbert-Schmidt operator norm. 
\end{remark}
\begin{remark}
\label{rmk: jirak}
The bound in \eqref{eq: bosqIneq} utilizes a basic eigenfunction inequality found, for example, in Lemma 4.3 of \cite{bosq:00}; see also \cite{bhatia1983perturbation}.  However, using expansions instead of geometric inequalities, \cite{jirak2016optimal} and other authors cited therein established stonger results for differences between true and estimated eigenfunctions in the form of limiting distributions and moment bounds.  Thus, it is likely that the bound in \eqref{eq: concIneq} is suboptimal, although improvements along the lines of \cite{jirak2016optimal} would require further challenging work in order to establish the required exponential tail bounds. 
\end{remark}
As the objective \eqref{eq: Xi_est} utilizes the correlations $\hat{r}_{ljk},$ the following corollary is needed.
\begin{corollary}
\label{cor: concIneqCor}
Under the assumptions of Theorem~\ref{thm: concIneq}, there exists constants $D_1, D_2, D_3 > 0$ such that, for any $0 < \delta \leq D_3$ and for all $l\leq np$ and $j,k\in V,$
\begin{equation}
    \label{eq: concIneqCor}
    \pr\left(|\hat{r}_{ljk} - r_{ljk}| \geq \delta\right) \leq D_2 \exp\left(-D_1 n m_l^2\delta^2\right), \quad m_l = \tau_l\inv\pi_l, \, \pi_l = \min_{j \in V} \sigma_{ljj}.
\end{equation}
\end{corollary}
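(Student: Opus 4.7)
The plan is to derive a deterministic bound that expresses $|\hat{r}_{ljk} - r_{ljk}|$ in terms of the three covariance deviations $|s_{ljk} - \sigma_{ljk}|$, $|s_{ljj} - \sigma_{ljj}|$, and $|s_{lkk} - \sigma_{lkk}|$, and then invoke Theorem~\ref{thm: concIneq} together with a union bound to convert that deterministic bound into the desired exponential concentration inequality.

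First I would decompose
\begin{equation*}
\hat{r}_{ljk} - r_{ljk} = \frac{s_{ljk} - \sigma_{ljk}}{\sqrt{s_{ljj}\, s_{lkk}}} \;+\; \sigma_{ljk}\left(\frac{1}{\sqrt{s_{ljj}\, s_{lkk}}} - \frac{1}{\sqrt{\sigma_{ljj}\,\sigma_{lkk}}}\right),
\end{equation*}
which separates numerator error from denominator error. Using the identity $\sqrt{a}-\sqrt{b} = (a-b)/(\sqrt{a}+\sqrt{b})$, the Cauchy--Schwarz bound $|\sigma_{ljk}| \leq (\sigma_{ljj}\sigma_{lkk})^{1/2}$, and the elementary expansion $|s_{ljj}s_{lkk} - \sigma_{ljj}\sigma_{lkk}| \leq \sigma_{lkk}|s_{ljj}-\sigma_{ljj}| + s_{ljj}|s_{lkk}-\sigma_{lkk}|$, the second term in the decomposition can be controlled by the diagonal deviations alone.

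Next, fix a threshold $\eta \in (0,\pi_l/2]$ and define the good event
\begin{equation*}
\mathcal{A}_l = \bigl\{\max\!\bigl(|s_{ljk}-\sigma_{ljk}|,\,|s_{ljj}-\sigma_{ljj}|,\,|s_{lkk}-\sigma_{lkk}|\bigr) < \eta\bigr\}.
\end{equation*}
On $\mathcal{A}_l$ we have $s_{ljj},\,s_{lkk} \geq \pi_l/2$, and combined with the uniform bound $\sup_{j} \sigma_{ljj} \leq M$ from Assumption~\ref{asm: subGauss}, the decomposition above yields $|\hat{r}_{ljk} - r_{ljk}| \leq c_M\, \eta/\pi_l$ for an absolute constant $c_M$ depending only on $M$. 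Choosing $\eta = c\,\pi_l\,\delta$ with $c = 1/c_M$ gives $\mathcal{A}_l \subset \{|\hat{r}_{ljk}-r_{ljk}| < \delta\}$, provided $\delta$ is small enough that both $\eta \leq \pi_l/2$ (forcing $\delta \leq 1/(2c_M)$) and $\eta \leq C_3$ (forcing $\delta \leq C_3/(c\,\pi_l)$, which, since $\pi_l \leq M$, is implied by $\delta \leq C_3/(cM)$); the smaller of these bounds defines $D_3$.

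The final step applies Theorem~\ref{thm: concIneq} three times, one for each of $(j,k)$, $(j,j)$, $(k,k)$, and uses a union bound:
\begin{equation*}
\pr\!\bigl(|\hat{r}_{ljk} - r_{ljk}| \geq \delta\bigr) \leq \pr(\mathcal{A}_l^c) \leq 3 C_2 \exp\!\bigl(-C_1\,\tau_l^{-2}\,n\,\eta^2\bigr) = 3 C_2 \exp\!\bigl(-C_1 c^2\,n\,\tau_l^{-2}\pi_l^2\,\delta^2\bigr),
\end{equation*}
which is of the stated form with $m_l = \tau_l^{-1}\pi_l$, $D_2 = 3C_2$, and $D_1 = C_1 c^2$. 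The main delicate point is the bookkeeping in the last display: $\pi_l \downarrow 0$ as $l$ grows (since $\sigma_{ljj} \leq \lambda_l p \downarrow 0$ under partial separability via Theorem~\ref{thm: psOptimal}), so the threshold $\eta = c\pi_l\delta$ shrinks with $l$, and one has to ensure that the condition $\eta \leq C_3$ needed to invoke Theorem~\ref{thm: concIneq} holds uniformly in $l$ --- which it does, since $\pi_l \leq M$, so $\delta \leq D_3$ with $D_3 = \min\{1/(2c_M), C_3/(cM)\}$ suffices.
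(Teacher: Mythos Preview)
Your proof is correct and follows essentially the same strategy as the paper: reduce $|\hat r_{ljk}-r_{ljk}|$ to the covariance deviations $|s_{l\cdot\cdot}-\sigma_{l\cdot\cdot}|$ via a deterministic inequality on a good event, then apply Theorem~\ref{thm: concIneq} together with a union bound.  The organization differs slightly.  The paper parametrizes the denominator through the ratio variables $\hat c_{lj}=(s_{ljj}/\sigma_{ljj})^{1/2}$, introduces a fixed auxiliary level $\epsilon\in(0,1)$ with events $A_{lj}(\epsilon)=\{|1-\hat c_{lj}|\leq\epsilon\}$, and splits into five pieces (yielding $D_2=5C_2$, $D_1=C_1(1-\epsilon)^4/9$).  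You instead define a single good event directly in terms of the three covariance deviations and calibrate its threshold $\eta=c\pi_l\delta$ to hit $\delta$ exactly, which is a bit cleaner and gives three pieces ($D_2=3C_2$).  One small remark: the bound on the second term of your decomposition is most easily obtained by writing it as $r_{ljk}\{(\sigma_{ljj}\sigma_{lkk}/s_{ljj}s_{lkk})^{1/2}-1\}$ and using the \emph{relative} control $s_{ljj}\in[\sigma_{ljj}/2,3\sigma_{ljj}/2]$ rather than the absolute bound $\sigma_{ljj}\leq M$; the latter alone would produce an unwanted extra factor of $M/\pi_l$.  With that refinement $c_M$ is actually a universal constant, and the bound $\pi_l\leq M$ is needed only where you already use it, namely to make the constraint $\eta\leq C_3$ uniform in $l$.
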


To establish consistency of $\hat{E}_l$, some additional notation will be introduced.  Let $\Psi_l = R_l \tilde{\otimes} R_l,$ where $\tilde{\otimes}$ is the Kronecker product, and $\overline{E}_l = E_l \cup\{(1,1),\ldots,(p,p)\}$.  For $B \subset V \times V,$ let $\Psi_{l, BB}$ denote the submatrix of $\Psi_l$ indexed by sets of pairs $(j,k) \in B,$ where $\Psi_{l,(j,k),(j',k')} = R_{ljj'}R_{lkk'}.$  For a $p\times p$ matrix $\Delta,$ let $\left\VERT \Delta \right\VERT_\infty = \max_{j =1,\ldots,p} \sum_{k = 1}^p \lvert\Delta_{jk}\rvert$. The following assumption corresponds to the irrepresentability or neighborhood stability condition often seen in sparse matrix and regression estimation \citep{ravikumar2011high,meinshausen2006high}.   
\begin{assumption}
\label{asm: irrepresentability}
For $l = 1,\ldots,L,$ there exists $\eta_l \in (0,1]$ such that
$$
 \left\VERT \Psi_{l,\overline{E}_l^c \overline{E}_l}\left(\Psi_{l, \overline{E}_l \overline{E}_l}\right)\inv\right\VERT_\infty \leq 1 - \eta_l.
$$
\end{assumption}
\noindent For fixed $l$, Assumption~\ref{asm: irrepresentability} was employed by \cite{ravikumar2011high} as sufficient for model selection consistency.  As Theorem~\ref{thm: selection} below implies simultaneous consistency of the first $L$ edge sets, we require the assumption for each $l$.  %Weakening of this condition may be possible for graphs of specific structures; see Section 3.1 of \cite{ravikumar2011high}.

Set $\kappa_{R_l} = \left\VERT R_l \right\VERT_\infty,$ $\kappa_{\Psi_l} = \VERT (\Psi_{l,\overline{E}_l\overline{E}_l})\inv\VERT_\infty,$ let $y_l = \max_{j \in V}\lvert \{k \in V:\, \Xi_{ljk} \neq 0\}\rvert$ be the maximum degree of the graph $(V, E_l),$ and $\xi_{\mathrm{min},l} = \min\{|\Xi_{ljk}|: \Xi_{ljk} \neq 0 \}.$ Finally, when Assumptions~\ref{asm: lambda}--\ref{asm: irrepresentability} hold, for any $\alpha > 1 - \min_{l = 1,\ldots,L} \eta_l$, define $\eta_l' = \alpha + \eta_l - 1 > 0$ and $\epsilon_L = \min_{1 \leq l \leq L} \eta_l' m_l$.  Then, with $D_3$ as in Corollary~\ref{cor: concIneqCor}, set
\[
\begin{split}
    \mathfrak{a}_L &= D_3\min_{l = 1,\ldots,L} m_l, \\
    \mathfrak{b}_L &= \min_{l = 1,\ldots,L} \left\{6 y_l m_l \max(\kappa_{\Psi_l}^2\kappa_{R_l}^3,\kappa_{\Psi_l}\kappa_{R_l})(m_l\inv + 8 \epsilon_L\inv)^2\right\}\inv, \\
    \mathfrak{c}_L &= \min_{l = 1,\ldots,L} \xi_{\mathrm{min},l}\left\{4 \kappa_{\Psi_l}(m_l\inv + 8 \epsilon_L\inv)\right\}^{-1}.
\end{split}
\]
Tracking $\mathfrak{a}_L$ and $\mathfrak{b}_L$, including the maximal degrees $y_l$, allows one to obtain uniform consistency of the estimates $\hat{\Xi}_l$ in \eqref{eq: Xi_est}, and to conclude that $\hat{E}_l \subset E_l$ with high probability; see Lemma~\ref{lma: estimation} in Section~\ref{ss: selProofs} of the Appendix.  The quantity $\mathfrak{c}_L$ involves the weakest nonzero signal $\xi_{\mathrm{min},l}$ of each graph, with weaker signals requiring larger $n$ for recovery. The quantities $\mathfrak{b}_L$ and $\mathfrak{c}_L$ decrease with $\alpha$, so that smaller values of $\alpha$ also require larger $n$. % We then require the following for the divergence of the number of basis functions $L$.
%\begin{assumption}
%\label{asm: Ldivergence}
%$L \rightarrow \infty$ as $n \rightarrow \infty$, $L \leq np,$ and $\min(\mathfrak{a}_{L},\mathfrak{b}_L, \mathfrak{c}_L)\{n/\log(n)\}^{1/2} \rightarrow \infty.$
%\end{assumption}

%Utilizing Corollary~\ref{cor: concIneqCor}, Lemma~\ref{lma: estimation} in the Section~\ref{s-sec: graphEst} of the supplementary material provides lower bounds on the sample size to ensure that the entries of $\hat{\Xi}_l$ and $\Xi_l$ are uniformly close.  As a consequence, the following result on edge selection consistency is obtained.
\begin{theorem}
\label{thm: selection}
Suppose Assumptions~\ref{asm: lambda}--\ref{asm: irrepresentability} hold, $(1 - \min_{l = 1,\ldots,L} \eta_l) < \alpha \leq 1$, $L \leq np,$ and that, for some $\varrho > 2,$ $\gamma = 8 \epsilon_L\inv\left\{(D_1 n)\inv \log\left(D_2L^{\varrho-1}p^\varrho\right)\right\}^{1/2}$, where $D_1$ and $D_2$ are constants from Corollary~\ref{cor: concIneqCor}.  If the sample size $n$ satisfies
\begin{equation}
    \label{eq: nlb2}
    n\min(\mathfrak{a}_L,\mathfrak{b}_L,\mathfrak{c}_L)^2 > D_1\inv \left\{\log(D_2) + (\varrho - 1)\log(n) + (2\varrho - 1)\log(p)\right\},
\end{equation}
then, with probability at least $1 - (Lp)^{2 - \varrho},$ $\hat{E}_l = E_l$ for all $l = 1,\ldots,L.$
\end{theorem}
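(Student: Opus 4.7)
The plan is to combine the concentration inequality of Corollary~\ref{cor: concIneqCor} with the primal-dual witness (PDW) construction of \cite{ravikumar2011high}, applied simultaneously to the $L$ decoupled graphical lasso problems in \eqref{eq: Xi_est} (decoupled because $\alpha = 1$).

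First I would control the sample correlations uniformly by defining
\[
\mathcal{E} = \bigcap_{l = 1}^L \bigcap_{j,k \in V} \{|\hat{r}_{ljk} - r_{ljk}| < \delta_n\}, \qquad \delta_n = \frac{\epsilon_L \gamma}{8}.
\]
With the prescribed $\gamma,$ a direct computation gives $D_1 n m_l^2 \delta_n^2 = (m_l/\epsilon_L)^2 \log(D_2 L^{\varrho - 1}p^\varrho) \geq \log(D_2 L^{\varrho - 1}p^\varrho),$ since $\epsilon_L = \min_l \eta_l m_l \leq m_l.$ The sample-size bound \eqref{eq: nlb2} forces $\delta_n \leq D_3,$ so Corollary~\ref{cor: concIneqCor} applies, and a union bound over the $Lp^2$ indices yields
\[
\pr(\mathcal{E}^c) \leq Lp^2 D_2 (D_2 L^{\varrho - 1} p^\varrho)\inv = (Lp)^{2 - \varrho}.
\]

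The rest of the argument is carried out on $\mathcal{E}$ one $l$ at a time. For fixed $l$ I would construct the oracle $\tilde{\Xi}_l$ as the minimizer of the objective in \eqref{eq: Xi_est} over positive definite matrices supported on $\overline{E}_l,$ and extend the KKT subgradient $\tilde{Z}_l$ from $\overline{E}_l$ to $V \times V$ via the remaining stationarity equations. The goal is to show $\|\tilde{Z}_{l,\overline{E}_l^c}\|_\infty < 1,$ so that by strict convexity $\hat{\Xi}_l = \tilde{\Xi}_l$ and therefore $\hat{E}_l \subseteq E_l.$ This is where Assumption~\ref{asm: irrepresentability} enters: expanding the stationarity equations in $\hat{R}_l - R_l$ and controlling the second-order remainder through $\kappa_{R_l},$ $\kappa_{\Psi_l},$ and the maximum degree $y_l,$ the bound $\left\VERT\Psi_{l, e\overline{E}_l}(\Psi_{l, \overline{E}_l \overline{E}_l})\inv\right\VERT_1 \leq 1 - \eta_l$ leaves headroom of order $\eta_l$ to absorb the correlation error $\delta_n.$ The $\mathfrak{b}_L$ piece of \eqref{eq: nlb2} is exactly the quantitative condition that makes this go through simultaneously for every $l \leq L;$ Lemma~\ref{lma: estimation} in the supplement packages these computations and also yields the entrywise bound $\max_{j,k}|\hat{\Xi}_{l,jk} - \Xi_{l,jk}| \leq 2\kappa_{\Psi_l}(\delta_n + \gamma).$ Exact support recovery then follows from the $\mathfrak{c}_L$ piece of \eqref{eq: nlb2}, which is calibrated so that $\xi_{\mathrm{min},l}$ strictly dominates this entrywise error; hence every nonzero entry of $\Xi_l$ is estimated nonzero, giving $\hat{E}_l = E_l$ for all $l \leq L$ on $\mathcal{E}.$

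The hard part is the uniformity across $l.$ Each individual graphical lasso fits squarely into the framework of \cite{ravikumar2011high}, but a single penalty $\gamma$ and a single sample-size condition must trigger all $L$ PDW arguments simultaneously and survive the union bound with the advertised probability. This is engineered by scaling $\gamma$ through the uniform floor $\epsilon_L = \min_{l \leq L}\eta_l m_l$ and by defining $\mathfrak{a}_L, \mathfrak{b}_L, \mathfrak{c}_L$ as minima over $l \leq L$ of the per-$l$ requirements on the range of $\delta_n,$ the PDW slack, and $\xi_{\mathrm{min},l}$ respectively. A secondary subtlety is that $m_l = \tau_l\inv \pi_l$ shrinks with $l$ (since $\tau_l \to \infty$ and $\pi_l \downarrow 0$), degrading the concentration in Corollary~\ref{cor: concIneqCor} as $l$ grows; this is why $L$ must be capped at $np$ in Assumption~\ref{asm: Ldivergence}.
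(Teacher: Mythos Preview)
Your proposal is correct and follows essentially the same route as the paper: control the sample correlations via Corollary~\ref{cor: concIneqCor} and a union bound, invoke Lemma~\ref{lma: estimation} (which packages the primal--dual witness argument of \cite{ravikumar2011high}) to conclude $\hat{\Xi}_l = \tilde{\Xi}_l$ together with the entrywise bound, and then use the $\mathfrak{c}_L$ piece of \eqref{eq: nlb2} for the minimum-signal check. One small technical point: the paper defines the good event with the $l$-dependent threshold $m_l^{-1}\overline{\delta}(n;L^{\varrho-1}p^\varrho)$ rather than your uniform $\delta_n = \epsilon_L\gamma/8 = \epsilon_L^{-1}\overline{\delta}$, and it is the former that $\mathfrak{a}_L = D_3\min_l m_l$ is calibrated to keep below $D_3$; since $\epsilon_L = \min_l \eta_l m_l \leq \min_l m_l$, your $\delta_n$ could exceed $D_3$ when some $\eta_l<1$, so the claim ``\eqref{eq: nlb2} forces $\delta_n \leq D_3$'' needs this adjustment, after which everything goes through unchanged.
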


\begin{remark}
\label{rmk: HD}
If $L$ grows sufficiently slowly with respect to $n$, \eqref{eq: nlb2} becomes
$
n\min(\mathfrak{a}_L, \mathfrak{b}_L, \mathfrak{c}_L)^2 \gtrsim \varrho \log(p)
$
for large $n$, $\gtrsim$ denoting inequality up to a multiplicative constant.  Hence, the conclusion of Theorem~\ref{thm: selection} holds for large $n$ when $\log(p) = o(n)$ by suitably limiting the growth of $L$.
\end{remark}

\begin{remark}
\label{rmk: graphProp}
To understand how the graph properties affect the lower bound, assume $\kappa_{\Psi_l},$ $\kappa_{R_l},$ and $\eta_l$ do not depend on $l, n,$ or $p,$ and $\min_{1 \leq l \leq n} m_l \gtrsim n^{-d}$, $0 < d < 1/4.$ Then \eqref{eq: nlb2} becomes
$$
n\gtrsim \left[\left\{\left(\max_{1 \leq l \leq L} \xi_{\mathrm{min},l}^{-2}\right) + \left(\max_{1 \leq l \leq L} y_l^2\right)\right\}\varrho \log(p)\right]^{1 - 4d}
$$
asymptotically.  In particular, if $L$ remains bounded so that $d = 0,$ the above bound is consistent with that of \cite{ravikumar2011high}, where the maxima over $l$ reflect the need to satisfy the bound for the edge set $E_l$ that is most difficult to estimate.
\end{remark}

%\begin{remark}
%\label{rmk: PSselection}
If the conditional independence graph of $X$ is $E = \bigcup_{l = 1}^\infty E_l,$ as is the case when $X$ is Gaussian and $\mc{G}$ partially separable, Theorem~\ref{thm: selection} can lead to edge selection consistency in the functional graphical model.  For a given $n$, there exists a finite $L_p^*$ such that $E = \bigcup_{l = 1}^{L^*_p} E_l$, where $L_p^*$ can only diverge with $n$ if $p$ does so.  The following corollary is then immediate.
\begin{corollary}
\label{cor: fgm_consistency}
Under the assumptions of Theorem~\ref{thm: selection}, suppose $\log(p) = o(n),$ $L \rightarrow \infty$ such that $\min(\mathfrak{a}_{L},\mathfrak{b}_L, \mathfrak{c}_L)[n/\log\{\max(n,p)\}]^{1/2} \rightarrow \infty$ and, for large $n,$ $L \geq L_p^*$. Then, for large $n$, $E = \bigcup_{l = 1}^L \hat{E}_l$ with probability at least $1 - (Lp)^{2 - \varrho}.$
\end{corollary}
%\end{remark}

\section{Numerical Experiments}
\label{sec: sim}

\subsection{Simulation Settings}
The simulations in this section compare the proposed method for partially separable functional Gaussian graphical models, with that of \citet{qiao:19}. Throughout this section we denote these methods as psFGGM and FGGM, respectively. Throughout this section we denote these methods as psFGGM and FGGM, respectively. Other potentially competing non-functional based approaches are not included since they are clearly outperformed by the latter \cp{qiao:19}. An initial conditional independence graph $G=(V,E)$ is generated from a power law distribution with parameter $\pi = \pr\{ (j,k) \in E\}$. Then, for a fixed $M$, a sequence of edge sets $E_1,\dots,E_M$ is generated so that $E = \bigcup_{l = 1}^M E_l$. A set of common edges to all edge sets is computed for a given proportion of common edges $\tau \in [0,1]$. Next, $p \times p$ precision matrices $\Omega_l$ are generated for each $E_l$ based on the algorithm of \citet{peng:09}. A fully detailed description of this step is included in the Section~\ref{s-sec: sim} of the Appendix.

Random vectors $\theta_i\in \mathbb{R}^{Mp}$ are then generated from a mean zero multivariate normal distribution with covariance matrix $\Sigma,$ yielding discrete and noisy functional data
$$
Y_{ijk}= X_{ij}(t_k)+ \varepsilon_{ijk}, \quad \varepsilon_{ijk}\sim N(0,\sigma_\varepsilon^2)\quad (i=1,\ldots,n ;\ j=1,\ldots, p;\ k=1,\ldots, M).$$ 
%$$Y_{ijk}= X_{ij}(t_k)+ \epsilon_{ijk}, \quad \epsilon_{ijk}\sim N\bigg(0,0.05\frac{\sum_{l=1}^M \trace(\Sigma_l)}{p}\bigg)$$ 
Here, $\sigma_\epsilon^2= 0.05\sum_{l=1}^M \trace(\Sigma_l)/p$ and $X_{ij}(t_k) = \sum_{l=1}^M \theta_{ilj}\varphi_l(t_k)$ according to the partially separable Karhunen-Lo\`eve expansion in \eqref{eq: PS_KL}. Fourier basis functions $\varphi_1,\dots,\varphi_M$ evaluated on an equally spaced time grid of $t_1,\dots,t_T$, with $t_1=0$ and $t_T=1$, were used to generate the data. In all settings, $100$ simulations were conducted. To resemble real data example from Section~\ref{sec: app} below, we set $T=30$, $M=20$ and $\pi=5\%$ for a sparse graph.

We consider two models for $\Sigma$, corresponding to partially separable and non-partially separable $X$, respectively.  In the first, the covariance $\Sigma_{\text{ps}}$ is formed as a block diagonal matrix with $p\times p$ diagonal blocks $\Sigma_l = a_l \Omega_l^{-1}$.  The decaying factors $a_l =  3l^{-1.8}$ guarantee that $\trace(\Sigma_l)$ decreases monotonically in $l.$   In the second, $\Sigma_{\text{ps}}$ is modified to violate partial separability.  Specifically, a block-banded precision matrix $ \Omega$ is computed with $p\times p$ blocks $\Omega_{l,l} =\Omega_{l}$ and $\Omega_{l+1,l}=\Omega_{l,l+1}=0.5(\Omega_{l}^*+\Omega_{l+1}^*)$ with $\Omega_l^*=\Omega_l-\diag(\Omega_l)$.  Then, the non-partially separable covariance is computed as $\Sigma_\text{non-ps}=\diag(\Sigma_\text{ps})^{1/2} \Omega\inv  \diag(\Sigma_\text{ps})^{1/2}$.

\subsection{Comparison of Results}

Comparisons between the proposed method and that of \ci{qiao:19}, implemented using code provided by the authors, are presented here.  Additional comparisons obtained by thresholding correlations are provided in Section~\ref{ss: screen} of the Appendix.  Although this alternative method does not estimate a sparse inverse covariance structure, its graph recovery is competitive with that of the proposed method in some settings. As performance metrics, the true and false positive rates of correctly identifying edges in graph $G$ are computed over a range of $\gamma$ values and a coarse grid of five evenly spaced points $\alpha \in [0,1]$. The value of $\alpha$ maximizing the area under the receiver operating characteristic curve is considered for the comparison. In all cases, we set $\pi=0.05$ and $\tau = 0.$  The two methods are compared using $L$ principal components explaining at least $90\%$ of the variance. For all simulations and both methods, this threshold results in the choice of $L=5$ or $L = 6$ components. For higher variance explained thresholds, however, we see a sharp contrast. While the proposed method consistently converges to a solution, that of \ci{qiao:19} does not, due to increasing numerical instability. The reason for the instability is the need to estimate $L = 5$ or $6$ times more parameters compared to the proposed method.  The proposed method can thus accommodate larger $L$, thereby incorporating more information from the data.  In the figures and tables, additional results are available for the proposed method when $L$ is increased to explain at least $95\%$ of the variance.

\begin{figure}[t] 
\centering
%\subcaptionbox{$n = p / 2$\label{fig:ROC1}}{\includegraphics[height=0.26\textheight]{Writeup/figSim1.png}} %
\subcaptionbox{$n = p / 2$\label{fig:ROC1}}{\includegraphics[height=0.26\textheight]{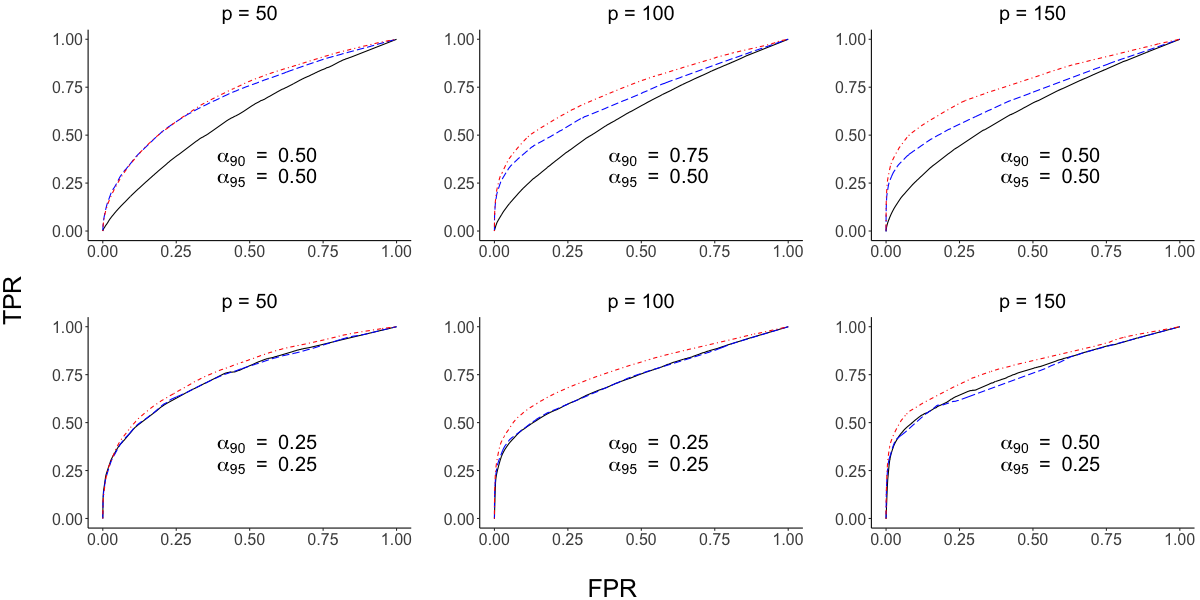}} %
\subcaptionbox{$n = 1.5p$\label{fig:ROC2}}{\includegraphics[height=0.26\textheight]{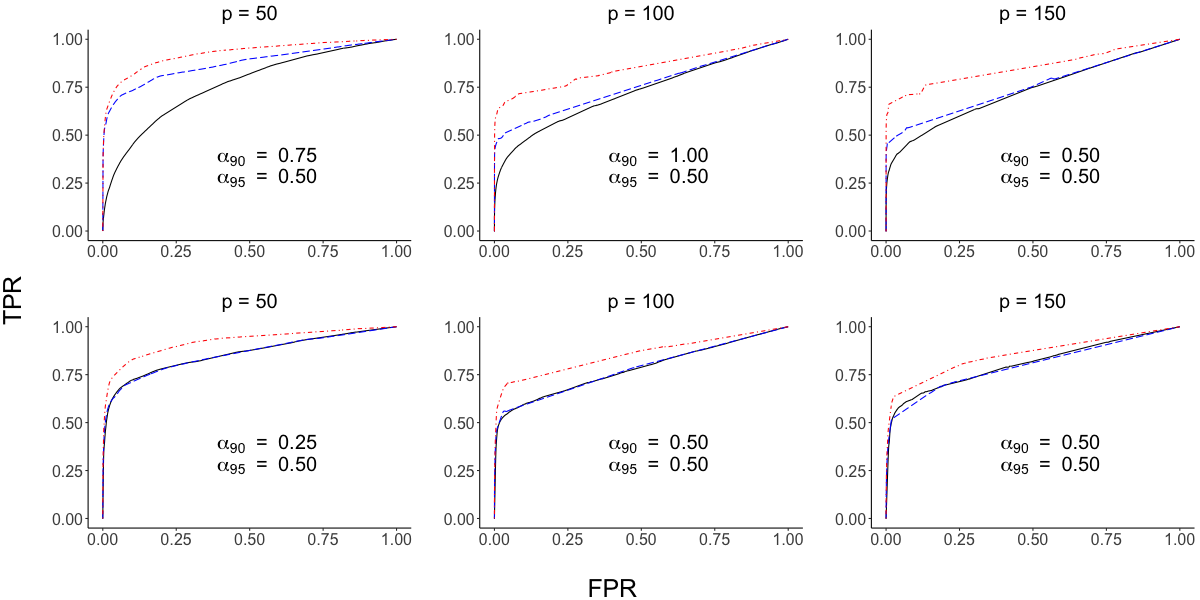}} %
%\subcaptionbox{$n = p / 2$\label{fig:ROC3}}{\includegraphics[h=0.5]{Writeup/figSim3.png}} %
%\subcaptionbox{$n = 1.5p$\label{fig:ROC4}}{\includegraphics[h=0.5]{Writeup/figSim4.png}} %
\caption{ Mean receiver operating characteristic curves for the proposed method (psFGGM) and that of \ci{qiao:19} (FGGM). In subfigures (a) and (b), $\Sigma_\text{ps}$ (top) and $\Sigma_\text{non-ps}$ (bottom) were used for $p=50,100,150$. Curves are coded as psFGGM (\textcolor{blue}{\sampleline{dashed}}) and FGGM (\sampleline{}) at $90\%$ of variance and psFGGM (\textcolor{red}{\sampleline{dash pattern=on .7em off .2em on .05em off .2em}}) at $95\%$ of variance explained. For psFGGM, the values of $\alpha$ used to compute the curve values are printed in each panel. }
%\caption{ \red{Mean receiver operating characteristic curves for the proposed method (psFGGM) with that of \ci{qiao:19} (FGGM) under $\Sigma_\text{ps}$ (top) and $\Sigma_\text{non-ps}$ (bottom) for $p=50,100,150$. We see psFGGM at $90\%$ (\textcolor{blue}{\sampleline{dashed}}) and $95\%$ (\textcolor{red}{\sampleline{dash pattern=on .7em off .2em on .05em off .2em}}) of variance explained with values of $\alpha$ equal to $\alpha_{90}$ and $\alpha_{95}$, respectively. FGGM (\sampleline{}) is compared at $90\%$ of variance explained.}  %The values of $\alpha_{90}$ and $\alpha_{95}$ denote the values of  for psFGGM in both }
\label{fig:fig:ROC1_and_2}
\end{figure}

Figure~\ref{fig:ROC1} shows average true/false positive rate curves for the high-dimensional case $n=p/2$. The smoothed curves are computed using the \texttt{supsmu} R package that implements SuperSmoother \cp{friedman:84}, a variable bandwidth smoother that uses cross-validation to find the best bandwidth. Table $\ref{tab:AUC12}$ shows the mean and standard deviation of area under the curve estimates for various settings.  When partial separability holds, $\Sigma = \Sigma_\text{ps}$, the proposed method exhibits uniformly higher true positive rates across the full range of false positive rates. Even when partial separability is violated, $ \Sigma = \Sigma_\text{non-ps}$, the two methods perform comparably. More importantly, and in all cases, the proposed method is able to leverage $95\%$ level of variance explained, owing to the numerical stability mentioned above. Figure $\ref{fig:ROC2}$ and Table $\ref{tab:AUC12}$ summarize results for the large sample case $n=1.5p$ with similar conclusions.  Comparisons under additional simulation settings can be found in Section~\ref{s-sec: sim} of the Appendix.

\begin{table}[t]\centering \small
\caption{{Mean area under the curve (and standard error) values for Figures $\ref{fig:ROC1}$ and $\ref{fig:ROC2}$}}
\label{tab:AUC12}
{\small
\begin{tabular}{ccc|ccc|ccc} 
\hline
& & & \multicolumn{3}{|c|}{$\Sigma = \Sigma_\text{ps}$} & \multicolumn{3}{c}{$\Sigma = \Sigma_\text{non-ps}$}\\
$n$ &  &  & $p = 50$ & $p = 100$ & $p = 150$ & $p = 50$ & $p = 100$ & $p = 150$ \\
\hline
\hline

%\multirow{3}{4em}{Multiple row} & cell2 & cell3 \\ 
$p/2$& AUC & $\text{FGGM}_\text{90\%}$ & 0.60(0.03) &  0.62(0.02) & 0.63(0.01)& 0.75(0.03) & 0.72(0.02) & 0.75(0.02)\\
&  & $\text{psFGGM}_\text{90\%}$& 0.71(0.04) & 0.69(0.02)  & 0.70(0.01) & 0.75(0.03) & 0.73(0.02)  &  0.74(0.03)\\
&  & $\text{psFGGM}_\text{95\%}$ & 0.72(0.04)  &  0.74(0.02) &  0.77(0.02)  &0.77(0.03) & 0.78(0.02) & 0.79(0.02) \\
\cline{2-9} 
& $\text{AUC}15^\dagger$ & $\text{FGGM}_\text{90\%}$ &  0.15(0.04)  & 0.18(0.02) & 0.20(0.01) & 0.39(0.04) & 0.40(0.02)& 0.45(0.03) \\
& & $\text{psFGGM}_\text{90\%}$ & 0.30(0.05) & 0.35(0.02)  & 0.37(0.02)  & 0.39(0.04) & 0.42(0.03) & 0.44(0.04) \\
& &$\text{psFGGM}_\text{95\%}$ & 0.29(0.05) & 0.40(0.03) & 0.46(0.03) &  0.41(0.05) & 0.48(0.03) & 0.51(0.03)\\
\hline
%\multirow{3}{4em}{Multiple row} & cell2 & cell3 \\ 
$1.5p$&AUC & $\text{FGGM}_\text{90\%}$ & 0.76(0.02) & 0.72(0.02) & 0.73(0.01) & 0.86(0.02)& 0.78(0.02) & 0.80(0.03)\\
& & $\text{psFGGM}_\text{90\%}$ & 0.87(0.03) & 0.75(0.02)  &  0.75(0.01) & 0.85(0.02) &  0.78(0.02) & 0.79(0.03) \\
& & $\text{psFGGM}_\text{95\%}$ & 0.92(0.02) & 0.84(0.02) & 0.85(0.02) & 0.92(0.03) & 0.85(0.02) & 0.85(0.02)\\
\cline{2-9} 
&$\text{AUC}15^\dagger$ & $\text{FGGM}_\text{90\%}$ &  0.37(0.04)& 0.41(0.02) & 0.44(0.02) & 0.66(0.03) &  0.55(0.03) & 0.57(0.04) \\
&& $\text{psFGGM}_\text{90\%}$ & 0.69(0.04) & 0.52(0.02) & 0.52(0.02)& 0.65(0.04)& 0.56(0.04) & 0.55(0.05)\\
&&$\text{psFGGM}_\text{95\%}$ & 0.75(0.04) & 0.68(0.03) & 0.69(0.03) & 0.76(0.06) & 0.68(0.03) & 0.64(0.03) \\
\hline
%\multicolumn{4}{l}{ \textit{t} statistics in parentheses}\\

\end{tabular}}
{\footnotesize$\dagger$AUC15 is AUC computed for FPR in the interval [0, 0.15], normalized to have maximum area 1.}
\end{table}

\section{Application to Functional Brain Connectivity}
\label{sec: app}

In this section, the proposed method is used to reconstruct the brain connectivity structure using functional magnetic resonance imaging (fMRI) data from the Human Connectome Project.  We analyze the ICA-FIX preprocessed data variant that controls for spatial distortions and alignments across both subjects and modalities \cp{glasser:13}.  In particular, we use the 1200 Subjects 3T MR imaging data available at \url{https://db.humanconnectome.org} that consists of fMRI scans of individuals performing basic body movements. During each scan, a three-second visual cue signals the subject to move a specific body part, which is then recorded for 12 seconds at a temporal resolution of 0.72 seconds. For this work, we considered only the data from left- and right-hand finger movements.

%\HCPpurple{In particular, we use the motor task fMRI dataset\footnote{The 1200 Subjects 3T MR imaging data available at \url{https://db.humanconnectome.org}}. This dataset consists of fMRI scans of individuals performing basic body movements. During each scan, a sequence of visual cues signals the subject to move one of five body parts: fingers of the left or right hand; toes of the left or right foot; or the tongue. After each three-second cue, a body movement lasts for 12 seconds with temporal resolution of 0.72 seconds. For this work, we considered only the data from left and right-hand finger movements.}

The left- and right-hand tasks data for $n=1054$ subjects with complete meta-data were preprocessed by averaging the blood oxygen level dependent signals over $p=360$ regions of interest (ROIs) \cp{glasser:16}. After removing cool down and ramp up observations, $T=16$ time points of pure movement tasks remained. Motivated by part 3 of Theorem \ref{thm: PSequiv}, diagnostics were performed to assess the plausibility of the partial separability assumption, with no indications to the contrary; see Section~\ref{s-sec: app} of the Appendix for more details. Penalty parameters $\gamma=0.91$ and $\alpha=0.95$ were used to estimate very sparse graphs in both tasks.

\begin{figure}[H]
\subcaptionbox{Left-hand task\label{fig:left_hand}}{\includegraphics[width=0.5\textwidth]{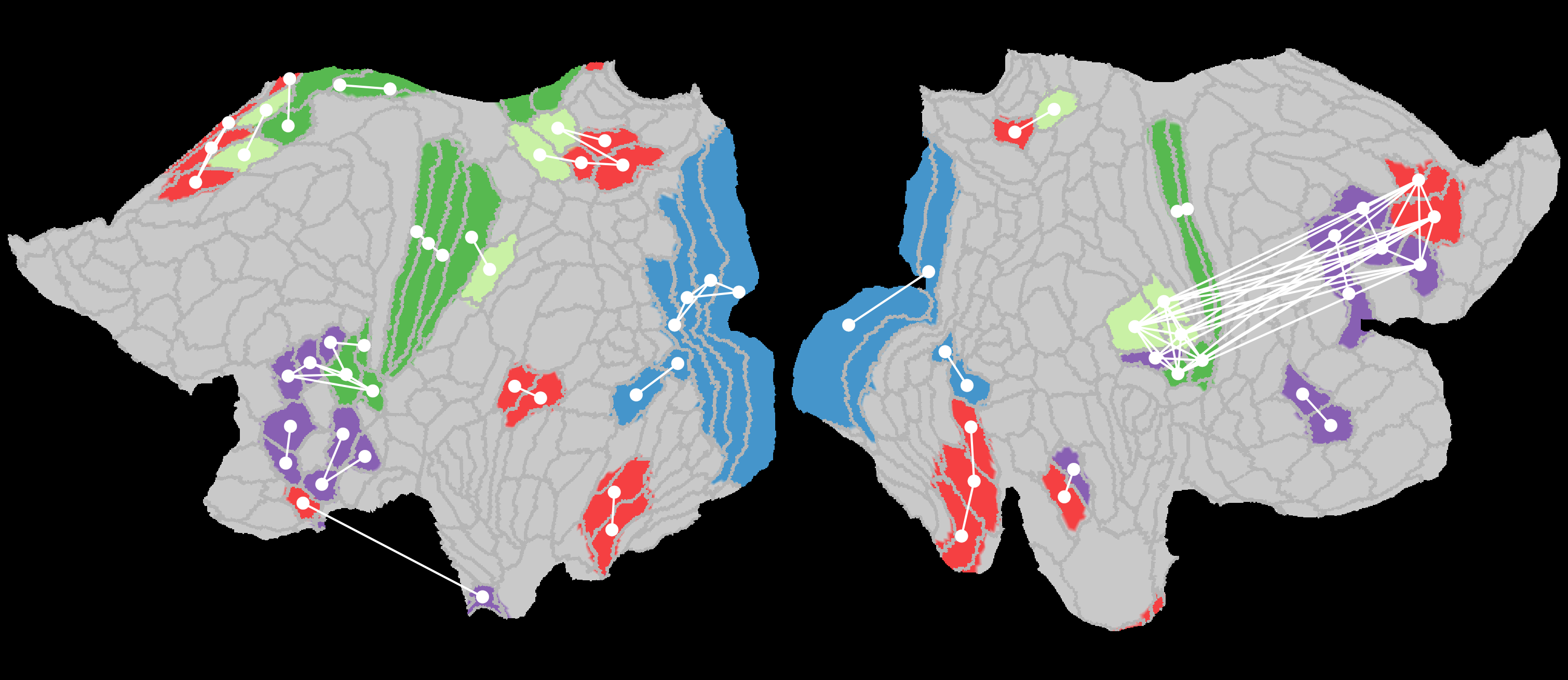}}%
\subcaptionbox{Right-hand task\label{fig:right_hand}}{\includegraphics[width=0.5\textwidth]{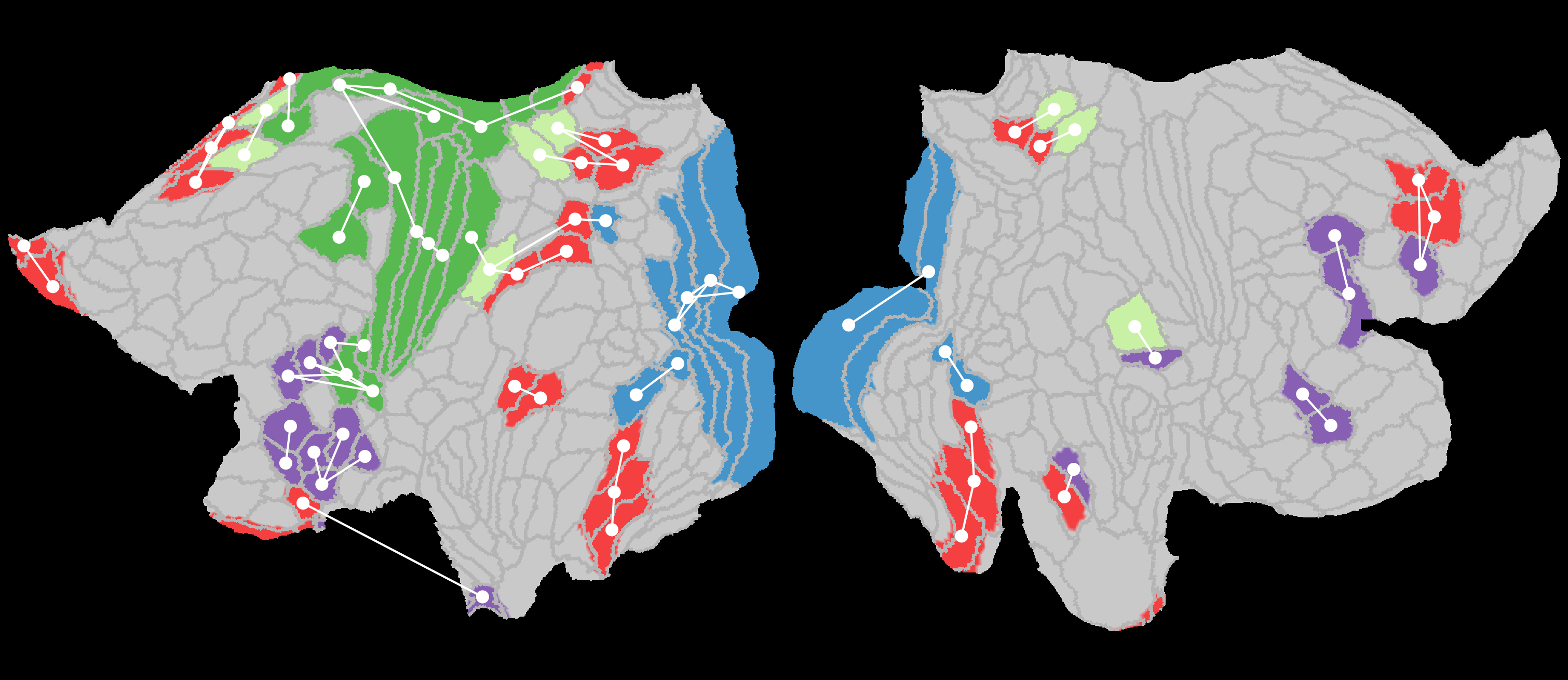}}%
\hfill
\subcaptionbox{Activated ROIs unique to left-hand task\label{fig:lh_unique}}{\includegraphics[width=0.5\textwidth]{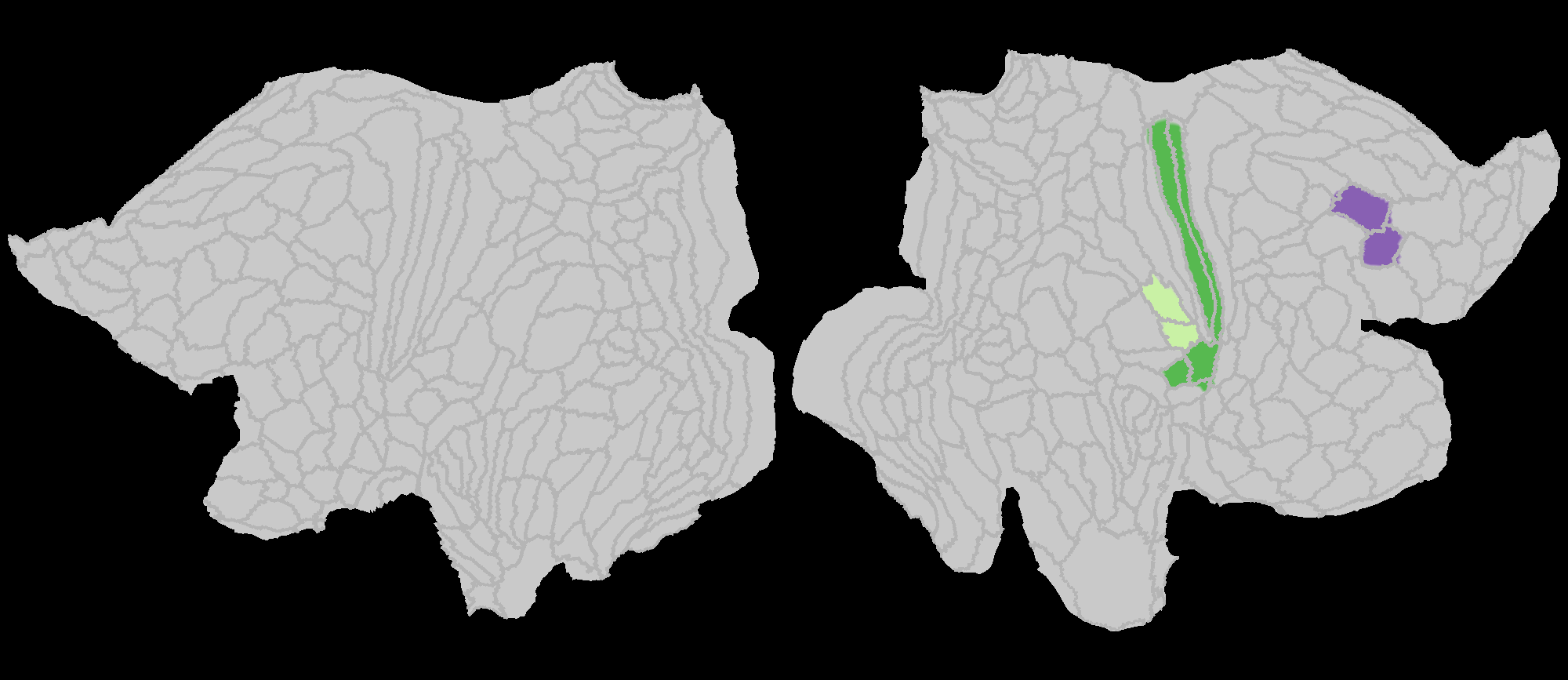}}%
\hfill
\subcaptionbox{Activated ROIs unique to right-hand task\label{fig:rh_unique}}{\includegraphics[width=0.5\textwidth]{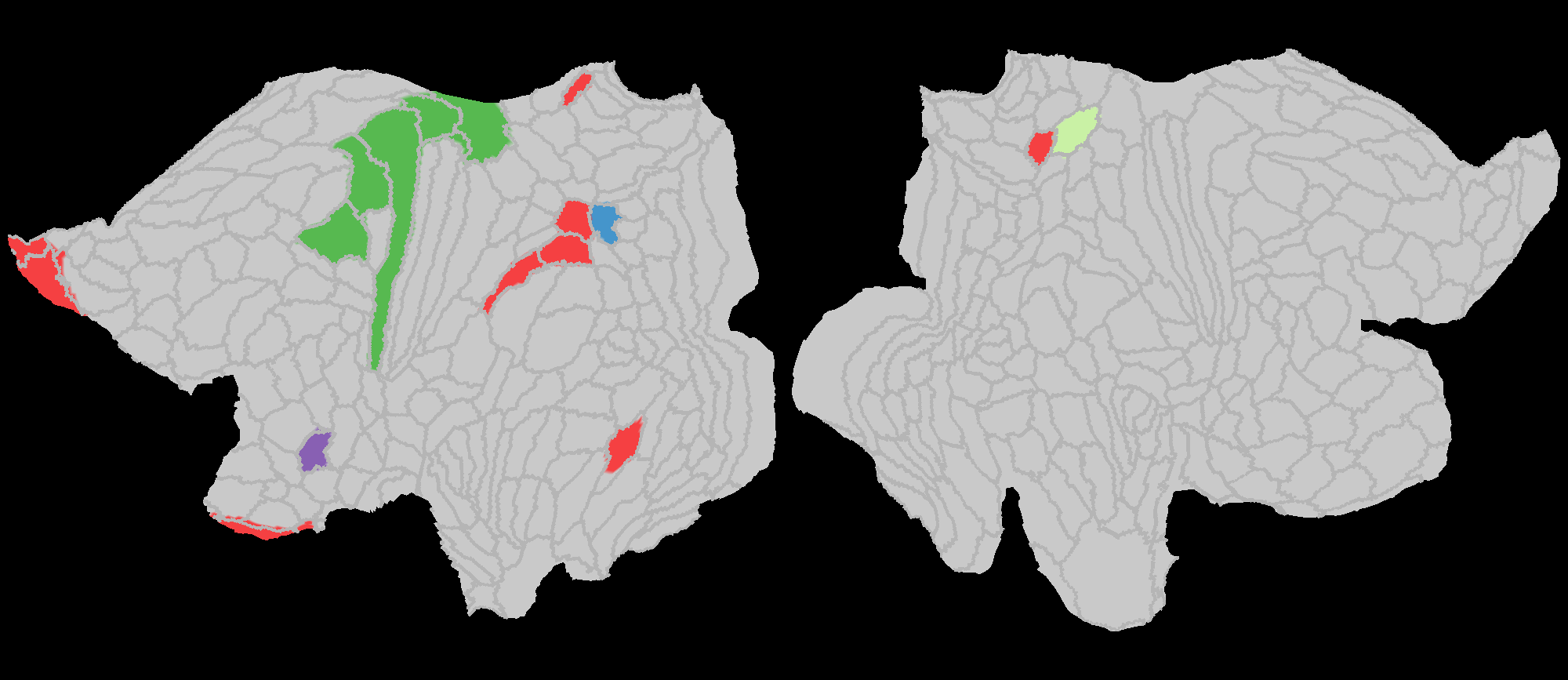}}%
\hfill
\subcaptionbox{Activated ROIs common to both tasks\label{fig:rhlhcommonROIs}}{\includegraphics[width=0.5\textwidth]{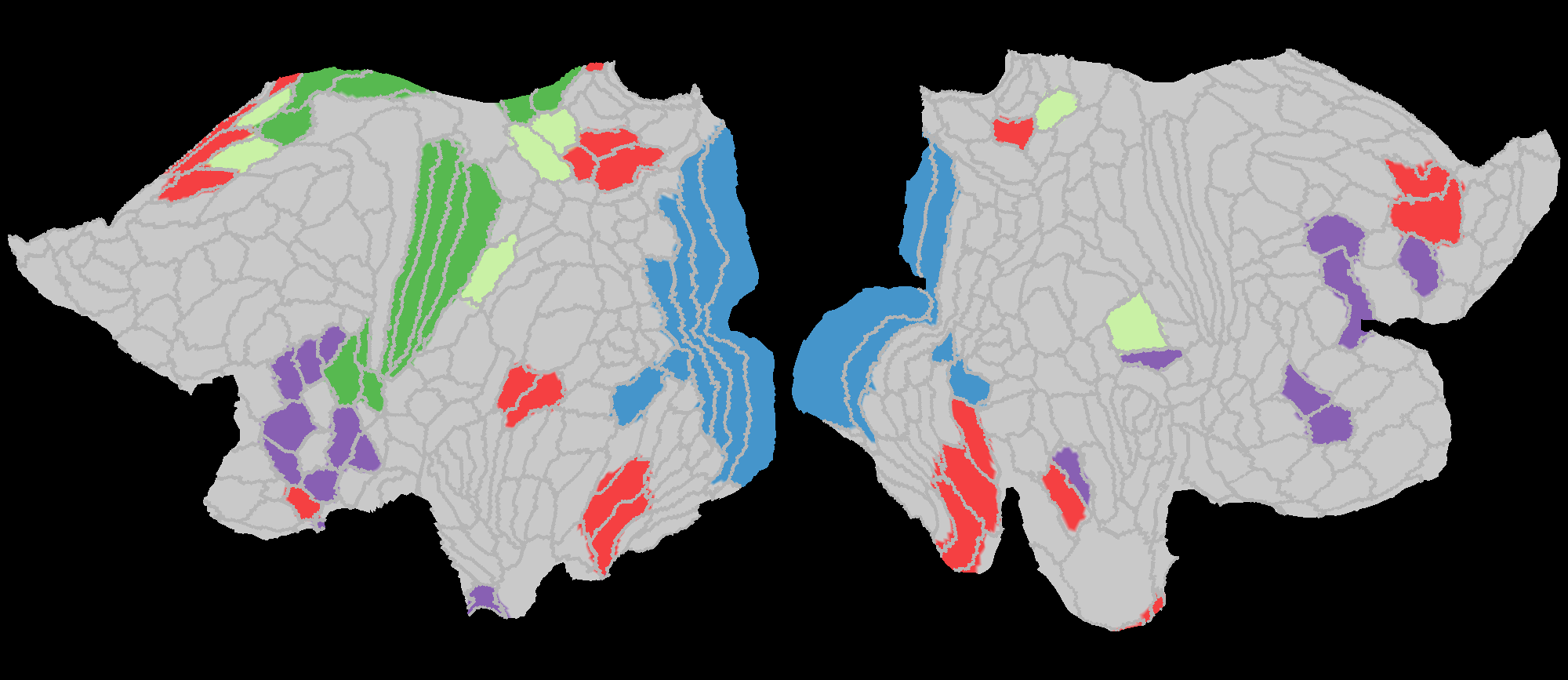}}%
%\hfill
%\subcaptionbox{Different ROIs Between Tasks}{\includegraphics[width=0.5\textwidth]{HCPmotor_diffROIs.png}}%
\hfill
\subcaptionbox{ROI task activation map \ci{glasser:16}\label{fig:allROIs}}{\includegraphics[width=0.5\textwidth]{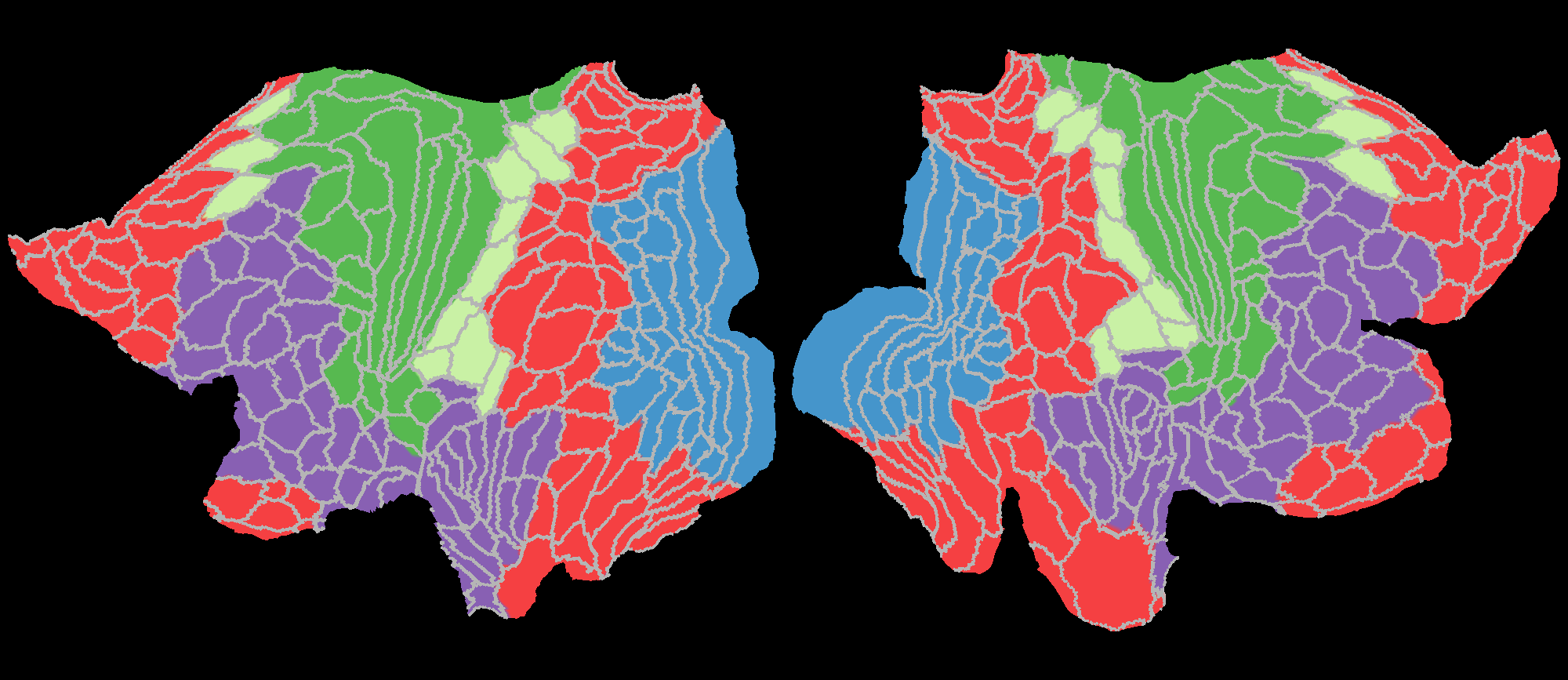}}%
\caption{psFGGM estimated functionally connected cortical ROIs for the left- and right-hand motor tasks. Each sub-figure shows a flat brain map of the left and right hemispheres (in that order). ROIs having a positive degree of connectivity in each estimated graph are colored based on their functionality \ci{glasser:16}: visual (\HCPblue{\textbf{blue}}),  motor (\HCPgreen{\textbf{green}}), mixed motor (\HCPlightgreen{\textbf{light green}}), mixed other (\HCPred{\textbf{red}}) and other (\HCPpurple{\textbf{purple}}). 
%We color ROIs on a flat brain-map based on their functionality: visual
%Flat map of the human brain depicting partially correlated ROIs under left and right-hand motor tasks. ROIs are colored based on their functionality: visual (\HCPblue{\textbf{blue}}),  motor (\HCPgreen{\textbf{green}}), mixed motor (\HCPlightgreen{\textbf{light green}}), mixed other (\HCPred{\textbf{red}}) and other (\HCPpurple{\textbf{purple}}).
}
\label{fig:motorTask1}
\end{figure}

Figure \ref{fig:motorTask1} shows comparison of activation patterns from left and right-hand task datasets. Figures $\ref{fig:left_hand}$ and $\ref{fig:right_hand}$ show the recovered ROI graph on a flat brain map, and only those ROIs with positive degree of connectivity are colored. Figures $\ref{fig:lh_unique}$ and $\ref{fig:rh_unique}$ show connected ROIs that are unique to each task, whereas Figure $\ref{fig:rhlhcommonROIs}$ show only those that are common to both tasks. In this map, one can see that almost all of the visual cortex ROIs in the occipital lobe are shared by both maps. This is expected as both tasks require individuals to watch visual cues. 
%\HCPpurple{In this map, }one can see that almost all of the visual cortex ROIs in the occipital lobe are shared by both maps. This is expected as both tasks require individuals to watch visual cues. 
Furthermore, the primary sensory cortex, corresponding to touch and motor sensory inputs, and intraparietal sulcus, corresponding to perceptual motor coordination, are activated during both left and right-hand tasks.
%\HCPpurple{Furthermore, primary sensory cortex (touch and motor sensory inputs) and intraparietal sulcus (perceptual motor coordination) are activated during both left and right-hand tasks.}
On the other hand, the main difference between these motor tasks lies at the motor cortex near the central sulcus. In Figure $\ref{fig:lh_unique}$ and $\ref{fig:rh_unique}$ the functional maps for the left- and right-hand tasks present particular motor-related cortical areas in the right and left hemisphere, respectively. These results are in line with the motor task activation maps obtained by \ci{barch:13}.

\section{Discussion}

Partial separability for multivariate functional data is a novel structural assumption with further potential applications beyond graphical models.  For example, it is well-known that the functional linear model is simplified by univariate functional principal component analysis, which parses out the problem into a sequence of simple linear regressions; see \ci{mull:16:3} and references therein.  The partially separable Karhunen-Lo\`eve expansion in \eqref{eq: PS_KL} demonstrates a similar potential, namely to break down a problem involving multivariate functional data into a sequence of standard multivariate problems.  This potential was shown in this paper by decomposing a functional graphical model into a sequence of standard multivariate graphical models.  % "potential demonstrated" for "potential shown", 

Motivated by the brain connectivity example, we have presented partial separability for multivariate processes with components $X_j$ defined on the same domain.  However, this restriction is not necessary in order to define partial separability.  If $X_j$ are elements of $L^2(\mathcal{T}_j),$ $j = 1,\ldots,p,$ a more general definition of partial separability would be the existence of orthonormal bases $\{\varphi_{jl}\}_{l = 1}^\infty$ of $L^2(\mathcal{T}_j),$ $j = 1,\ldots,p,$ such that the vectors
$
\theta_l = \left(\theta_{l1},\ldots,\theta_{lp}\right)\T,$ where $\theta_{lj} = \int_{\mathcal{T}_j} X_j(t)\varphi_{jl}(t)\dt
$
are mutually uncorrelated across $l$.  Such a generalization is highly desirable, as many multivariate functional data sets consist of functions on different domains. In fact, the above notion is even applicable when the domains $\mathcal{T}_j$ are of different dimension \cp{happ:18} or even a complex manifold, such as the surface of the brain. % Questions of uniqueness and optimality of the bases $\varphi_{jl}$ are promising areas of future research.
 
The proposed method for functional graphical model estimation is equally applicable to dense or sparse functional data, observed with or without noise.  However, rates of convergence will inevitably suffer as observations become more sparse or are contaminated with higher levels of noise.  The results in Theorem~\ref{thm: concIneq} of this paper or Theorem~1 of \ci{qiao:19} have been derived under the setting of fully observed functional data, so that future work will include similar derivations under more general observation schemes.

% %\pagebreak

\section*{Acknowledgement}
The authors thank Scott Grafton for an interesting discussion of our neuroscience data analysis results, as well as two anonymous referees and an associate editor for their helpful suggestions.
\newpage 
%%%%%%%%%%%%%%%%%%%%%%%%
%%% APPENDIX
%%%%%%%%%%%%%%%%%%%%%%%%

\renewcommand\thesection{A\arabic{section}}
\renewcommand\theequation{A\arabic{equation}}
\renewcommand\thetable{A\arabic{table}}
\renewcommand\thefigure{A\arabic{figure}}

\setcounter{section}{0}
\setcounter{equation}{0}
\setcounter{table}{0}
\setcounter{figure}{0}

\section*{Appendix}

Sections \ref{s-sec: notions} and \ref{s-sec: graphEst} of the Appendix contain proofs of theoretical results from the main paper, while Section~\ref{sec: addResult} contains an alternative edge selection consistency result. Section \ref{s-sec: sim} contains details for the algorithm used in Section \ref{sec: sim} and includes additional figures for very sparse graphs. Finally, Section \ref{s-sec: app} assesses the asssumption of partial separability for the neuroimaging data example. 

\section{Proofs for Section ~\ref{sec: notions}}
\label{s-sec: notions}

%\subsection*{Proofs of Results from Section~\ref{sec: notions}} 

\noindent{\bf Proof of Theorem ~\ref{thm: PSequiv}.}\medskip
%\begin{proof}[of Theorem~\ref{thm: PSequiv}]

($1 \Leftrightarrow 2$) Suppose 1 holds and let $\lambda_{jl}^{\mc{G}} = \ipp{\mc{G}(e_{lj}\varphi_l)}{e_{lj}\varphi_l}$ be the eigenvalues of $\mc{G},$ and set $\Sigma_l = \sum_{j = 1}^p \lambda_{jl}^\mc{G}e_{lj} e_{lj}\T_.$  Since $(e_{lj}\varphi_l)\otimes_p (e_{lj}\varphi_l) = (e_{lj}e_{lj}\T)\varphi_l \otimes \varphi_l,$
$$
\mc{G} = \suml \sum_{j = 1}^p \lambda_{jl}^{\mc{G}} (e_{lj}\varphi_l)\otimes_p (e_{lj}\varphi_l) = \suml \left(\sum_{j = 1}^p \lambda_{jl}^{\mc{G}} e_{lj}e_{lj}\T\right) \varphi_l \otimes \varphi_l = \suml \Sigma_l \varphi_l \otimes \varphi_l,
$$
and 2 holds.  If 2 holds, let $\{e_{lj}\}_{j = 1}^p$ be an orthonormal basis for $\Sigma_l$. Then
\[
\begin{split}
\mc{G}(e_{lj}\varphi_l) &= \sum_{l' = 1}^\infty \varphi_{l' }\otimes \varphi_{l'}\left\{ \Sigma_{l'}\left(e_{lj}\varphi_l\right)\right\} = \sum_{l' = 1}^\infty \left(\Sigma_{l'}e_{lj}\right)\varphi_{l'}\otimes \varphi_{l'}(\varphi_l) \\
&= (\Sigma_{l}e_{lj})\varphi_l = \lambda_j^{\Sigma_l}e_{lj}\varphi_l,
\end{split}
\]
so $e_{lj}\varphi_l$ are the eigenfunctons of $\mc{G}.$

($2 \Leftrightarrow 3$) If 2 holds, set $\sigma_{ljj} = \left(\Sigma_l\right)_{jj},$ so the expression for $\mc{G}_{jj}$ clearly holds.  Next, for $l \neq l',$
$$
\cov\left(\ip{X_j}{\varphi_l},\ip{X_k}{\varphi_{l'}}\right) = \ip{\varphi_l}{\mc{G}_{jk}(\varphi_{l'})} = \ip{\varphi_l}{\sigma_{l'jk}\varphi_{l'}} = 0,
$$
so 3 holds. If 3 holds, then define $\sigma_{ljk} = \cov(\ip{X_j}{\varphi_l},\ip{X_k}{\varphi_l})$ for $j \neq k,$ and set $(\Sigma_l)_{jk} = \sigma_{ljk}$ ($j,k \in V$).  Then 2 clearly holds.

($1 \Leftrightarrow 4$) Suppose 1 holds.  By Theorem 7.2.7 of \cite{hsin:15}, and since $\sum_{j = 1}^p e_{lj}e_{lj}\T$ is the identity matrix,
$$
X = \suml \sum_{j = 1}^p \ipp{X}{e_{lj}\varphi_l}e_{lj}\varphi_l = \suml \sum_{j = 1}^p \left(\sum_{k = 1}^p e_{ljk}\ip{X_k}{\varphi_l}\right)e_{lj}\varphi_l = \suml \left(\sum_{j=1}^pe_{lj}e_{lj}\T\right)\theta_l \varphi_l,
$$
so that 4 holds.  If 4 holds, let $\Sigma_l$ be the covariance matrix of $\theta_l,$ and $\{e_{lj}\}_{j = 1}^p$ an orthonormal eigenbasis for $\Sigma_l.$  Then
\[
\begin{split}
\mc{G} &= \suml \sum_{l' = 1}^\infty \cov(\theta_l,\theta_{l'})\varphi_l \otimes \varphi_{l'} = \suml \var(\theta_l)\varphi_l \otimes \varphi_l = \suml \left(\sum_{j = 1}^p \lambda_j^{\Sigma_l}e_{lj}e_{lj}\T\right) \varphi_l \otimes \varphi_l \\
&= \suml \sum_{j = 1}^p \lambda_{j}^{\Sigma_l} (e_{lj}\varphi_l)\otimes_p (e_{lj}\varphi_l),
\end{split}
\]
so that $e_{lj}\varphi_l$ are the eigenfunctions of $\mc{G}.$
%\end{proof}
\medskip

%%%%%%%%%%
\noindent{\bf Proof of Theorem ~\ref{thm: psOptimal}.}\medskip
%\begin{proof}[of Theorem~\ref{thm: psOptimal}]

To prove 1, for any orthonormal basis $\{\tilde{\varphi}_l\}_{l = 1}^\infty$ of $L^2[0,1],$
$$
\sum_{l = 1}^L \sum_{j = 1}^p  \var\left(\ip{X_j}{\tilde{\varphi}_l}\right) = \sum_{l = 1}^L \sum_{j = 1}^p \ip{\mc{G}_{jj}(\tilde{\varphi}_l)}{\tilde{\varphi}_l} = p\sum_{l = 1}^L \ip{\mc{H}(\tilde{\varphi}_l)}{\tilde{\varphi}_l}.
$$
Because the eigenvalues of $\mc{H}$ have multiplicity one, its eigenspaces are one-dimensional, and equality is obtained if and only if the $\tilde{\varphi}_l$ span the first $L$ eigenspaces of $\mc{H},$ as claimed.

For the second claim, using part 2 of Theorem~\ref{thm: PSequiv}, we have $\mc{H} = p\inv \suml \trace(\Sigma_l)\varphi_l \otimes \varphi_l,$ and $\{\varphi_l\}_{l = 1}^\infty$ is an orthonormal eigenbasis of $\mc{H}.$ Since it was assumed that the eigenvalues of $\mc{H}$ are unique, and the $\Sigma_l$ are assumed to be ordered so that $\trace(\Sigma_l)$ is nonincreasing, this yields the spectral decomposition of $\mc{H}.$

%\end{proof}
\medskip

\noindent{\bf Proof of Theorem ~\ref{thm: ps-FGGM}.}\medskip
%\begin{proof}[of Theorem~\ref{thm: ps-FGGM}]

We have
\begin{align*}
&\cov\left\{X_j(s),X_k(t)\mid X_{-(j,k)}\right\} \\
&\hspace{1.5cm} = \cov\left\{\sum_{l = 1}^\infty \theta_{lj}\varphi_l(s), \sum_{l' = 1}^\infty \theta_{l'j} \varphi_{l'}(s)\mid X_{-(j,k)}\right\} \\
&\hspace{1.5cm} = \sum_{l,l' = 1}^\infty \cov\left\{\theta_{lj}, \theta_{l'k} \mid X_{-(j,k)}\right\} \varphi_l(s)\varphi_{l'}(t)\\
&\hspace{1.5cm} = \sum_{l = 1}^\infty \cov\left\{\theta_{lj},\theta_{lk} \mid \theta_{l,-(j,k)}\right\}\varphi_l(s)\varphi_l(t) \\
&\hspace{1.5cm} = \sum_{l = 1}^\infty \tilde{\sigma}_{ljk}\varphi_l(s)\varphi_l(t)
% \cov(X_j(s),X_k(t)| \{X_{-(j,k)}(u): u \in (0,1)\})
% = \cov\left(\sum_{l = 1}^\infty \theta_{lj}\varphi_l(s), \sum_{l' = 1}^\infty \theta_{l'j} \varphi_{l'}(s)|\{X_{-(j,k)}(u): u \in (0,1)\}\right)\\
% &\hspace{-17.0cm}= \sum_{l,l' = 1}^\infty \cov\left(\theta_{lj}, \theta_{l'k} | \{X_{-(j,k)}(u): u \in (0,1)\}\right) \varphi_l(s)\varphi_{l'}(t)
% = \sum_{l = 1}^\infty \cov\left(\theta_{lj},\theta_{lk} | \{\theta_{l,-(j,k)}\}\right)\varphi_l(s)\varphi_l(t) \\
% &\hspace{-17.0cm} = \sum_{l = 1}^\infty \tilde{\sigma}_{ljk}\varphi_l(s)\varphi_l(t)
\end{align*}
Convergence of the sum in the last line follows since $\suml \tilde{\sigma}_{ljk}^2 \leq \suml \sigma_{ljj}\sigma_{lkk} < \infty.$
%\end{proof}
\medskip

\noindent{\bf Proof of Corollary ~\ref{cor: PSedges}.}\medskip
%\begin{proof}[of Corollary~\ref{cor: PSedges}]

The result follows immediately, since $(j,k) \notin E$ if and only if $\tilde{\sigma}_{ljk} = 0$ for all $l \in \mathbb{N},$ which holds if and only if $(j,k) \notin E_l$ for all $l \in \mathbb{N}.$
%\end{proof}
\medskip

\begin{proposition}
\label{prop: markov}
Let $\theta_{lj} = \ip{X_j}{\varphi_l},$ and $E_l$ be the edge set of the Gaussian graphical model for $\theta_l.$ Suppose that the following properties hold for each $j,k \in V$ and $l \in \mathbb{N}.$
\begin{itemize}
    \item $E(\theta_{lj}|X_{-(j,k)}) = E(\theta_{lj}| \theta_{l, -(j,k)})$ and $E(\theta_{lj}\theta_{lk}|X_{-(j,k)}) = E(\theta_{lj}\theta_{lk}| \theta_{l, -(j,k)})$
    \item $(j,k) \notin E_l$ and $(j,k)\notin E_l'$ implies $\cov(\theta_{lj},\theta_{l'k}|X_{-(j,k)}) = 0.$
\end{itemize}
Then $E = \bigcup_{l=1}^\infty E_l.$
\end{proposition}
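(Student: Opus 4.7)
The plan is to establish $E = \bigcup_l E_l$ by proving both inclusions separately, with the central tool being a double Fourier expansion of the conditional cross-covariance kernel. Writing $X_j(s) = \sum_l \theta_{lj}\varphi_l(s)$ and using bilinearity of conditional covariance formally (justified by $\sum_l \var(\theta_{lj}) < \infty$), one obtains
\begin{equation*}
C_{jk}(s,t) \;=\; \sum_{l,l'=1}^\infty c_{ll'jk}\,\varphi_l(s)\varphi_{l'}(t),\qquad c_{ll'jk} = \cov\bigl(\theta_{lj},\theta_{l'k}\,\big|\,X_{-(j,k)}\bigr).
\end{equation*}
Since $\{\varphi_l\otimes\varphi_{l'}\}_{l,l'}$ is an orthonormal basis of $L^2([0,1]^2)$, the condition $(j,k)\notin E$ is equivalent to the vanishing of every coefficient $c_{ll'jk}$, and this reduction is what lets me translate between the functional object $C_{jk}$ and scalar conditional covariances.

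For the inclusion $\bigcup_l E_l \subseteq E$, I would argue by contrapositive. Assume $(j,k)\notin E$; then in particular the diagonal coefficients satisfy $c_{lljk}=0$ for every $l$. Applying the first hypothesis to each of $E(\theta_{lj}|X_{-(j,k)})$, $E(\theta_{lk}|X_{-(j,k)})$ (the latter via the symmetric role of $j$ and $k$), and $E(\theta_{lj}\theta_{lk}|X_{-(j,k)})$, one may rewrite
\begin{equation*}
\cov(\theta_{lj},\theta_{lk}\,|\,X_{-(j,k)}) \;=\; \cov(\theta_{lj},\theta_{lk}\,|\,\theta_{l,-(j,k)}),
\end{equation*}
by substituting these conditional expectations into the standard covariance-expectation identity. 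Hence the right-hand side vanishes, which is precisely the defining condition for $(j,k)\notin E_l$ in the Gaussian graphical model of $\theta_l$, and this holds for every $l$.

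For the reverse inclusion $E \subseteq \bigcup_l E_l$, I would again take the contrapositive: suppose $(j,k)\notin E_l$ for every $l$. Then the second hypothesis applied pairwise yields $c_{ll'jk} = 0$ for all $l,l'$, whence $C_{jk}(s,t) \equiv 0$ by the expansion above, so $(j,k)\notin E$.

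The main obstacle is not computational but conceptual: reconciling two different conditioning $\sigma$-algebras, one generated by the infinite-dimensional subprocess $X_{-(j,k)}$ and the other by the finite-dimensional subvector $\theta_{l,-(j,k)}$. Assumption 1 is engineered precisely to allow this substitution for diagonal covariance terms, while assumption 2 is engineered to kill the off-diagonal cross-basis terms that assumption 1 says nothing about. Neither inclusion is available from a single hypothesis, and the two conditions together form a minimally sufficient package that makes the proof reduce to orthogonal expansion plus bookkeeping.
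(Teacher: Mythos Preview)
Your argument is correct and matches the paper's: both expand $C_{jk}$ in the tensor basis $\{\varphi_l\otimes\varphi_{l'}\}$, equate $(j,k)\notin E$ with vanishing of every coefficient, and invoke assumption~1 for diagonal terms and assumption~2 for off-diagonal terms. The only slip is in your reverse-inclusion paragraph, where you attribute $c_{lljk}=0$ to assumption~2 alone; as your own closing remarks correctly note, the diagonal case requires assumption~1 together with $(j,k)\notin E_l$, which is exactly how the paper handles it.
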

%\begin{proof}
\noindent{\bf Proof of Proposition ~\ref{prop: markov}.}\medskip

In general, we may write $X_j = \suml \theta_{lj}\varphi_l,$ though the coefficients $\theta_{lj} = \ip{X_j}{\varphi_l}$ need not be uncorrelated across $l$ when $\mc{G}$ is not partially separable.  Then, under the first assumption of the proposition,
\[
\begin{split}
    \cov\left\{X_j(s),X_k(t) \mid X_{-(j,k)}\right\} &= \suml \cov\{\theta_{lj},\theta_{lk} \mid X_{-(j,k)}\}\varphi_l(s)\varphi_l(t) \\ & \hspace{1cm} + \sum_{l \neq l'} \cov\{\theta_{lj},\theta_{l'k} \mid X_{-(j,k)}\}\varphi_l(s)\varphi_{l'}(t) \\
    &= \suml \cov\{\theta_{lj}, \theta_{lk} \mid \theta_{l, -(j,k)}\}\varphi_l(s)\varphi_l(t) \\
    &\hspace{1cm} + \sum_{l \neq l'} \cov\{\theta_{lj},\theta_{l'k} \mid X_{-(j,k)}\}\varphi_l(s)\varphi_{l'}(t).
\end{split}
\]
Now, since $\{\varphi_l \otimes \varphi_{l'}\}_{l,l'=1}^\infty$ is an orthonormal basis of $L^2([0,1]^2),$ $(j,k) \notin E$ if and only if all of the coefficients in the above expansion are zero.  Hence, if $(j,k) \notin E,$ we have $\cov\{\theta_{lj},\theta_{lk} \mid \theta_{l, -(j,k)}\} = 0$ for all $l \in \mathbb{N},$ hence $(j,k) \notin \bigcup_{l = 1}^\infty E_l.$ On the other hand, if $(j,k) \notin E_l$ for all $l,$ the second assumption of the proposition implies that $\cov\{\theta_{lj}, \theta_{l'k}\mid X_{-(j,k)}\} = 0$ for all $l \neq l',$ whence all of the coefficients in the above display are zero, and $(j,k) \notin E.$
%\end{proof}
\medskip

\section{Proof of results from Section~\ref{sec: graphEst}}
\label{s-sec: graphEst}

Recall that $\norm{\cdot}$ is the ordinary norm on $\Lt$.  For a linear operator $\mc{A}$ on $\Lt$ and an orthonormal basis $\{\phi_l\}_{l = 1}^\infty$ for this space, the Hilbert-Schmidt norm of $\mc{A}$ is $\normHS{\mc{A}} = \left(\suml \norm{\mc{A}(\phi_l)}^2\right)^{1/2},$ where this definition is independent of the chosen orthonormal basis.  In particular, for any $f \in \Lt,$ $\normHS{f \otimes f} = \norm{f}^2.$  
\begin{lemma}
\label{lma: cov_concIneq}
Suppose Assumption \ref{asm: subGauss} holds, and let $\hat{\mu}_j$ and $\hat{\mc{G}}_{jk}$ ($j,k \in V$) be the mean and covariance estimates in \eqref{eq: mean_cov_fo} for a sample of fully observed functional data $X_i \sim X.$  Then there exist constants $\tilde{C}_1,\, \tilde{C}_2, \tilde{C}_3 > 0$ such that, for any $ 0 < \delta \leq \tilde{C}_3$ and for all $j,k \in V$,
$$
\pr\left(\normHS{\hat{\mc{G}}_{jk} - \mc{G}_{jk}} \geq \delta\right) \leq \tilde{C}_2 \exp\left(-\tilde{C}_1n\delta^2\right).
$$
\end{lemma}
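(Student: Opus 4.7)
The natural starting point is the decomposition
\begin{equation*}
\hat{\mc{G}}_{jk} - \mc{G}_{jk} = \underbrace{\frac{1}{n}\sum_{i = 1}^n \left\{(X_{ij} - \mu_j)\otimes(X_{ik} - \mu_k) - \mc{G}_{jk}\right\}}_{\Delta_1} \,-\, \underbrace{(\hat{\mu}_j - \mu_j)\otimes(\hat{\mu}_k - \mu_k)}_{\Delta_2},
\end{equation*}
so it suffices to show that each of $\normHS{\Delta_1}$ and $\normHS{\Delta_2}$ exceeds $\delta/2$ with probability at most $C\exp(-cn\delta^2)$, uniformly in $j,k \in V$.

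The term $\Delta_2$ reduces to a concentration inequality for the empirical mean of the iid centered $\Lt$-valued random elements $X_{ij}-\mu_j$. Writing $X_j - \mu_j = \suml \theta_{lj}\varphi_l$ where $\{\varphi_l\}$ is the eigenbasis of $\mc{H}$, and invoking the sub-Gaussianity of $\theta_{lj}/\sigma_{ljj}^{1/2}$ together with $\sum_l \sigma_{ljj}\leq M$, one shows that $X_j - \mu_j$ is a sub-Gaussian random element of $\Lt$ with weak variance bounded by $\varsigma^2 M$. A Hilbert-space Hoeffding / Pinelis bound then yields $\pr(\norm{\hat{\mu}_j - \mu_j} \geq \sqrt{\delta/2}) \leq 2\exp(-c_1 n\delta)$ for $\delta$ small; since $\normHS{\Delta_2} = \norm{\hat{\mu}_j - \mu_j}\norm{\hat{\mu}_k - \mu_k}$, applying this bound to both factors and taking a union gives the required tail for $\Delta_2$.

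The main work lies in $\Delta_1$. Set $Y_i = (X_{ij}-\mu_j)\otimes(X_{ik}-\mu_k) - \mc{G}_{jk}$, which are iid centered Hilbert-Schmidt operators. I would verify two ingredients uniformly in $j,k$: (i) $E\normHS{Y_i}^2 \leq C M^2$, obtained by expanding $\normHS{(X_j-\mu_j)\otimes(X_k-\mu_k)}^2 = \norm{X_j-\mu_j}^2\norm{X_k-\mu_k}^2$ and applying Cauchy--Schwarz together with the uniform bound $E\norm{X_j-\mu_j}^4 \leq C M^2$ that follows from the sub-Gaussianity assumption and a Hanson-Wright type estimate on $\sum_l \theta_{lj}^2$; and (ii) $\normHS{Y_i}$ has finite Orlicz $\psi_1$ norm bounded in terms of $\varsigma^2$ and $M$, since it is dominated by a product of $\psi_2$ variables. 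With (i) and (ii) in hand, the desired HS-norm concentration follows from a Bernstein-type inequality for sums of iid sub-exponential Hilbert-space valued random elements (e.g., Yurinskii's or Pinelis' inequality in its unbounded form), which gives
\begin{equation*}
\pr\left(\normHS{\Delta_1} \geq \delta/2\right) \leq 2\exp\left(-c_2 n\min(\delta^2, \delta)\right);
\end{equation*}
restricting to $\delta \leq \tilde{C}_3$ places this in the sub-Gaussian regime $\min(\delta^2,\delta) = \delta^2$, producing the claimed exponent. Combining the two bounds with adjusted constants yields the lemma.

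The principal obstacle is controlling $\Delta_1$ despite the lack of almost-sure boundedness of $\normHS{Y_i}$: the cross-products of sub-Gaussians are only sub-exponential, and the relevant concentration inequality in a Hilbert space must either be invoked in its unbounded form or obtained by truncation of $\normHS{Y_i}$ at scale $\log n$ followed by separate tail control. A secondary technical point is ensuring that the constants in both (i) and (ii) do not depend on $j,k$, which is where the uniform bound $\sup_{j \in V}\suml \sigma_{ljj} < M$ in Assumption~\ref{asm: subGauss} is used in an essential way.
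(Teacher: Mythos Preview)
Your proposal is correct and follows essentially the same route as the paper: the identical decomposition into the centered-covariance average and the mean-correction term, followed by a Hilbert-space Bernstein inequality (the paper invokes Theorem~2.5 of Bosq, which is precisely the unbounded-form Bernstein inequality you allude to, with hypothesis $E\normHS{Y_i}^b \leq (b!/2)L_1 L_2^{b-2}$). The only cosmetic difference is in verifying the sub-exponential moment growth of $\normHS{Y_i}$: where you invoke Orlicz-norm calculus and Hanson--Wright, the paper writes $\normHS{Y_{ijk}-\mc{G}_{jk}}^2$ as a convex combination $(M_jM_k)\sum_{l,l'} (\sigma_{ljj}\sigma_{l'kk}/M_jM_k)(\xi_{ilj}\xi_{il'k}-\delta_{ll'}r_{ljk})^2$ and applies Jensen's inequality directly to obtain the required $b$-th moment bound from the sub-Gaussian moment estimates on the standardized scores.
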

\noindent{\bf Proof of Lemma ~\ref{lma: cov_concIneq}.}\medskip

%\begin{proof}[of Lemma~\ref{lma: cov_concIneq}]
Without loss of generality, assume $\mu_j(t) = E\left\{X_{1j}(t)\right\} \equiv 0$ and set $Y_{ijk} = X_{ij}\otimes X_{ik},$ $\overline{Y}_{jk} = n\inv \sum_{i = 1}^n Y_{ijk}.$  Then the triangle inequality implies
\begin{equation}
\label{eq: cov_split}
\pr\left(\normHS{\hat{\mc{G}}_{jk} - \mc{G}_{jk}} \geq \delta\right) \leq \pr\left(\normHS{\overline{Y}_{jk} - \mc{G}_{jk}} \geq \frac{\delta}{2}\right)
+ 2\max_{j \in V} \pr\left(\lVert \hat{\mu}_j \rVert^2 \geq \frac{\delta}{2}\right).
\end{equation}

We begin with the first term on the right-hand side of \eqref{eq: cov_split}, and will apply Theorem 2.5 of \ci{bosq:00}.  Specifically, we need to find $L_1, L_2 > 0$ such that
$$
E\left(\normHS{Y_{ijk} - \mc{G}_{jk}}^b\right) \leq \frac{b!}{2}L_1L_2^{b-2},\quad (b = 2,3,\ldots),
$$
which will then imply that
\begin{equation}
\label{eq: bound1}
\pr\left(\normHS{\overline{Y}_{jk} - \mc{G}_{jk}} \geq \frac{\delta}{2}\right) \leq 2\exp\left(-\frac{n\delta^2}{8L_1 + 4L_2\delta}\right).
\end{equation}
Let $M_j = \sum_{l = 1}^\infty \sigma_{ljj} < M$ and write $X_{ij} = \sum_{l = 1}^\infty \sigma_{ljj}^{1/2}\xi_{ilj}\varphi_l,$ where $\xi_{ilj} = \theta_{ilj}/\sigma_{ljj}^{1/2}$ are standardized random variables with mean zero and variance one, independent across $i$.  Then, for any $b = 2,3,\ldots,$ by Jensen's inequality,
\begin{align*}
\normHS{Y_{ijk} - \mc{G}_{jk}}^b &= \left\{\sum_{l,l' = 1}^\infty \sigma_{ljj}\sigma_{l'kk}\left(\xi_{ilj}\xi_{il'k} - \delta_{ll'}r_{ljk}\right)^2\right\}^{b/2} \\
& = (M_jM_k)^{b/2}\left\{\sum_{l,l' = 1}^\infty \frac{\sigma_{ljj}\sigma_{l'kk}}{M_jM_k}\left(\xi_{ilj}\xi_{il'k} - \delta_{ll'}r_{ljk}\right)^2\right\}^{b/2} \\
&\leq (M_jM_k)^{b/2 - 1}\sum_{l,l'= 1}^\infty \sigma_{ljj}\sigma_{l'kk}\left|\xi_{ilj}\xi_{il'k} - \delta_{ll'}r_{ljk}\right|^b,
\end{align*}
where $\delta_{ll'}$ is the Kronecker delta.  By Assumption~\ref{asm: subGauss}, one has $E(|\xi_{ilj}|^{2b}) \leq 2(2\varsigma^2)^bb! $ where, without loss of generality, we may assume $\varsigma^2 \geq 1$.  The fact that $|r_{ljk}| < 1$ combined with the $C_r$ inequality implies that 
$$
\sup_{l,l'} E\left(\left|\xi_{ilj}\xi_{il'k} - \delta_{ll'}r_{ljk}\right|\right) \leq 2^{b-1}\sup_{l} \bigg\{ E(|\xi_{ilj}|^{2b}) + 1\bigg\} \leq 2^{b+1}\{(2\varsigma^2)^b b!\}.
$$
Thus,
$$
E\left(\normHS{Y_{ijk} - \mc{G}_{jk}}^b\right) \leq \frac{b!}{2}(4M\varsigma^2)^{b-2}(8M\varsigma^2)^2, 
$$
and we can take $L_2 = 4M\varsigma^2$ and $L_1 = 2L_2^2$ in \eqref{eq: bound1}.

By similar reasoning, we can find constants $\tilde{L}_1,\, \tilde{L}_2 > 0$ such that 
$$
E(\lVert X_{1j} \rVert^b) \leq \frac{b!}{2}\tilde{L}_1\tilde{L}_2^{b-2},\quad (b= 2,3,\ldots),
$$
whence
\begin{equation}
\label{eq: bound2}
\pr\left(\lVert \hat{\mu}_j \rVert \geq \frac{\delta}{2} \right) \leq 2\exp\left(-\frac{n\delta^2}{8\tilde{L}_1 + 4\tilde{L}_2\delta}\right)
\end{equation}
Now, setting $\tilde{C}_3\leq 2$ and $0 < \delta < \tilde{C}_3,$ we find that $\pr(\lVert \hat{\mu}_j\rVert^2 \geq \delta/2)$ is also bounded by the right hand side of \eqref{eq: bound2}, since $\delta /2 < 1$.  Hence, choosing $\tilde{C}_1\inv = \max\{8L_1 + 4L_2\tilde{C}_3,8\tilde{L}_1 + 4\tilde{L}_2\tilde{C}_3\}$ and $\tilde{C}_2 = 6,$ \eqref{eq: cov_split}--\eqref{eq: bound2} together imply the result.
\medskip
%\end{proof}

% Thm 4
\noindent{\bf Proof of Theorem ~\ref{thm: concIneq}.}\medskip
%\begin{proof}[of Theorem~\ref{thm: concIneq}]

Recall that $\sigma_{ljk} = \ip{\mc{G}_{jk}(\varphi_l)}{\varphi_l}$ and $s_{ljk} = \ip{\mc{\hat{G}}_{jk}(\hat{\varphi}_l)}{\hat{\varphi}_l}.$  Thus, by Lemma 4.3 of \ci{bosq:00} and Assumption~\ref{asm: subGauss},
\begin{align}
|s_{ljk} - \sigma_{ljk}| &\leq |\ip{\mc{G}_{jk}(\varphi_l)}{\varphi_l - \hat{\varphi}_l}| + |\ip{\mc{G}_{jk}(\varphi_l - \hat{\varphi}_l)}{\hat{\varphi}_l}| + |\ip{[\mc{G}_{jk} - \hat{\mc{G}}_{jk}](\hat{\varphi}_l)}{\hat{\varphi}_l}| \nonumber \\
&\leq \lVert \mc{G}_{jk}(\varphi_l) \rVert\lVert \varphi_l - \hat{\varphi}_l\rVert + \lVert \mc{G}_{jk}(\varphi_l - \hat{\varphi}_l)\rVert\lVert \hat{\varphi}_l\rVert + \lVert [\mc{G}_{jk} - \hat{\mc{G}}_{jk}](\hat{\varphi}_l)\rVert \lVert \hat{\varphi}_l \rVert \nonumber \\
&\leq 2M\tau_l\normHS{\hat{\mc{H}} - \mc{H}} + \normHS{\hat{\mc{G}}_{jk} - \mc{G}_{jk}} \label{eq: bound3}.
\end{align}
Now, by applying similar reasoning as in the proof of Lemma~\ref{lma: cov_concIneq}, there exist $C_1^*,\, C_2^*, \, C_3^* > 0$ such that, for all $0 < \delta \leq C_3^*,$ 
$$
\pr\left(\normHS{\hat{\mc{H}} - \mc{H}} \geq \delta\right) \leq C_2^* \exp\left\{-C_1^* n \delta^2\right\}.
$$

Next, let $\tau_{\mathrm{min}} = \min_{l \in \mathbb{N}}\tau_l > 0,$ and $\tilde{C}_j$ as in Lemma~\ref{lma: cov_concIneq}. Set $C_3 = \min\{4M\tau_{\mathrm{min}}C_3^*, 2\tilde{C}_3\}$ and observe that $0 < \delta < C_3$ implies that $\delta(4M\tau_l)\inv < C_3^*$ ($l \in \mathbb{N}$) and $\delta/2 < \tilde{C}_3.$ Hence, by applying \eqref{eq: bound3}, when $0 < \delta < C_3,$ for any $l \in \mathbb{N}$ and any $j,k\in V,$
\[
\begin{split}
\pr\left(|s_{ljk} - \sigma_{ljk}| > \delta\right) &\leq \pr\left(\normHS{\hat{\mc{H}} - \mc{H}} > \frac{\delta}{4M\tau_l}\right) + \pr\left(\normHS{\hat{\mc{G}_{jk}} - \mc{G}_{jk}} > \frac{\delta}{2}\right) \\
&\leq C_2^*\exp\{-C_1^*n\delta^2(4M\tau_l)^{-2}\} + \tilde{C}_2\exp\{-\tilde{C}_1n(\delta/2)^2\}.
\end{split}
\]
Hence, setting $C_2 = C_2^* + \tilde{C}_2$ and $C_1 = \min\{C_1^*(4M)^{-2},\tilde{C}_1\min(\tau_{\mathrm{min}}^2, 1)/4\},$ the result holds.

%\end{proof}
\medskip

% Proof of Corollary 2
\noindent{\bf Proof of Corollary ~\ref{cor: concIneqCor}.}\medskip
%\begin{proof}[of Corollary~\ref{cor: concIneqCor}]

Define $\hat{c}_{lj} = \sqrt{s_{ljj}/\sigma_{ljj}}$ and, for $\epsilon \in (0,1),$ the events
$$
A_{lj}(\epsilon) = \left\{|1 - \hat{c}_{lj}| \leq \epsilon\right\} \quad (l \in \mathbb{N}, j \in V).
$$
Note that
$$
|\hat{r}_{ljk} - r_{ljk}| \leq \frac{|s_{ljk} - \sigma_{ljk}|}{\hat{c}_{lj}\hat{c}_{lk}\pi_l} + |1 - (\hat{c}_{lj}\hat{c}_{lk})\inv|.
$$
Suppose $0 < \delta \leq \epsilon.$  Then
\begin{align}
\pr\left(|\hat{r}_{ljk} - r_{ljk}| \geq 2\delta\right) &\leq 2 \max_{j \in V} \pr\{A_{lj}(\epsilon)^c\} + \pr\left[A_{lj}(\epsilon) \cap A_{lk}(\epsilon) \cap \left\{|\hat{r}_{ljk} - r_{ljk}| \geq 2\delta\right\}\right]  \nonumber \\
%&\leq \pr\{A_l(\epsilon)^c\} + \pr\left[A_l(\epsilon) \cap \left\{\frac{1}{\sigma_{ljj}^{1/2}\sigma_{lkk}^{1/2}\hat{c}_{lj}\hat{c}_{lk}}\left|s_{ljk} - \sigma_{ljk}\right| \geq \delta\right\}\right]  \\
%&\hspace{1cm} + \pr\left[A_l(\epsilon) \cap \left\{|r_{ljk}|\left|(\hat{c}_{lj}\hat{c}_{lk})\inv - 1\right| \geq \delta\right\}\right]  \\
&\leq 2\max_{j \in V} \pr\{A_{lj}(\epsilon)^c\} + \pr\left\{|s_{ljk} - \sigma_{ljk}| \geq \delta(1-\epsilon)^2\pi_l\right\} \nonumber  \\
&\hspace{1cm} + \pr\left\{|1 - \hat{c}_{lj}\hat{c}_{lk}| \geq \delta(1-\epsilon)^2\right\} \label{eq: bound4}.
\end{align}

We next obtain bounds for the first and last terms of the last line above. Let $C_j$ be as in Theorem~\ref{thm: concIneq}, and $\overline{\pi} = \max_{l \in \mathbb{N}}\pi_l$. If $D_3 = \min(\epsilon,C_3\overline{\pi}\inv)$ and $0 < \delta \leq D_3,$ then
\[
\begin{split}
\pr \{A_{lj}(\epsilon)^c\}&\leq \pr(|1 - \hat{c}_{lj}^2| > \epsilon) = \pr(|s_{ljj} - \sigma_{ljj}| > \epsilon \sigma_{ljj}) \leq \pr(|s_{ljj} - \sigma_{ljj}| > \delta \pi_l) \\
&\leq C_2\exp\{-C_1 n \tau_l^{-2}\pi_l^2\delta^2\}.
\end{split}
\]
Next, for any $a,b,c > 0$ such that $|1-ab| \geq 3c,$ we must have either $|1-a| \geq c$ or $|1-b| \geq c.$ By Theorem~\ref{thm: concIneq},
\[
\begin{split}
     \pr\left\{|1 - \hat{c}_{lj}| > \frac{\delta(1 - \epsilon)^2}{3}\right\}  &\leq \pr\left\{|s_{ljj} - \sigma_{ljk}| > \frac{\delta(1 - \epsilon)^2\pi_l}{3}\right\} \\
    &\leq C_2\exp\left\{-C_1 n \pi_l^2 \delta^2\tau_l^{-2}(1 - \epsilon)^4/9\right\}.
\end{split}
\]
Putting these facts together, \eqref{eq: bound4} becomes
\[
\begin{split}
\pr\left(|\hat{r}_{ljk} - r_{ljk}| > 2\delta\right) &\leq 2C_2\exp\{-C_1n\tau_l^{-2}\pi_l^2\delta^2\} + C_2\exp\{-C_1(1 - \epsilon)^4n\tau_l^{-2}\pi_l^2\delta^2\} \\
&\hspace{1cm} + 2C_2\exp\left\{-C_1 n \pi_l^2 \delta^2\tau_l^{-2}(1 - \epsilon)^4/9\right\}.
\end{split}
\]
Taking $D_3$ as already stated, $D_2 = 5C_2$ and $D_1 = C_1(1-\epsilon)^4/9,$ the result holds.
%\end{proof} 
\medskip

\subsection{Lemma~\ref{lma: estimation} and Proof of Theorem~\ref{thm: selection}}
\label{ss: selProofs}

We introduce some additional notation.  First, let $\norm{\cdot}_E$ denote the usual Euclidean norm on $\R^m$ of any dimension, where the dimension will be clear from context.  Recall that, for a $p\times p$ matrix $\Delta,$ $\left\VERT \Delta\right\VERT_\infty = \max_{j = 1,\ldots,p} \sum_{k = 1}^p |\Delta_{jk}|,$ $\left\VERT \Delta\right\VERT_1 = \left\VERT \Delta\T\right\VERT_\infty,$ and define the vectorized norm $\norm{\Delta}_\infty = \max_{j,k=1,\ldots,p} |\Delta|_{jk}.$  Additionally, for $S \subset V\times V,$ $\Delta_S$ is the vector formed by the elements $\Delta_{jk},$ $(j,k) \in S.$  Finally, with $D_j$ and $m_l$ as in Corollary~\ref{cor: concIneqCor}, define functions
\begin{equation}
    \label{eq: tailFuns}
    \overline{n}(\delta; c) = \frac{\log(D_2 c)}{D_1\delta^2}, \quad \overline{\delta}(n; c) = \left\{\frac{\log(D_2 c)}{D_1 n}\right\}^{1/2} \quad (c,\delta > 0, n \in \mathbb{N}).
\end{equation}

Before proceeding to the results, we describe our primal-dual witness approach as a modification of that of \ci{ravikumar2011high} to account for the presence of the group Lasso penalty in \eqref{eq: penalty}.  Of importance are the sub-differentials of each of the penalty terms in \eqref{eq: penalty}, omitting the tuning parameter factor, evaluated at a generic set of inputs $(\Upsilon_1,\ldots,\Upsilon_L)$.  Let $\upsilon_{ljk} = (\Upsilon_l)_{jk}.$  The sub-differential contains a restricted set of stacked matrices $Z = (Z_l)_{l = 1}^L,$ $Z_l \in \R^{p\times p}$.  For the Lasso penalty, these satisfy
\begin{equation}
    \label{eq: LassoSD}
    (Z_l)_{jk} = \left\{\begin{array}{ll} 0 & \textrm{if } j = k \\ \mathrm{sgn}(\upsilon_{ljk}) & \textrm{if } j \neq k,\, \upsilon_{ljk} \neq 0 \\ \in [-1,1] & \textrm{if } j \neq k, \, \upsilon_{ljk} = 0. \end{array}\right. 
\end{equation}
In the case of the group penalty, define $\upsilon_{\cdot jk} = (\upsilon_{1jk},\ldots,\upsilon_{Ljk})\T$ and $z_{\cdot jk} = \{(Z_1)_{jk},\ldots,(Z_L)_{jk}\}\T$.  Then, for the group penalty, $Z$ must satisfy
\begin{equation}
    \label{eq: groupSD}
    z_{\cdot jk} = \left\{\begin{array}{ll} 0 & \textrm{if } j = k \\ \frac{\upsilon_{\cdot jk}}{\norm{\upsilon_{\cdot jk}}_E} & \textrm{if } j \neq k, \, \norm{\upsilon_{\cdot jk}}_E \neq 0\\ \in \{y \in \R^L: \norm{y}_E \leq 1\} & \textrm{if } j \neq k, \norm{\upsilon_{\cdot jk}}_E = 0.\end{array}\right.
\end{equation}

We construct the so-called primal-dual witness solutions $\{(\tilde{\Xi}_l, \tilde{Z}_l): l = 1,\ldots,L\}$ as follows.

%\begin{enumerate}[a)]
\begin{enumerate}[label=(\alph*)]
    \item With $\overline{E}_l = E_l \cup (1,1) \cup \cdots \cup (p,p)$, define
\begin{align}
\label{eq: Xi_primal}
(\tilde{\Xi}_1, \ldots,\tilde{\Xi}_L) &= \arg \min_{\Upsilon_l \succ0, \Upsilon_l = \Upsilon_l^\T, \Up_{l, \overline{E}_l^c} = 0}  \sum_{l = 1}^L \left\{\trace(\hat{R}_l\Upsilon_L) - \log(|\Upsilon_l|)\right\} \nonumber \\
&\hspace{1cm} + \gamma\left\{ \alpha\sum_{l = 1}^L\sum_{j \neq k} |\upsilon_{ljk}| + (1-\alpha)\sum_{j \neq k}\left(\sum_{l = 1}^L \upsilon_{ljk}^2\right)^{1/2}\right\}
\end{align}

\item Select elements $\tilde{Z}_1$ and $\tilde{Z}_2$ of the Lasso and group penalty sub-differentials evalauated at $(\tilde{\Xi}_1,\ldots,\tilde{\Xi}_L)$, respectively, that satisfy the optimality condition
\begin{equation}
\label{eq: optimal}
\left[\hat{R}_l - \tilde{\Xi}_l\inv + \gamma\left\{\alpha \tilde{Z}_{1l} + (1 - \alpha)\tilde{Z}_{2l}\right\}\right]_{\overline{E}_l} = 0 \quad (l = 1,\ldots,L).
\end{equation}

\item Update
\begin{equation}
    \label{eq: update}
    \left(\tilde{Z}_{1,l}\right)_{jk} = \frac{1}{\gamma\alpha}\left\{\left(\tilde{\Xi}_l\inv\right)_{jk} - \hat{r}_{ljk}\right\}, \quad \left(\tilde{Z}_{2,l}\right)_{jk} = 0, \quad \quad \{(j, k) \in \overline{E}_l^c, \, l = 1,\ldots,L\}.
\end{equation}

\item Verify strict dual feasibility condition
\begin{equation}
    \label{eq: duality}
    \left|\left(\tilde{Z}_{1,l}\right)_{jk}\right| < 1, \quad \{(j, k) \in \overline{E}_l^c, \, l = 1,\ldots,L\}.
\end{equation}

\end{enumerate}

\begin{lemma}
\label{lma: estimation}
Suppose Assumptions~\ref{asm: lambda}--\ref{asm: irrepresentability} hold and that $\gamma = 8 \epsilon_L\inv\overline{\delta}(n;L^{\varrho-1}p^\varrho)$ for some $\varrho > 2.$ If the sample size satisfies the lower bound
\begin{equation}
    \label{eq: nlb1}
    n > \overline{n}\left(\min\left\{\mathfrak{a}_L, \mathfrak{b}_L\right\}; L^{\varrho-1}p^\varrho \right),
\end{equation}
then, with probability at least $1 - (Lp)^{2 - \varrho},$ the bounds
\begin{equation}
\label{eq: lma4conc}
\lVert \hat{\Xi}_l - \Xi_l\rVert_\infty \leq 2 \kappa_{\Psi_l}\left(m_l\inv + 8 \epsilon_L\inv\right)\overline{\delta}(n;L^{\varrho-1}p^\varrho)
\end{equation}
hold simultaneously for $l = 1,\ldots,L.$
%    \item $\hat{E}_l \subset E_l$ ($l = 1,\ldots,L$) with $(j,k) \in \hat{E}_l$ whenever $$\lvert \Xi_{ljk}\rvert >2 \kappa_{\Psi_l}\left(\tau_l^{-2}\pi_l^2 + 8 \epsilon_L\inv\right)\overline{\delta}(n;L^{\tau-1}p^\tau).$$
\end{lemma}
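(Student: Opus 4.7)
My plan is to reduce the lemma to an application of the primal--dual witness analysis for the single-block graphical lasso of \cite{ravikumar2011high}, applied separately to each level $l = 1,\ldots,L$, after a single uniform concentration step for the sample correlation entries $\hat{r}_{ljk}$. Because with $\alpha = 1$ the objective \eqref{eq: Xi_est} decouples across $l$, each $\hat{\Xi}_l$ is the solution of a standard correlation graphical lasso, so the only new work is bookkeeping the dependence of the error bound on $l$ and taking a union bound at the end.

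The first step is to construct the good event on which all deterministic bounds operate. Set $\delta_n = \overline{\delta}(n;L^{\varrho-1}p^\varrho)$. Corollary~\ref{cor: concIneqCor} gives $\pr(|\hat{r}_{ljk} - r_{ljk}| \geq m_l^{-1}\delta_n) \leq D_2\exp(-D_1 n \delta_n^2)$ for each $l$ and each pair $(j,k)$, provided that $m_l^{-1}\delta_n \leq D_3$. The latter holds uniformly in $l$ once $\delta_n \leq D_3\min_l m_l = \mathfrak{a}_L$, which is exactly the $\mathfrak{a}_L$ part of the sample size hypothesis \eqref{eq: nlb1}. A union bound over the $Lp^2$ triples $(l,j,k)$ then yields
\[
\pr\left(\max_{l \leq L,\, j,k \in V}|\hat{r}_{ljk} - r_{ljk}| > m_l^{-1}\delta_n\right) \leq L p^2 D_2\exp(-D_1 n \delta_n^2) = (Lp)^{2-\varrho},
\]
by construction of $\overline{\delta}$. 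Call the complementary event $\mc{E}_n$.

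Working on $\mc{E}_n$, the second step applies the primal--dual witness construction of \cite{ravikumar2011high} to each correlation graphical lasso subproblem indexed by $l$. Two deterministic requirements must be verified, uniformly in $l$: a noise-dominance condition on the penalty, namely $\gamma \gtrsim \eta_l^{-1}\|\hat{R}_l - R_l\|_\infty$, which under the choice $\gamma = 8\epsilon_L^{-2}\delta_n$ and the bound $\|\hat{R}_l - R_l\|_\infty \leq m_l^{-1}\delta_n$ available on $\mc{E}_n$ is ensured because $\epsilon_L \leq \eta_l m_l$; and a small-error requirement that the combined quantity $(m_l^{-1} + 8\epsilon_L^{-1})\delta_n$ fits within the tolerance $\{6 y_l \max(\kappa_{\Psi_l}^2\kappa_{R_l}^3, \kappa_{\Psi_l}\kappa_{R_l})\}^{-1}$ dictated by Assumption~\ref{asm: irrepresentability} in order to control the Taylor remainder in the Karush--Kuhn--Tucker system. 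Minimizing this second requirement over $l$ reduces exactly to $\delta_n \leq \mathfrak{b}_L$, which is the remaining half of \eqref{eq: nlb1}. Under these two conditions, the witness construction produces an entrywise bound of the form $\|\hat{\Xi}_l - \Xi_l\|_\infty \leq 2\kappa_{\Psi_l}\bigl(\|\hat{R}_l - R_l\|_\infty + \gamma\bigr)$, from which \eqref{eq: lma4conc} follows by substituting the uniform bound on $\|\hat{R}_l - R_l\|_\infty$ and the prescribed $\gamma$.

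The main obstacle is essentially bookkeeping: ensuring that a single penalty parameter $\gamma$ and a single sample size threshold are simultaneously good enough for all $L$ graphs. This is precisely what forces the quantities $\epsilon_L, \mathfrak{a}_L, \mathfrak{b}_L$ to be expressed as minima over $l \leq L$, so that the worst graph in the family still fits within the hypotheses of the single-graph analysis. Because the subproblems do not couple when $\alpha = 1$, no joint optimality argument across $l$ is needed, and once the worst-case indices are tracked the adaptation of \cite{ravikumar2011high} is routine.
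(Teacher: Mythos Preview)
Your proposal is correct and follows essentially the same route as the paper: a union-bound concentration step based on Corollary~\ref{cor: concIneqCor} to define the good event, followed by the decoupled primal--dual witness argument of \cite{ravikumar2011high} applied at each level $l$, with the roles of $\mathfrak{a}_L$ and $\mathfrak{b}_L$ identified exactly as in the paper. The only additional detail the paper spells out is the explicit bound on the remainder $\mc{W}_l = \tilde{\Xi}_l^{-1} - R_l + R_l(\tilde{\Xi}_l - \Xi_l)R_l$ via Lemma~5 of \cite{ravikumar2011high}, which you allude to as the ``Taylor remainder in the Karush--Kuhn--Tucker system.''
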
  

\noindent{\bf Proof of Lemma ~\ref{lma: estimation}.}\medskip
%\begin{proof}

Define $D_j$ and $m_l$ as in Corollary~\ref{cor: concIneqCor}.  Observe that $n > \overline{n}(\delta; c)$ implies $\overline{\delta}(n; c) < \delta$. Since \eqref{eq: nlb1} implies that $n >  \overline{n}(D_3 m_l; L^{\varrho - 1}p^{\varrho})$ for each $l$, $1 \leq l \leq L$, we have $m_l\inv \overline{\delta}(n; L^{\varrho - 1} p^\varrho) \leq D_3,$ for $1 \leq l \leq L$. Define
\begin{equation}
\label{eq: Al}
\mc{A}_l = \left\{\norm{R_l - \hat{R}_l}_\infty \leq m_l\inv \overline{\delta}(n; L^{\varrho - 1} p^\varrho)\right\}.
\end{equation}
Applying Corollary~\ref{cor: concIneqCor} with $\delta = m_l\inv\overline{\delta}(n; L^{\varrho - 1}p^\varrho)$ together with the union bound, we obtain $\pr(\mc{A}_l^c) \leq L^{1 - \varrho}p^{2 - \varrho},$ $1 \leq l \leq L,$ so that $\pr\left(\bigcap_{l = 1}^L \mc{A}_l\right) \geq 1 - (Lp)^{2 - \varrho}$.  Let $\{(\tilde{\Xi}_l, \tilde{Z}_l),\, l = 1,\ldots,L\}$ be the primal-dual witness solutions constructed in steps (a)--(d) preceding the lemma statement.  The result in \eqref{eq: lma4conc} will follow once we have established that, on $\bigcap_{l = 1}^L \mc{A}_l,$ we have $\tilde{\Xi}_l = \hat{\Xi}_l$ ($l = 1,\ldots,L$), and that \eqref{eq: lma4conc} holds with $\hat{\Xi}_l$ replaced by $\tilde{\Xi}_l.$

When $\mc{A}_l$ holds, we apply the condition $\gamma = 8\epsilon_L\inv \overline{\delta}(n;\, L^{\varrho - 1}p^{\varrho})$ to conclude that
\begin{equation}
\label{eq: Rlbound}
\norm{\hat{R}_l - R_l}_\infty \leq m_l\inv \overline{\delta}(n; L^{\varrho - 1}p^\varrho) = \left(\frac{\epsilon_L}{m_l\eta_l'}\right)\left\{8\epsilon_L\inv\overline{\delta}(n; L^{\varrho-1}p^\varrho)\right\}\frac{\eta_l'}{8} \leq \frac{\gamma \eta_l'}{8},
\end{equation}
as $\epsilon_L = \min_{l = 1,\ldots,L}m_l\eta_l'.$  Additionally, since $n > \overline{n}(\mathfrak{b}_L; L^{\varrho -1}p^\varrho),$ it must be that
\begin{equation}
\label{eq: rBound}
\begin{split}
2\kappa_{\Psi_l}\left(\norm{\hat{R}_l - R_l}_\infty + \gamma\right) &\leq 2\kappa_{\Psi_l}(m_l\inv + 8\epsilon_L\inv)\overline{\delta}(n; L^{\varrho - 1}p^\varrho) \\
&\leq \frac{2\kappa_{\Psi_l}(m_l\inv + 8\epsilon_L\inv)}{6y_l m_l\max(\kappa_{\Psi_l}^2\kappa_{R_l}^3, \kappa_{\Psi_l}\kappa_{R_l})(m_l\inv + 8\epsilon_L\inv)^2}  \\
&\leq \min\left(\frac{1}{3y_l\kappa_{\Psi_l}\kappa_{R_l}^3}, \frac{1}{3y_l\kappa_{R_l}}\right) 
\end{split}
\end{equation}
whenever $\mc{A}_l$ holds, where the last line follows since $m_l(m_l\inv + 8 \epsilon_L\inv) > 1$.  Hence, the assumptions of Lemma 6 in \cite{ravikumar2011high} are satisfied whenever $\mc{A}_l$ holds, so that
\begin{equation}
\label{eq: unifDev}
    \norm{\tilde{\Xi}_l - \Xi_l}_\infty \leq 2 \kappa_{\Psi_l}(\norm{\hat{R}_l - R_l}_\infty + \gamma) \leq 2\kappa_{\Psi_l}(m_l\inv + 8\epsilon_L\inv)\overline{\delta}(n; L^{\varrho - 1}p^\varrho).
\end{equation}

Define $\mc{W}_l = \tilde{\Xi}_l\inv - R_l + R_l(\tilde{\Xi}_l - \Xi_l)R_l.$  Having established \eqref{eq: rBound} and \eqref{eq: unifDev}, we apply Lemma~5 of \cite{ravikumar2011high} to conclude that
\begin{equation}
\label{eq: Wlbound}    
\begin{split}
\norm{\mc{W}_l}_\infty &\leq \frac{3}{2}y_l\norm{\tilde{\Xi}_l - \Xi_l}_\infty^2 \kappa_{R_l}^3 \\
&\leq 6\kappa_{R_l}^3\kappa_{\Psi_l}^2y_l(m_l\inv + 8 \epsilon_L\inv)^2\left\{\overline{\delta}(n; L^{\varrho-1}p^\varrho)\right\}^2 \\
&= \left\{6 \kappa_{R_l}^3\kappa_{\Psi_l}^2 y_l (m_l\inv + 8 \epsilon_L\inv)^2\overline{\delta}(n; L^{\varrho-1}p^\varrho)\right\}\left(\frac{\epsilon_L}{\eta_l'}\right)\frac{\gamma \eta_l'}{8} \\
&\leq \left\{6 \kappa_{R_l}^3\kappa_{\Psi_l}^2 y_l m_l(m_l\inv + 8 \epsilon_L\inv)^2\overline{\delta}(n; L^{\varrho-1}p^\varrho)\right\}\left(\frac{\epsilon_L}{m_l\eta_l'}\right)\frac{\gamma \eta_l'}{8} \\
&\leq \frac{\gamma \eta_l'}{8}.
\end{split}
\end{equation}
The last line follows by \eqref{eq: nlb1} and because $\epsilon_L \leq m_l\eta_l'.$

Together, \eqref{eq: Rlbound} and \eqref{eq: Wlbound} imply that, when $\bigcap_{l = 1}^L \mc{A}_l$ holds,
$$
\max\{\norm{\hat{R}_l - R_l}_\infty, \norm{\mc{W}_l}_\infty\} \leq \frac{\gamma \eta_l'}{8} \quad (l = 1,\ldots,L).
$$
Following similar derivations to those of Lemma~4 of \ci{ravikumar2011high}, for any $l = 1,\ldots,L$ and $(j, k) \in \overline{E}_l^c$,
\[
\begin{split}
    \left|\left(\tilde{Z}_{1,l}\right)\right|_{jk} &\leq \frac{\eta_l'}{4\alpha} + \frac{1}{\gamma \alpha}\left\VERT\Psi_{l, \overline{E}_l^c \overline{E}_l}\left(\Psi_{l, \overline{E}_l\overline{E}_l}\inv\right)\right\VERT_1 \frac{2\gamma\eta_l'}{8} \\ &\hspace{1cm} + \frac{1}{\alpha}\left\VERT \Psi_{l, \overline{E}_l^c \overline{E}_l}\left(\Psi_{l, \overline{E}_l\overline{E}_l}\inv\right)\right\VERT_1 \left\lVert \alpha\left(\tilde{Z}_{1, l}\right)_{\overline{E}_l} + (1 - \alpha)\left(\tilde{Z}_{2, l}\right)_{\overline{E}_l}\right\lVert_\infty.
\end{split}
\]
Using Assumption~\ref{asm: irrepresentability}, the definitions of the sub-differentials in \eqref{eq: LassoSD} and \eqref{eq: groupSD}, and the fact that $\eta_l' = \alpha - (1 - \eta_l) > 0,$ the bound then becomes
$$
\left|\left(\tilde{Z}_{1,l}\right)\right|_{jk} \leq \frac{\eta_l'(2 - \eta_l)}{4 \alpha} + \frac{1 - \eta_l}{\alpha} < 1,
$$
and strict dual feasibility holds.  Therefore, $\tilde{\Xi}_l = \hat{\Xi}_l$ for each $l$ when $\bigcap_{l = 1}^L \mc{A}_l$ holds.  Together with \eqref{eq: unifDev}, this completes the proof.
%\end{proof}
\medskip

% Thm 5
\noindent{\bf Proof of Theorem ~\ref{thm: selection}.}\medskip
%\begin{proof}[of Theorem~\ref{thm: selection}]

By construction of the primal witness in \eqref{eq: Xi_primal}, it is clear that $(j,k) \notin E_l$ implies $\tilde{\Xi}_{ljk} = 0.$  Under the given constraint on the sample size, we have $\tilde{\Xi}_l = \hat{\Xi}_l$ with probaility at least $1 - (Lp)^{2 - \varrho}$ by Lemma~\ref{lma: estimation}, so that $\hat{E}_l \subset E_l$ ($1 \leq l \leq L$) with at least the same probability.

Furthermore, \eqref{eq: nlb2} implies $n > \overline{n}(\mathfrak{c}_l; L^{\varrho - 1}p^\varrho),$ so that $\overline{\delta}(n; L^{\varrho - 1}p^\varrho) < \xi_{\mathrm{min},l}\left\{4\kappa_{\Psi_l}(m_l\inv + 8 \epsilon_L\inv)\right\}\inv$ ($1 \leq l \leq L$). Hence, for any $(j,k) \in E_l,$
\[
\begin{split}
    |\hat{\Xi}_{ljk}| &\geq |\Xi_{ljk}| - |\hat{\Xi}_{ljk} - \Xi_{ljk}| \\
    &\geq \xi_{\mathrm{min},l} -  2 \kappa_{\Psi_l}\left(\tau_l^{-2}\pi_l^2 + 8 \epsilon_L\inv\right)\overline{\delta}(n;L^{\varrho-1}p^\varrho) \\
    &\geq \xi_{\mathrm{min},l}/2 > 0.
\end{split}
\]
It follows that, with probability at least $1 - (Lp)^{2 - \varrho},$ $E_l \subset \hat{E}_l,$ ($1 \leq l \leq L$), and the proof is complete.
%\end{proof}
\medskip

\section{Additional Edge Consistency Result}
\label{sec: addResult}

In this section, we prove a second result on edge selection consistency that is not restrictive on the value of the tuning parameter $\alpha.$  As a trade-off for removing this restriction, we obtain the slightly weaker result that $\bigcup_{l = 1}^L \hat{E}_l = \bigcup_{l=1}^L E_l$ with high probability, rather than accurate recovery of each individual edge set simultaneously.  Unlike the proof of Theorem~\ref{thm: selection}, the result deals more explicitly with the group Lasso penalty, and requires an adapted version of the irrepresentability condition.  However, the constraints on the sample size and divergence of the parameter $L$ are slightly weakened as a result.

Recall the definition of $\Psi_l$ from Section~\ref{ss: asym}.  Define the block matrix $\tilde{\Psi} = \{\tilde{\Psi}_{e, e'}\}_{e,e' \in V\times V}$, where $\Psi_{(j,k),(j'k')}$ is an $L\times L$ diagonal matrix with diagonal equal to $\left\{\Psi_{1,(j,k), (j',k')},\ldots,\Psi_{L,(j,k),(j',k')}\right\}.$  Thus, $\tilde{\Psi}$ groups the elements of each of the $\Psi_l$ within the same edge pairs rather than the same basis.
Letting $S = \bigcup_{l = 1}^L \overline{E}_l$, we can define the submatrix $\tilde{\Psi}_{S^cS},$ with row and column blocks indexed by $S^c$ and $S,$ respectively.  Similarly, define $\Psi_{SS}.$   

We next define an alternative operator norm on $\tilde{\Psi}_{S^cS}\tilde{\Psi}_{SS}\inv$ tailored to the group Lasso sub-differential defined in \eqref{eq: groupSD}.  Let $A$ be an $(|S^c|^2L)\times (|S|^2L)$ matrix consisting of $L\times L$ blocks $A_{(j,k), (j',k')}$ that are themselves diagonal.  Where as the norm in Assumption~\ref{asm: irrepresentability} corresponds to the $\ell_\infty/\ell_\infty$ matrix operator norm, due to the more restricted set of matrices in the group Lasso sub-differential, we define the blockwise $\ell_\infty/ \ell_2$ norm
\begin{equation}
    \label{eq: groupNorm}
    \left\VERT A\right\VERT_{\infty,2} = \max_{e \in S^c} \max_{\norm{z_{e'}}_E \leq 1} \left\lVert\sum_{e' \in S} A_{ee'}z_{e'}\right\rVert_E = \max_{e \in S^c} \left\{\sum_{l = 1}^L \left(\sum_{e' \in S} |A_{ee'll}|\right)^2\right\}^{1/2}.
\end{equation}
We require the following group irrepresentability condition.
\begin{assumption}
\label{asm: groupIrr}
   For some $\eta \in (0,1],$ $\left \VERT \tilde{\Psi}_{S^cS}\tilde{\Psi}_{SS}\inv \right\VERT_{\infty, 2} \leq 1 - \eta.$
\end{assumption}

Next, define $\tilde{\kappa}_{\Psi_l} = \left\VERT \left(\Psi_{l,SS}\right)\inv\right\VERT_\infty$, $y = \max_{j \in V} |\{k \in V: \, \sum_{l = 1}^L \Xi_{ljk}^2 \neq 0\},$ and $\tilde{\xi}_{\mathrm{min}} = \min_{(j,k) \in \bigcup_{l = 1}^LE_l}\left\{\max_{l = 1,\ldots,L} |\Xi_{ljk}|\right\}.$  With $D_j$ as in Corollary~\ref{cor: concIneqCor} and $\tilde{\epsilon}_L = \eta\min_{l = 1,\ldots,L} m_l,$ set
\begin{equation}
    \label{eq: groupABC}
\begin{split}
    \tilde{\mathfrak{a}}_L &= D_3 \min_{l = 1,\ldots,L} m_l, \\
    \tilde{\mathfrak{b}}_L &= (6y)\inv \min_{l = 1,\ldots,L} \left\{m_l(m_l\inv + 8\tilde{\epsilon}_L\inv)^2 \max\left(\tilde{\kappa}_{\Psi_l}^2\kappa_{R_l}^2, \tilde{\kappa}_{\Psi_l}\kappa_{R_l}\right)\right\}\inv \\
    \tilde{\mathfrak{c}}_L &= \tilde{\xi}_{\mathrm{min}}\{4\tilde{\kappa}_{\Psi_l}(m_l\inv + 8\tilde{\epsilon}_L\inv)\}\inv.
\end{split}    
\end{equation}
%As a direct analog of Assumption~\ref{asm: Ldivergence}, we require the following.
%\begin{assumption}
%\label{asm: Ldivergence2}
%$L \rightarrow \infty$ as $n \rightarrow \infty,$ $L \leq np$, and $\min(\tilde{\mathfrak{a}}_L, \tilde{\mathfrak{b}}_L, \tilde{\mathfrak{c}}_L)\left\{n/\log(n)\right\}^{1/2} \rightarrow \infty.$
%\end{assumption}

\begin{theorem}
\label{thm: selection2}
Suppose Assumptions~\ref{asm: lambda}, \ref{asm: subGauss}, and \ref{asm: groupIrr} hold, $L \leq np,$ $\gamma = 8\tilde{\epsilon}_L\inv\left\{(D_1n)\inv\log(D_2L^{\varrho -1}p^\varrho)\right\}^{1/2}$ for some $\varrho > 2$ and that the sample size satisfies the lower bound
\begin{equation}
    \label{eq: nlb3}
    n\min(\tilde{\mathfrak{a}}_L, \tilde{\mathfrak{b}}_L, \tilde{\mathfrak{c}}_L)^2 > D_1\inv\left\{\log(D_2) + (\varrho - 1)\log(n) + (2\varrho - 1)\log(p)\right\}.
\end{equation}
Then, for any $\alpha \in (0,1)$ and with probability at least $1 - (Lp)^{2 - \varrho},$ $\bigcup_{l = 1}^L \hat{E}_l = \bigcup_{l = 1}^L E_l.$
\end{theorem}

Before giving the proof, a few remarks are in order.  For large $n$, the bound in \eqref{eq: nlb3} once again becomes $n\min(\tilde{\mathfrak{a}}_L, \tilde{\mathfrak{b}}_L, \tilde{\mathfrak{c}}_L)^2 \gtrsim \varrho \log(p),$ so that one can achieve selection consistency of the union of the first $L$ edge sets so long as $\log(p) = o(n)$ and $L$ grows slowly enough with $n$.  Second, if we regard $\tilde{\kappa}_{\Psi_l},$ $\kappa_{R_l},$ and $\eta$ to be fixed as $n,p,$ and $L$ diverge, if $\min_{l = 1,\ldots,L} m_l \gtrsim n^{-d}$ for $0 < d < 1/4,$ \eqref{eq: nlb3} becomes
$$
n \gtrsim \left[ \left\{ \tilde{\xi}_{\mathrm{min}}^{-2} + y^2\right\}\varrho\log(p)\right]^{1-4d}.
$$
Compared to the bound given in Remark~\ref{rmk: graphProp} under analagous settings, it is weaker in the first term since $\tilde{\xi}_{\mathrm{min}} > \min_{l = 1,\ldots,L}\xi_{\mathrm{min},l}.$  However, since $y \geq y_l$ for any $l$, the second term here can be more restrictive.  Practically speaking, this is not as much of a concern as it will only be apparent when the individual edge sets $E_l$ all have a much smaller maximal degree than their union.  Finally, one can deduce edge selection consistency if $L$ is capable of growing faster than $\tilde{L}^*_p = \min\left\{L: \bigcup_{l = 1}^L E_l = \bigcup_{l = 1}^\infty E_l\right\}$ while still satisfying $L \leq np$ and $\min(\tilde{\mathfrak{a}}_L, \tilde{\mathfrak{b}}_L, \tilde{\mathfrak{c}}_L)\left\{n/\log(n)\right\}^{1/2} \rightarrow \infty$.

\begin{corollary}
\label{cor: fgm_consistency2}
Under the assumptions of Theorem~\ref{thm: selection2}, suppose $L \rightarrow \infty$ such that $\min(\tilde{\mathfrak{a}}_{L},\tilde{\mathfrak{b}}_L, \tilde{\mathfrak{c}}_L)\{n/\log(n)\}^{1/2} \rightarrow \infty$ and, for large $n,$ $L \geq L_p^*$. Then, for large $n$, $E = \bigcup_{l = 1}^L \hat{E}_l$ with probability at least $1 - (Lp)^{2 - \varrho}.$
\end{corollary}

\noindent{\bf Proof of Theorem ~\ref{thm: selection2}.}\medskip

As the proof follows the same logical flow as that of Theorem~\ref{thm: selection}, we will sketch the proof while outlining major differences.  First of all, steps 1--3 from Section~\ref{ss: selProofs} that detail the construction of the primal/dual witness pairs $(\tilde{\Xi}_l, \tilde{Z}_l)$ are modified as follows.  In the first step, one computes the penalized estimator as in \eqref{eq: Xi_primal} except that one only restricts $\Upsilon_{l,S^c} = 0$.  In step 2, the elements of the sub-differential are chosen to satisfy the optimality condition as in \eqref{eq: optimal}, but over the entire set $S$ rather than $\overline{E}_l.$  In step 3, one updates the sub-differential elements for all $(j,k) \in S^c$ as
$$
\left(\tilde{Z}_{1,l}\right)_{jk} = \left(\tilde{Z}_{2,l}\right)_{jk} = \left(\tilde{Z}_l\right)_{jk} := \frac{1}{\gamma}\left\{\left(\tilde{\Xi}_l\inv\right)_{jk} - \hat{r}_{ljk}\right\} \quad (l = 1,\ldots,L).
$$

With these amendments, one proceeds by showing that, when
$$
b_l := 2\tilde{\kappa}_{\Psi_l}\left(\norm{\hat{R}_l - R_l}_\infty + \gamma\right) \leq \min\left(\frac{1}{3\kappa_{R_l}y}, \frac{1}{3\kappa_{R_l}^3\tilde{\kappa}_{\Psi_l}y}\right),
$$
one has $\norm{\tilde{\Xi}_l - \Xi_l}_\infty \leq b_l.$  This result can be proven using the same logic used in the proofs of Lemmas 5 and 6 of \ci{ravikumar2011high} using the sub-differential properties in \eqref{eq: LassoSD} and \eqref{eq: groupSD}.  Next, one uses the fact that $n > \overline{n}(\tilde{\mathfrak{a}}_L; L^{\varrho -1}p^\varrho)$ to show that, with probability at least $1 - (Lp)^{2 - \varrho},$ the event $\bigcap_{l = 1}^L \mathcal{A}_l$ holds, where $\mathcal{A}_l$ is defined in \eqref{eq: Al}.  The rest of the proof is then conditional on this event.

Using the facts that $\gamma = 8 \tilde{\epsilon}_L\inv\overline{\delta}(n; L^{\varrho - 1}p^\varrho)$ and $n > \overline{n}(\tilde{\mathfrak{b}}_L; L^{\varrho - 1}p^\varrho)$ one can then establish that
$$
\max_{l = 1,\ldots,L} \max\left(\norm{\hat{R}_l - R_l}_\infty, \norm{\mathcal{W}_l}_\infty\right) < \frac{\gamma \eta}{8},
$$
so that 
$$
\norm{\tilde{\Xi}_l - \Xi_l}_\infty \leq 2 \tilde{\kappa}_{\Psi_l}(m_l\inv + 8\tilde{\epsilon}_L\inv)\overline{\delta}(n; L^{\varrho - 1}p^\varrho) \quad (l = 1,\ldots,L).
$$
Then, using arguments similar to Lemma~1 of \ci{ravikumar2011high}, the bound
\[
\begin{split}
    \norm{\{(\tilde{Z}_1)_{jk},\ldots,(\tilde{Z}_L)_{jk}\}^\T}_E &\leq \frac{2}{\gamma}\left(\left\VERT \tilde{\Psi}_{S^cS}\tilde{\Psi}_{SS}\inv\right\VERT_{2,\infty} + 1\right)\left(\frac{\gamma \eta}{8}\right) + \left\VERT \tilde{\Psi}_{S^cS}\tilde{\Psi}_{SS}\inv\right\VERT_{2,\infty} \\
    &\leq \frac{\eta(2 - \eta)}{4} + 1 - \eta < 1
\end{split}
\]
by Assumption~\ref{asm: groupIrr}.  Hence, for each $l = 1,\ldots, L,$ we have $\hat{\Xi}_l = \tilde{\Xi}_l,$ so that $(j, k) \notin \bigcup_{l = 1}^L \hat{E}_l$ for any $(j, k) \in S^c$ and $\bigcup_{l = 1}^L \hat{E}_l \subset \bigcup_{l = 1}^L E_l.$  Finally, using the bound $n > \overline{n}(\tilde{\mathfrak{c}}_L; L^{\varrho - 1}p^\varrho),$ for $(j, k) \in \bigcup_{l = 1}^L E_l,$ one has
\[
\begin{split}
\max_{l = 1,\ldots, L} |\hat{\Xi}_{ljk}| &\geq \tilde{\xi}_{\mathrm{min}} - \max_{l = 1,\ldots,L} \norm{\hat{\Xi}_l - \Xi_l}_\infty  \\
&\geq \tilde{\xi}_{\mathrm{min}} - \max_{l = 1,\ldots,L} 2\tilde{\kappa}_{\Psi_l}(m_l\inv + 8 \tilde{\epsilon}_L\inv)\overline{\delta}(n; L^{\varrho - 1}p^\varrho) \\
&\geq \tilde{\xi}_{\mathrm{min}} - \frac{\tilde{\xi}_{\mathrm{min}}}{2} > 0.
\end{split}
\]
This implies $\bigcup_{l = 1}^L E_l \subset \bigcup_{l = 1}^L \hat{E}_l$, and the proof is complete.

\section{Simulation Details and Additional Figures for Section \ref{sec: sim} }
\label{s-sec: sim}
\subsection{Simulation Details for Section \ref{sec: sim} }

This section describes the generation of edge sets $E_1,\dots,E_M$ and precision matrices $\Omega_1, \dots, \Omega_M$ for the simulation settings in section \ref{sec: sim}.  An initial conditional independence graph $G=(V,E)$ %with nodes $V=\{1,\dots,p\}$ and edge set $E$. The graph 
is generated from a power law distribution with parameter $\pi = \pr\{ (j,k) \in E\}$. Then, for a fixed $M$, a sequence of edge sets $E_1,\dots,E_M$ is generated so that $E = \bigcup_{l = 1}^M E_l$. This process has two main steps. First, a set of common edges to all edge sets is computed and denoted as $E_c$ for a given proportion of common edges $\tau \in [0,1]$. Next, the set of edges $E\setminus E_c$ is partitioned into $\tilde{E}_1,\dots,\tilde{E}_M$ where $|\tilde{E}_l|\geq|\tilde{E}_{l'}|$ for $l < l'$ and set $E_l = E_c \cup \tilde{E}_l$. The details for this process are described in Algorithm \ref{algo:Algo_edges}.

%More details for constructing $E_1,\dots,E_M$ can be found in the Appendix.
\iffalse{
\begin{algo}
Pseudocode to create the edge sets $E_1,\dots,E_M$
\label{algo:Algo_edges}
\begin{tabbing}
  \qquad \enspace Inputs: graph $G=(V,E)$ with nodes $V=\{1,\dots,p\}$ and edge set $E$ \\
  \qquad \qquad \quad \enspace number of basis $M$\\
  \qquad \qquad \quad \enspace proportion of common edges $\tau$\\
  \qquad \enspace Set $E_c \leftarrow \text{random subset of size $\tau |E|$ from $E$}$\\
    \qquad \enspace Set $E_l\leftarrow E_c$ for $l=1,\dots,M$\\ 
    \qquad \enspace Set  $l\leftarrow1$; $B\leftarrow1$\\
    
    \qquad \enspace For $e \in E\setminus E_c$\\
	\qquad \qquad $E_l\leftarrow E_l \bigcup e$\\ 
	\qquad \qquad $l \leftarrow l +1$\\ 
	\qquad \qquad If $(l>B)$ \\
	\qquad \qquad \qquad $l \leftarrow 1$\\
	 \qquad \qquad \qquad $B \leftarrow (B+1) \mod M$\\
     
	%\uIf{$l+1 >B$}
	%{$l \leftarrow 1$\\
	%$B \leftarrow (B+1) \mod L$\\
      %}\Else{ $l \leftarrow l +1$}
	%\qquad \qquad \enspace \} \\
    \qquad \enspace Return $E_1,\dots,E_M$\\
   
\end{tabbing}
\end{algo}}\fi

\begin{algorithm}
\caption{Pseudocode to create the edge sets $E_1,\dots,E_M$}
\label{algo:Algo_edges}
\begin{algorithmic}[1]
  \STATE\enspace Inputs: graph $G=(V,E)$ with nodes $V=\{1,\dots,p\}$ and edge set $E$ \\
  \qquad \qquad \quad \enspace number of basis $M$\\
  \qquad \qquad \quad \enspace proportion of common edges $\tau$\\
  \STATE\enspace Set $E_c \leftarrow \text{random subset of size $\tau |E|$ from $E$}$\\
    \STATE \enspace Set $E_l\leftarrow E_c$ for $l=1,\dots,M$\\ 
    \STATE \enspace Set  $l\leftarrow1$; $B\leftarrow1$\\
    
    \STATE \enspace For $e \in E\setminus E_c$\\
	 \qquad $E_l\leftarrow E_l \bigcup e$\\ 
	 \qquad $l \leftarrow l +1$\\ 
	 \qquad If $(l>B)$ \\
	 \qquad \qquad $l \leftarrow 1$\\
	 \qquad \qquad $B \leftarrow (B+1) \mod M$\\
     
	%\uIf{$l+1 >B$}
	%{$l \leftarrow 1$\\
	%$B \leftarrow (B+1) \mod L$\\
      %}\Else{ $l \leftarrow l +1$}
	%\qquad \qquad \enspace \} \\
    \STATE \enspace Return $E_1,\dots,E_M$\\
   
\end{algorithmic}
\end{algorithm}

Next, $p \times p$ precision matrices $\Omega_l$ are generated for each $E_l$ based on the algorithm of \citet{peng:09}. Let $\tilde{\Omega}_l$ be a $p \times p$ matrix with entries 
$$ \left(\tilde{\Omega}_l\right)_{jk}=
    \begin{cases}
      1 \quad j=k\\%(0 + 1)^{*} & if $P = \mathit{NP}$ \\
      0 \quad (j,k) \notin E_l \text{ or } j<k\\%\phi        & otherwise
	\sim \mathcal{U}(\mathcal{D}) \quad (j,k) \in E_l
    \end{cases}\quad (j,k=1,\ldots, p)$$
where $\mathcal{D} = [-2/3,-1/3] \cup [1/3,2/3]$. Then, rowwise, we sum the absolute value of the off-diagonal entries and divide each row by 1.5 times this sum componentwise. Finally, $\tilde{\Omega}_l$ is averaged with its transpose and has its diagonal entries set to one. The output is a precision matrix $\Omega_l$ which is guaranteed to be symmetric and diagonally dominant.
\medskip

\subsection{Results for Very Sparse Graphs for Section~\ref{sec: sim}}
\label{secSupp:simverysparse}
A comparison of the two methods under the very sparse case is included. For this case one has $\pi=0.025$ with a proportion of common edges $\tau=0$. %In this case we see that psFGGM outperforms FGGM even at the same $90\%$ level of variance explained. 
Finally, we check the robustness of our conclusions under other settings including $\tau \in \{0, 0.1, 0.2\}$ and $\pi \in \{ 2.5\%, 5\%, 10\%\}$, as well as $p$ greater, equal or smaller than $n$.

\begin{table}[H]\centering \small
\caption{{Mean area under the curve (and standard error) values for Figures $\ref{fig:ROC3}$ and $\ref{fig:ROC4}$}}
\label{tab:AUC34}
{\small
\begin{tabular}{ccc|ccc|ccc} 
\hline
& & & \multicolumn{3}{|c|}{$\Sigma = \Sigma_\text{ps}$} & \multicolumn{3}{c}{$\Sigma = \Sigma_\text{non-ps}$}\\
&  &  & $p = 50$ & $p = 100$ & $p = 150$ & $p = 50$ & $p = 100$ & $p = 150$ \\
\hline
\hline

%\multirow{3}{4em}{Multiple row} & cell2 & cell3 \\ 
$n = p/2$& AUC & $\text{FGGM}_\text{90\%}$ & 0.61(0.06) & 0.65(0.02) & 0.67(0.02) & 0.78(0.05) & 0.77(0.02) & 0.77(0.02)\\
&  & $\text{psFGGM}_\text{90\%}$ & 0.73(0.05) & 0.75(0.03) & 0.74(0.02) & 0.81(0.05) & 0.79(0.02) & 0.77(0.02)\\
& & $\text{psFGGM}_\text{95\%}$ &  0.70(0.05) &  0.81(0.02)  & 0.82(0.02)  &   0.80(0.05) & 0.85(0.03)&  0.84(0.02)\\
\cline{2-9} 
& $\text{AUC}15^\dagger$ & $\text{FGGM}_\text{90\%}$ & 0.15(0.06) & 0.21(0.03)& 0.26(0.02) &  0.40(0.07) & 0.47(0.03) & 0.51(0.03)\\
&& $\text{psFGGM}_\text{90\%}$ & 0.31(0.08) & 0.44(0.03) & 0.47(0.03) & 0.46(0.08) & 0.52(0.04) & 0.52(0.03) \\
& &$\text{psFGGM}_\text{95\%}$ & 0.27(0.07) & 0.50(0.04) & 0.57(0.03) & 0.42(0.08) & 0.59(0.05) &  0.63(0.04)\\
\hline
%\multirow{3}{4em}{Multiple row} & cell2 & cell3 \\ 
$n = 1.5p$& AUC & $\text{FGGM}_\text{90\%}$ & 0.79(0.05) & 0.79(0.02) & 0.77(0.02) & 0.94(0.03) & 0.84(0.02) & 0.79(0.02) \\
& & $\text{psFGGM}_\text{90\%}$ & 0.93(0.03) & 0.82(0.02)  & 0.78(0.02) & 0.96(0.02) & 0.85(0.02) & 0.80(0.02)\\
& & $\text{psFGGM}_\text{95\%}$ & 0.92(0.03) & 0.92(0.02) & 0.87(0.02) & 0.96(0.02) & 0.93(0.02) &0.87(0.02)\\
\cline{2-9} 
&$\text{AUC}15^\dagger$ & $\text{FGGM}_\text{90\%}$ & 0.39(0.08)  & 0.52(0.03) & 0.53(0.02) & 0.82(0.05) & 0.68(0.03) & 0.60(0.03) \\
&& $\text{psFGGM}_\text{90\%}$ & 0.78(0.06) & 0.65(0.03) & 0.58(0.02) & 0.86(0.06) & 0.69(0.04) & 0.60(0.04)\\
&& $\text{psFGGM}_\text{95\%}$ &  0.74(0.07) & 0.83(0.03) & 0.75(0.03) & 0.86(0.05) & 0.83(0.03) &0.74(0.03)\\
\hline
%\multicolumn{4}{l}{ \textit{t} statistics in parentheses}\\

\end{tabular}}
{\footnotesize$\dagger$AUC15 is AUC computed for FPR in the interval [0, 0.15], normalized to have maximum area 1.}
\end{table}

\begin{figure}[H]
\centering
\subcaptionbox{$n = p / 2$\label{fig:ROC3}}{\includegraphics[height=0.26\textheight]{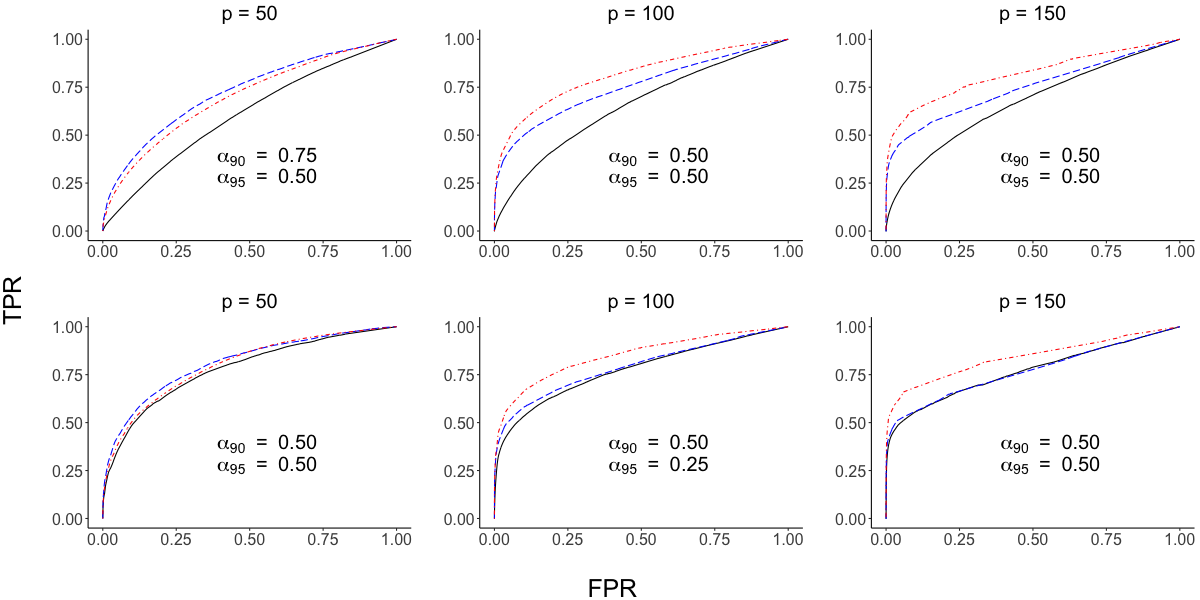}} %
\subcaptionbox{$n = 1.5p$\label{fig:ROC4}}{\includegraphics[height=0.26\textheight]{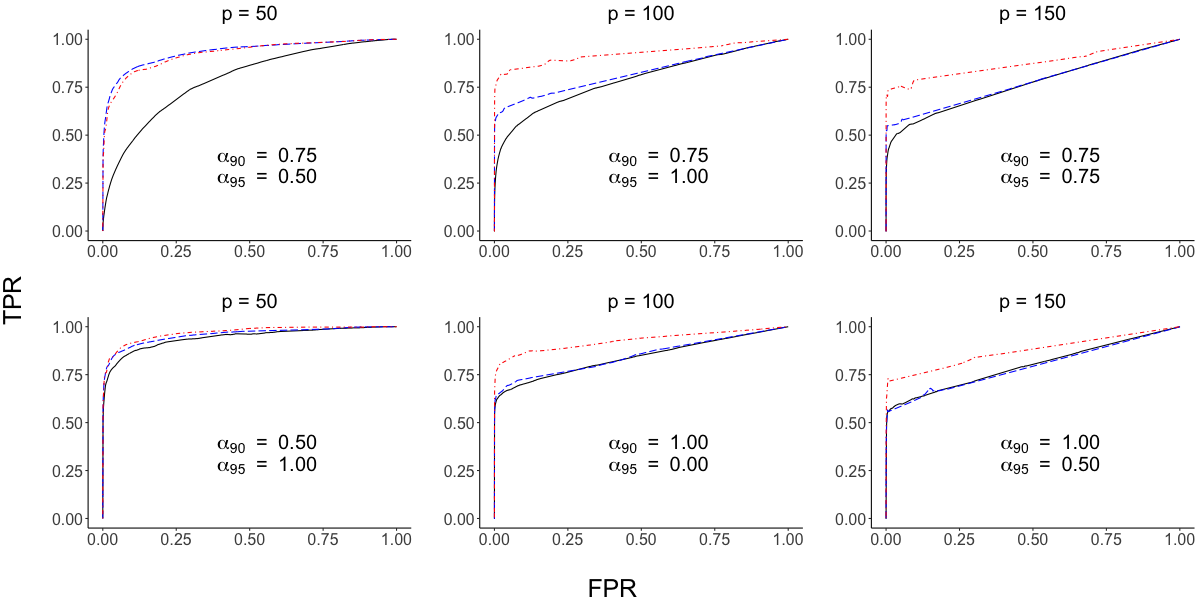}} %
%\subcaptionbox{$n = p / 2$\label{fig:ROC3}}{\includegraphics[h=0.5]{Writeup/figSim3.png}} %
%\subcaptionbox{$n = 1.5p$\label{fig:ROC4}}{\includegraphics[h=0.5]{Writeup/figSim4.png}} %
\caption{ Mean receiver operating characteristic curves for the proposed method (psFGGM) with that of \ci{qiao:19} (FGGM) under $\Sigma_\text{ps}$ (top) and $\Sigma_\text{non-ps}$ (bottom) for $p=50,100,150$. We see psFGGM (\textcolor{blue}{\sampleline{dashed}}) and FGGM (\sampleline{}) at $90\%$ of variance and psFGGM (\textcolor{red}{\sampleline{dash pattern=on .7em off .2em on .05em off .2em}}) at $95\%$ of variance explained.
}
\label{fig:fig:ROC3_and_4}
\end{figure}

\subsection{Values of $\alpha$ for psFGGM During Simulations}

\begin{table}[H]\centering \small
\caption{{Values of $\alpha$ used for psFGGM in Figures $\ref{fig:ROC1}$,$\ref{fig:ROC2}$,$\ref{fig:ROC3}$,$\ref{fig:ROC4}$}}
\label{tab:AUC2}
{\small
\begin{tabular}{ccc|ccc|ccc} 
\hline
&& \text{Prop. of Variance}  & \multicolumn{3}{|c|}{$\Sigma = \Sigma_\text{ps}$} & \multicolumn{3}{c}{$\Sigma = \Sigma_\text{non-ps}$}\\
&& \text{Explained}  & $p = 50$ & $p = 100$ & $p = 150$ & $p = 50$ & $p = 100$ & $p = 150$ \\
\hline
\hline

%\multirow{3}{4em}{Multiple row} & cell2 & cell3 \\ 
Sparse &$n = 0.5p$&  $\text{90\% }$ & 0.50 & 0.75 & 0.50 & 0.25 & 0.25 & 0.50\\
Case &&  $\text{95\%}$ & 0.50 & 0.50 & 0.50 & 0.25 & 0.25 & 0.25 \\
\cline{2-9} 
&$n = 1.5p$&  $\text{90\% }$ & 0.75 & 1.00 & 0.50 & 0.25 & 0.50 & 0.50\\
&&  $\text{95\%}$ & 0.50 & 0.50 & 0.50 & 0.50 & 0.50 & 0.50 \\
\hline
%\multicolumn{4}{l}{ \textit{t} statistics in parentheses}\\
Very Sparse &$n = 0.5p$&  $\text{90\% }$ & 0.75 & 0.50 & 0.50 & 0.50 & 0.50 & 0.50 \\
Case & & $\text{95\%}$ &  0.50  &  0.50  &  0.50  &  0.50  &  0.25  &  0.50  \\
\cline{2-9} 
&$n = 1.5p$&  $\text{90\% }$ & 0.75 & 0.75 & 0.75 & 0.50 & 1.00 & 1.00\\
&&  $\text{95\%}$ & 0.50 & 1.00 & 0.75 & 1.00 & 0.00 & 0.50 \\
\hline
\end{tabular}}

\end{table}

%\subsection{Comparison with Screening ProcedureResults for Very Sparse Graphs for Section~\ref{sec: sim}}
\subsection{Comparison With An Independence Screening Procedure}
\label{ss: screen}

In this section we compare our method (psFGGM) with another approach meant only for estimating a sparse graph identifying the conditionally independence pairs in a multivariate Gaussian process. This method, denoted as psSCREEN, is based on the sure independence screening procedure of \citet{jianqing:2008}. It assumes partial separability just like psFGGM, but the graph is estimated by thresholding the off-diagonal entries of the matrix $\big[\sum_{l=1}^L \hat{r}_{ljk}^2\big]$ for $j,k = 1, \dots, p$.  Figures \ref{fig:ROC1SCREEN} and \ref{fig:ROC2SCREEN} follow the  sparse case settings of section \ref{sec: sim} with $\pi=0.05$ and $\tau = 0.$ And figures \ref{fig:ROC3SCREEN} and \ref{fig:ROC4SCREEN} follow the very sparse case settings of section \ref{secSupp:simverysparse} with $\pi=0.025$ and $\tau = 0.$

\begin{figure}[H]
\centering
\subcaptionbox{$n = p / 2$\label{fig:ROC1SCREEN}}{\includegraphics[height=0.26\textheight]{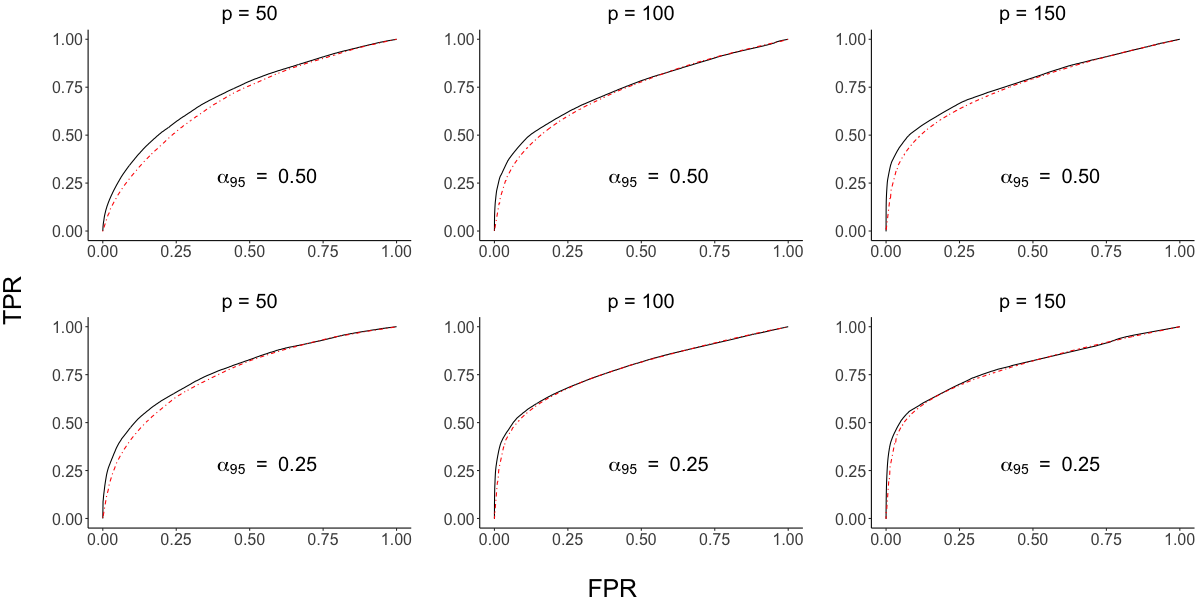}} %
\subcaptionbox{$n = 1.5p$\label{fig:ROC2SCREEN}}{\includegraphics[height=0.26\textheight]{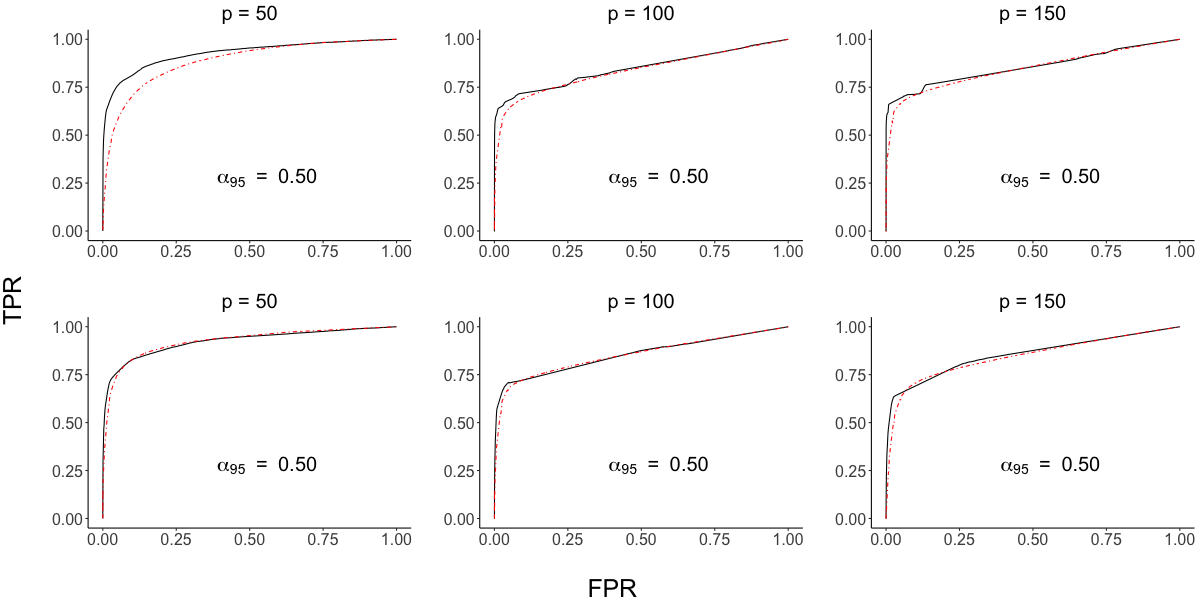}} %
\caption{ Mean receiver operating characteristic curves for the proposed method (psFGGM) and the independence screening procedure (psSCREEN) under $\Sigma_\text{ps}$ (top) and $\Sigma_\text{non-ps}$ (bottom) for $p=50,100,150$. We see psFGGM (\sampleline{}) and psSCREEN(\textcolor{red}{\sampleline{dash pattern=on .7em off .2em on .05em off .2em}}) both at $95\%$ of variance for the sparse case.
}
\label{fig:fig:ROCSCREEN1_and_2}
\end{figure}

\begin{figure}[H]
\centering
\subcaptionbox{$n = p / 2$\label{fig:ROC3SCREEN}}{\includegraphics[height=0.26\textheight]{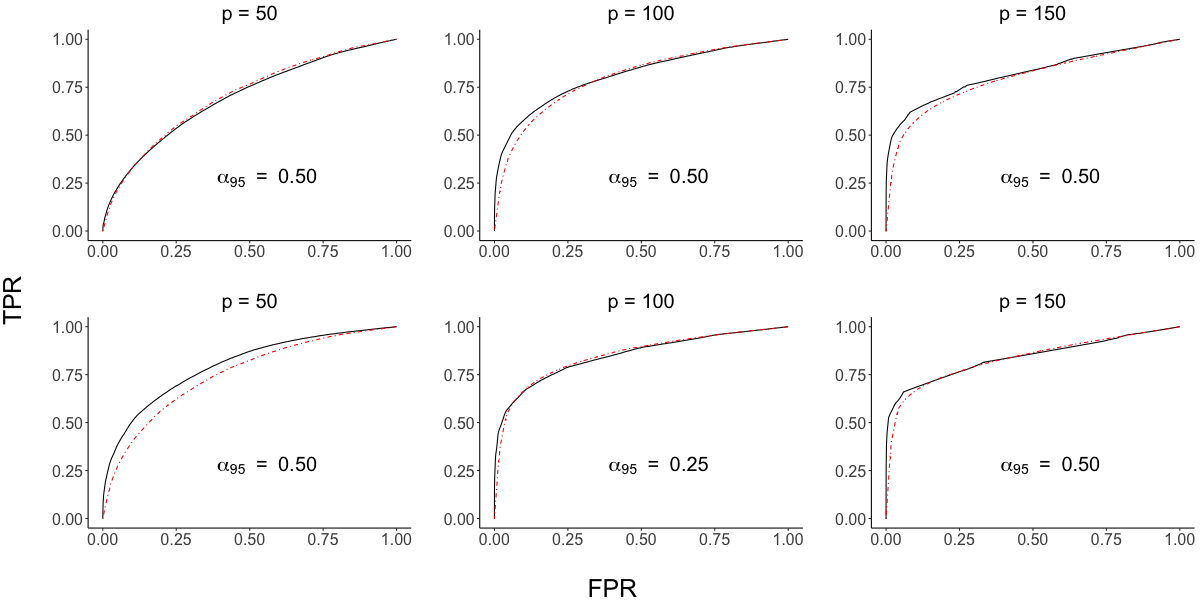}} %
\subcaptionbox{$n = 1.5p$\label{fig:ROC4SCREEN}}{\includegraphics[height=0.26\textheight]{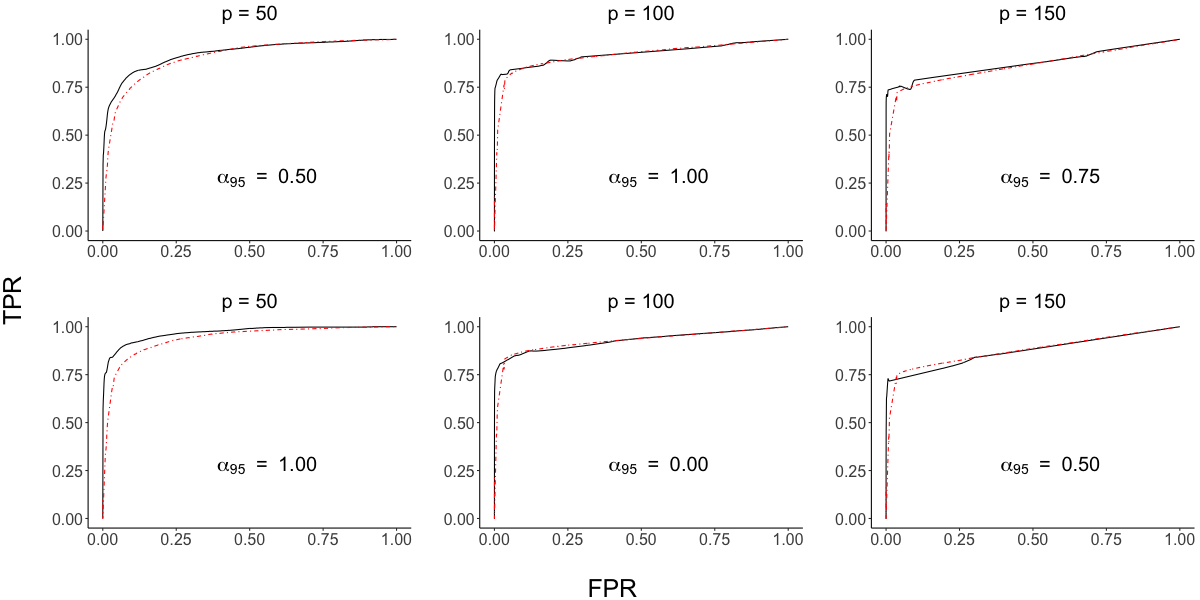}} %
\caption{ Mean receiver operating characteristic curves for the proposed method (psFGGM) and the independence screening procedure (psSCREEN) under $\Sigma_\text{ps}$ (top) and $\Sigma_\text{non-ps}$ (bottom) for $p=50,100,150$. We see psFGGM (\sampleline{}) and psSCREEN(\textcolor{red}{\sampleline{dash pattern=on .7em off .2em on .05em off .2em}}) both at $95\%$ of variance for the very sparse case.
}
\label{fig:fig:ROCSCREEN3_and_4}
\end{figure}

\begin{table}[t]\centering \small
\caption{{Mean area under the curve (and standard error) values for Figures $\ref{fig:ROC1SCREEN}$ and $\ref{fig:ROC2SCREEN}$}}
\label{tab:AUC12SCREEN}
{\small
\begin{tabular}{ccc|ccc|ccc} 
\hline
& & & \multicolumn{3}{|c|}{$\Sigma = \Sigma_\text{ps}$} & \multicolumn{3}{c}{$\Sigma = \Sigma_\text{non-ps}$}\\
$n$ &  &  & $p = 50$ & $p = 100$ & $p = 150$ & $p = 50$ & $p = 100$ & $p = 150$ \\
\hline
\hline
%\multirow{3}{4em}{Multiple row} & cell2 & cell3 \\ 
$p/2$& AUC &  $\text{psFGGM}_\text{95\%}$ & 0.72(0.04)  &  0.74(0.02) &  0.77(0.02)  &0.77(0.03) & 0.78(0.02) & 0.79(0.02) \\
&  &  $\text{psSCREEN}_\text{95\%}$ & 0.69(0.04)& 0.73(0.02)& 0.75(0.02)&0.75(0.03)& 	0.78(0.02)& 	0.79(0.02) \\
\cline{2-9} 
& $\text{AUC}15^\dagger$ &$\text{psFGGM}_\text{95\%}$ & 0.29(0.05) & 0.40(0.03) & 0.46(0.03) &  0.41(0.05) & 0.48(0.03) & 0.51(0.03)\\
&  &  $\text{psSCREEN}_\text{95\%}$ & 0.23(0.05)	& 0.34(0.02)	& 0.40(0.02) &	0.34 (0.05) & 0.46(0.03)& 0.49(0.03)\\
\hline
%\multirow{3}{4em}{Multiple row} & cell2 & cell3 \\ 
$1.5p$&AUC &$\text{psFGGM}_\text{95\%}$ & 0.92(0.02) & 0.84(0.02) & 0.85(0.02) & 0.92(0.03) & 0.85(0.02) & 0.85(0.02)\\
&  &  $\text{psSCREEN}_\text{95\%}$ & 0.89(0.02)& 0.84(0.02)& 0.85(0.02)& 0.93(0.02)& 0.86(0.02)& 0.85(0.02)  \\
\cline{2-9} 
&$\text{AUC}15^\dagger$ & $\text{psFGGM}_\text{95\%}$ & 0.75(0.04) & 0.68(0.03) & 0.69(0.03) & 0.76(0.06) & 0.68(0.03) & 0.64(0.03) \\
&  &  $\text{psSCREEN}_\text{95\%}$ & 0.61(0.05) & 0.65(0.03) & 0.67(0.03) &	0.76(0.05) & 0.68(0.03)& 0.63(0.03) \\
\hline
\end{tabular}}
{\footnotesize$\dagger$AUC15 is AUC computed for FPR in the interval [0, 0.15], normalized to have maximum area 1.}
\end{table}

\begin{table}[t]\centering \small
\caption{{Mean area under the curve (and standard error) values for Figures $\ref{fig:ROC3SCREEN}$ and $\ref{fig:ROC4SCREEN}$}}
\label{tab:AUC34SCREEN}
{\small
\begin{tabular}{ccc|ccc|ccc} 
\hline
& & & \multicolumn{3}{|c|}{$\Sigma = \Sigma_\text{ps}$} & \multicolumn{3}{c}{$\Sigma = \Sigma_\text{non-ps}$}\\
$n$ &  &  & $p = 50$ & $p = 100$ & $p = 150$ & $p = 50$ & $p = 100$ & $p = 150$ \\
\hline
\hline
%\multirow{3}{4em}{Multiple row} & cell2 & cell3 \\ 
$p/2$& AUC &  $\text{psFGGM}_\text{95\%}$ &  0.70(0.05) &  0.81(0.02)  & 0.82(0.02)  &   0.80(0.05) & 0.85(0.03)&  0.84(0.02)\\
&  &  $\text{psSCREEN}_\text{95\%}$ & 0.71(0.05)&	0.80(0.02)	&0.80(0.02)&	0.75(0.04)&	0.85(0.02)&	0.84(0.02)\\
\cline{2-9} 
& $\text{AUC}15^\dagger$ &$\text{psFGGM}_\text{95\%}$ & 0.27(0.07) & 0.50(0.04) & 0.57(0.03) & 0.42(0.08) & 0.59(0.05) &  0.63(0.04)\\
&  &  $\text{psSCREEN}_\text{95\%}$ & 0.26(0.07)	&0.43(0.04)& 0.49(0.03) &	0.32(0.07) & 0.58(0.04) & 0.59(0.03)\\
\hline
%\multirow{3}{4em}{Multiple row} & cell2 & cell3 \\ 
$1.5p$&AUC &$\text{psFGGM}_\text{95\%}$ & 0.92(0.03) & 0.92(0.02) & 0.87(0.02) & 0.96(0.02) & 0.93(0.02) &0.87(0.02)\\
&  &  $\text{psSCREEN}_\text{95\%}$ & 0.91(0.03) & 0.92(0.02) & 0.86(0.02) & 	0.94(0.03) & 0.93(0.02) & 0.88(0.02)  \\
\cline{2-9} 
&$\text{AUC}15^\dagger$ & $\text{psFGGM}_\text{95\%}$ &  0.74(0.07) & 0.83(0.03) & 0.75(0.03) & 0.86(0.05) & 0.83(0.03) &0.74(0.03)\\
&  &  $\text{psSCREEN}_\text{95\%}$ & 0.66(0.07)	 & 0.80(0.03) & 0.72(0.03) &	0.77(0.06) & 0.83(0.03) & 0.75(0.03)\\
\hline
\end{tabular}}
{\footnotesize$\dagger$AUC15 is AUC computed for FPR in the interval [0, 0.15], normalized to have maximum area 1.}
\end{table}

\section{Right- and Left-Hand Task Partial Separability Estimates for Section~\ref{sec: app}}
\label{s-sec: app}

Motivated by Theorem \ref{thm: PSequiv}.3 we compare the empirical performance of the partially separable and univariate Karhunen-Lo\`eve type expansions on the right-hand task fMRI curves under different number of eigenfunction estimates. Subjects are randomly assigned into training and validation sets of equal size. The training set is used to estimate the eigenfunctions of each expansion, whereas the validation set is used to compute out-of-sample variance explained percentages. Boxplots are computed on 100 simulations of this procedure. Figure \ref{fig:in-sample-VE-rh} shows that the univariate Karhunen-Lo\`eve exhibits a better in-sample performance, a known optimality property of functional principal component analysis. On the other hand, Figures \ref{fig:out-of-sample-VE-rh} and \ref{fig:ratio-of-sample-VE-rh} show that the partially separable decomposition exhibits a better out-of-sample performance in both absolute terms and in relative comparison to its in-sample performance. Similar conclusion can be obtained for the left-hand task in Figure \ref{fig:motorTask_boxplots_lh}.

\begin{figure}[H]
\subcaptionbox{ In-Sample \label{fig:in-sample-VE-rh}}{\includegraphics[width=0.32\textwidth]{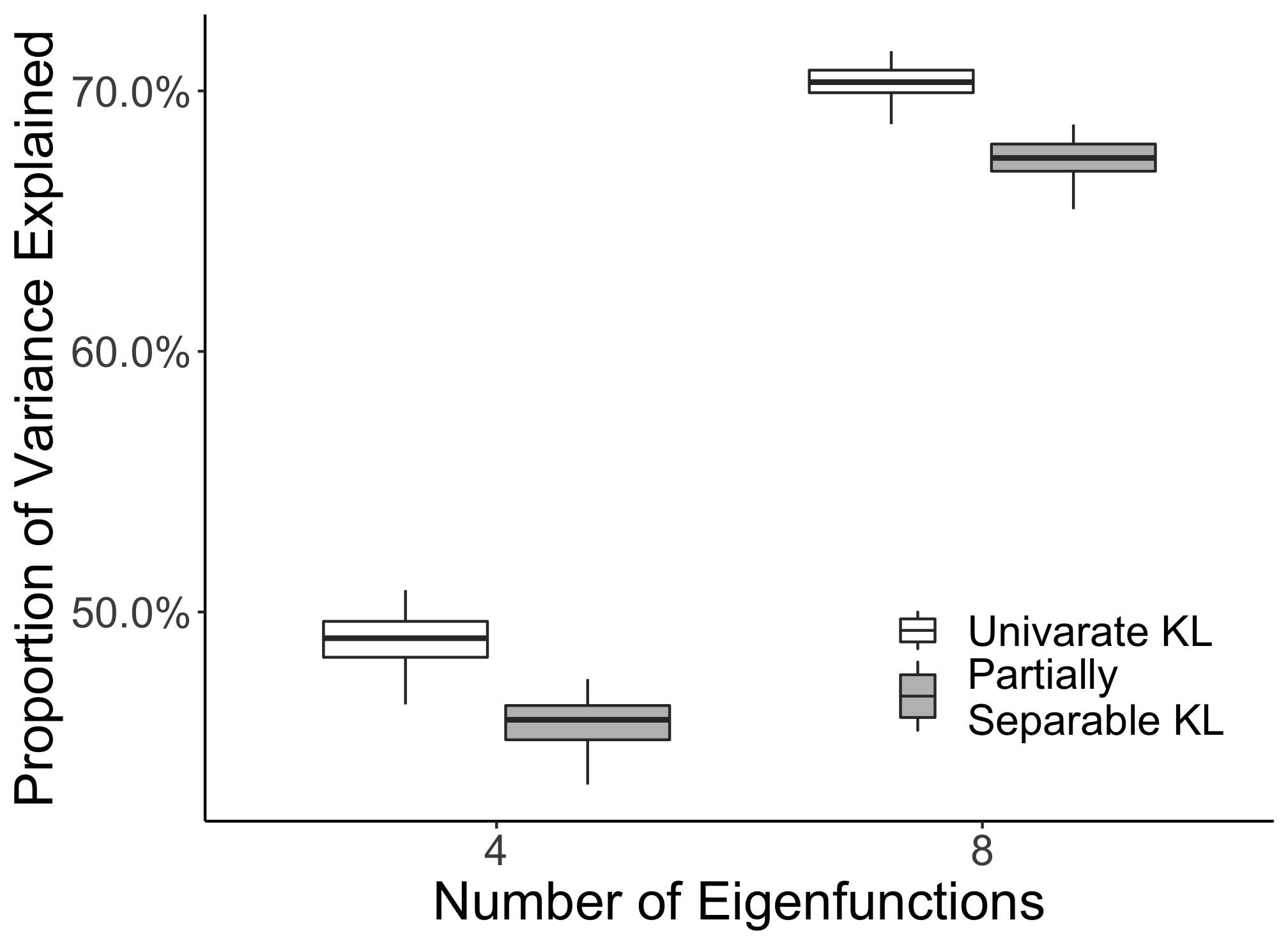}}%
\subcaptionbox{Out-of-Sample \label{fig:out-of-sample-VE-rh}}{\includegraphics[width=0.32\textwidth]{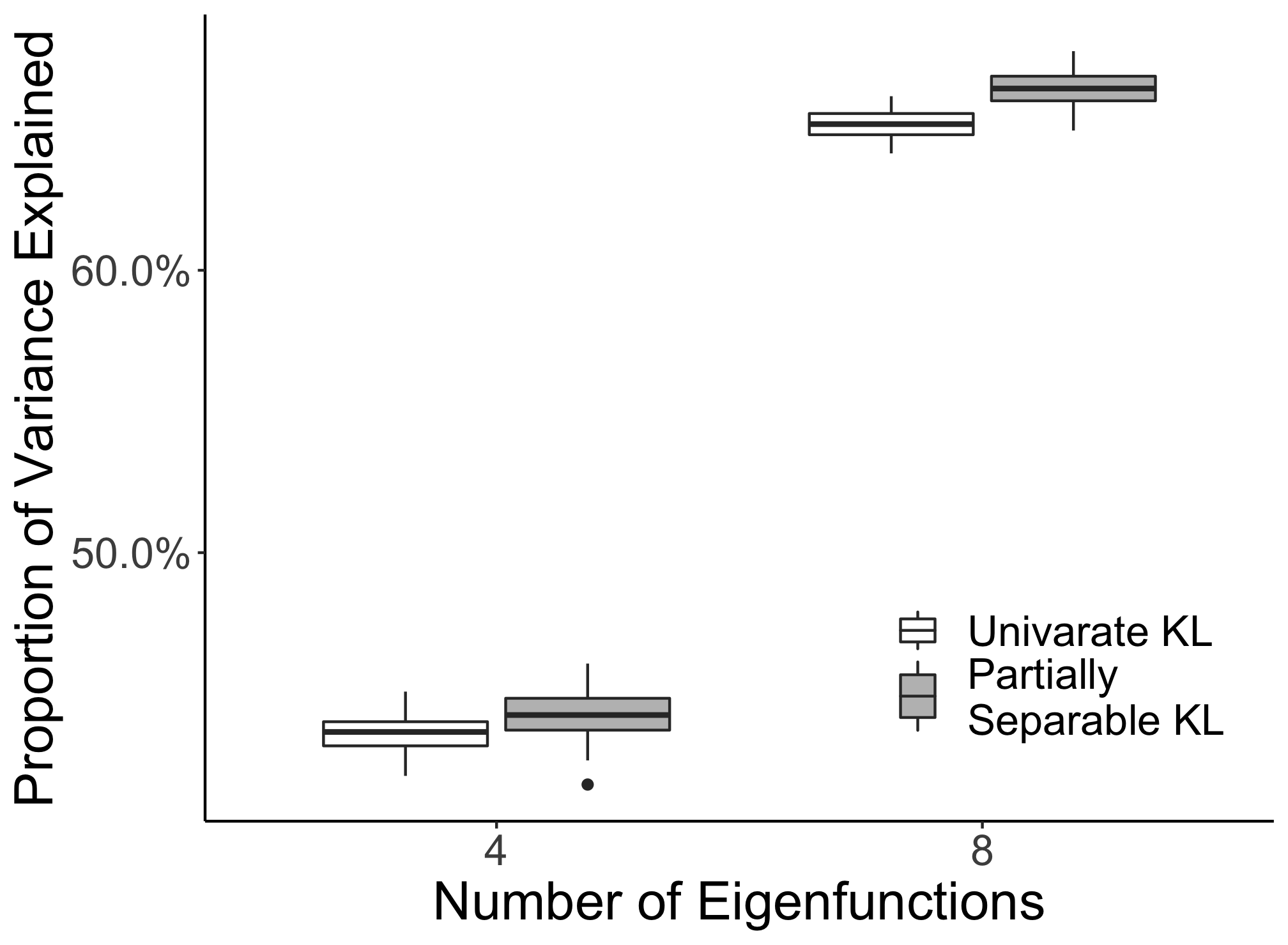}}%
\subcaptionbox{Out-of- to In-Sample Ratio \label{fig:ratio-of-sample-VE-rh}}{\includegraphics[width=0.32\textwidth]{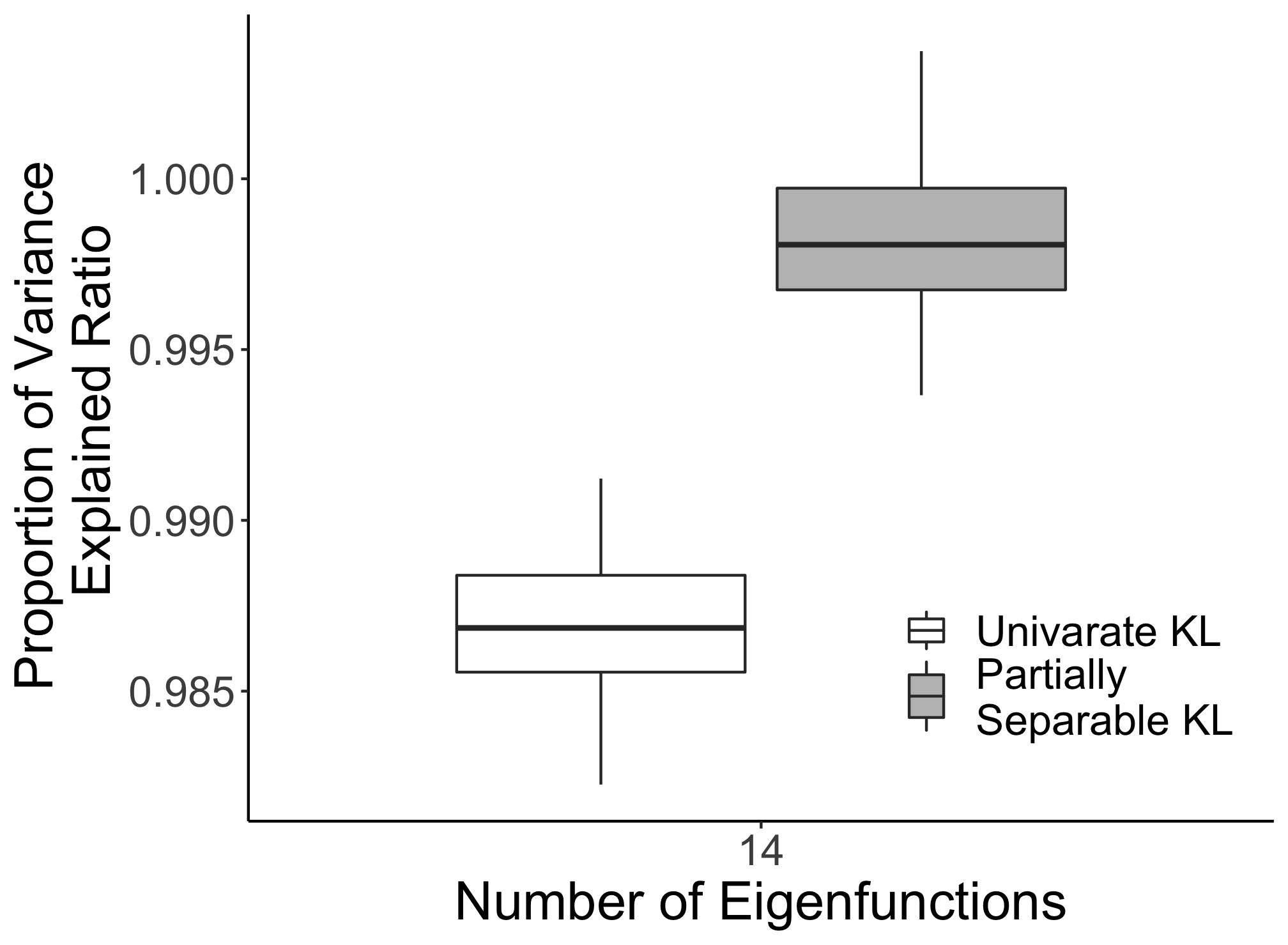}}%
\caption{ Estimated variance explained for partially separable and univariate Karhunen-Lo\`eve type expansions for right-hand task fMRI curves: (a) In-Sample, (b) Out-of-Sample and (c) Out-of- to In-Sample Ratio. 
}
\label{fig:motorTask_boxplots_rh}
\end{figure}

\begin{figure}[H]
\subcaptionbox{ In-Sample \label{fig:in-sample-VE-lh}}{\includegraphics[width=0.32\textwidth]{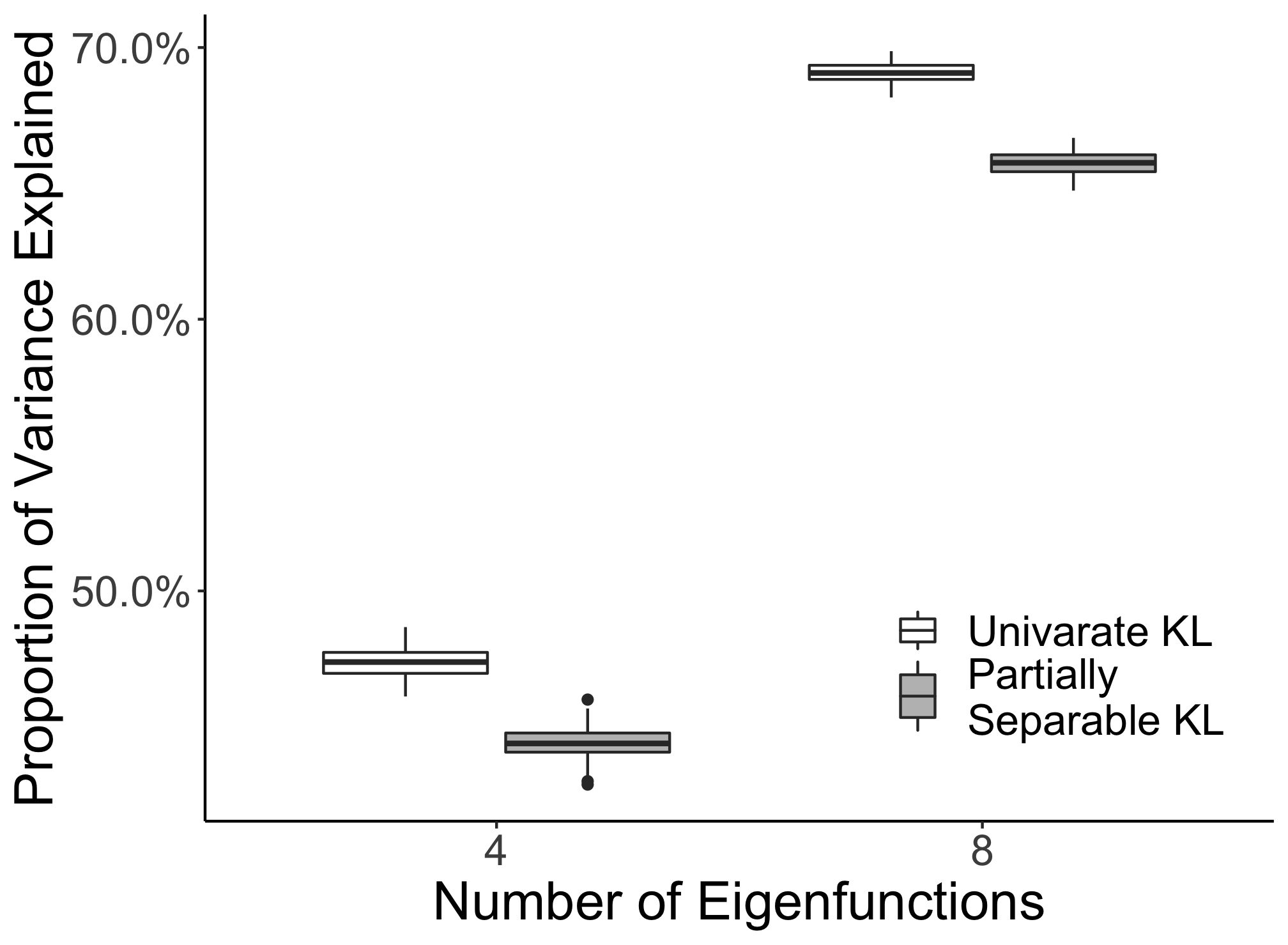}}%
\subcaptionbox{Out-of-Sample \label{fig:out-of-sample-VE-lh}}{\includegraphics[width=0.32\textwidth]{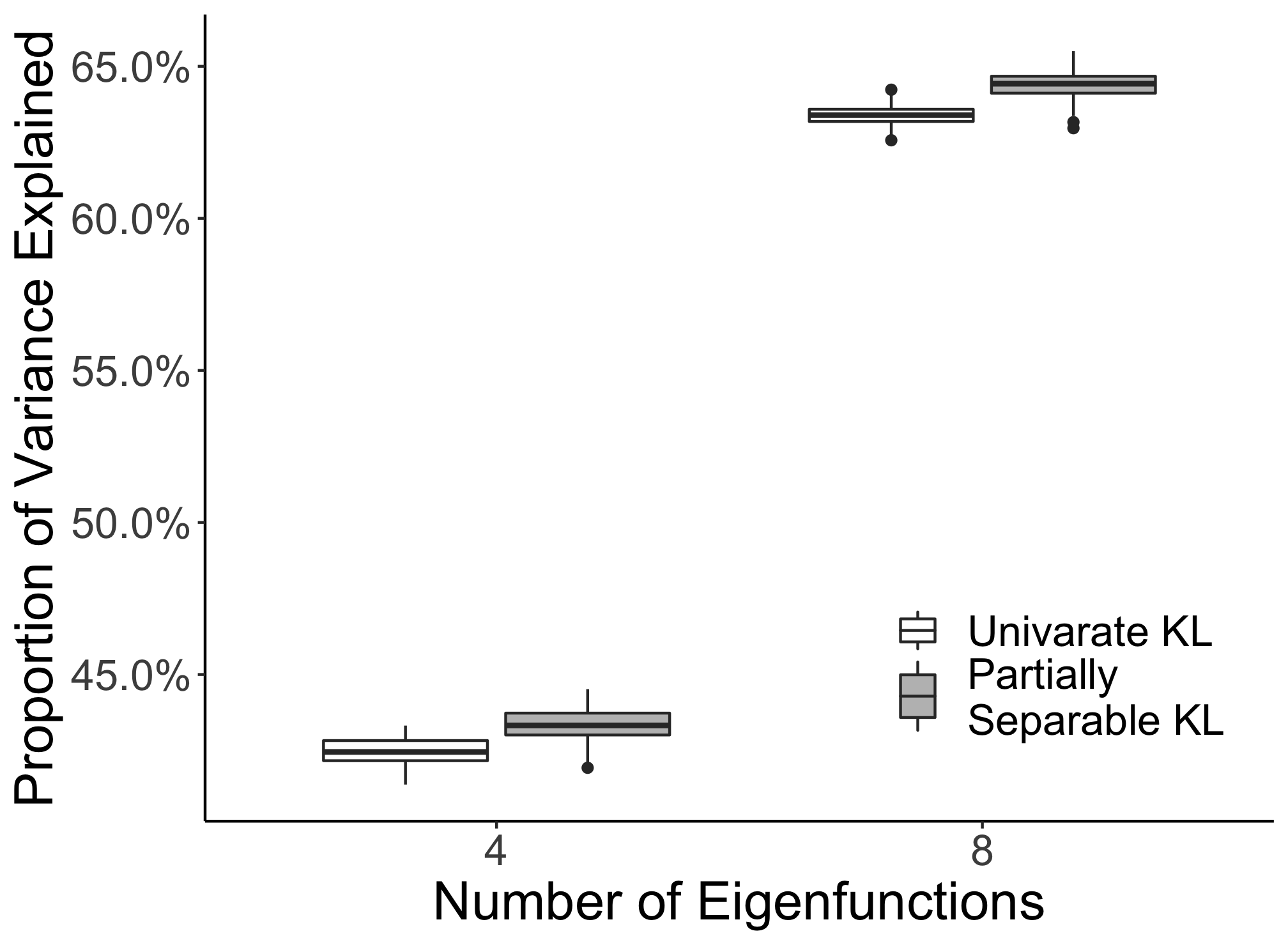}}%
\subcaptionbox{Out-of- to In-Sample Ratio \label{fig:ratio-of-sample-VE-lh}}{\includegraphics[width=0.32\textwidth]{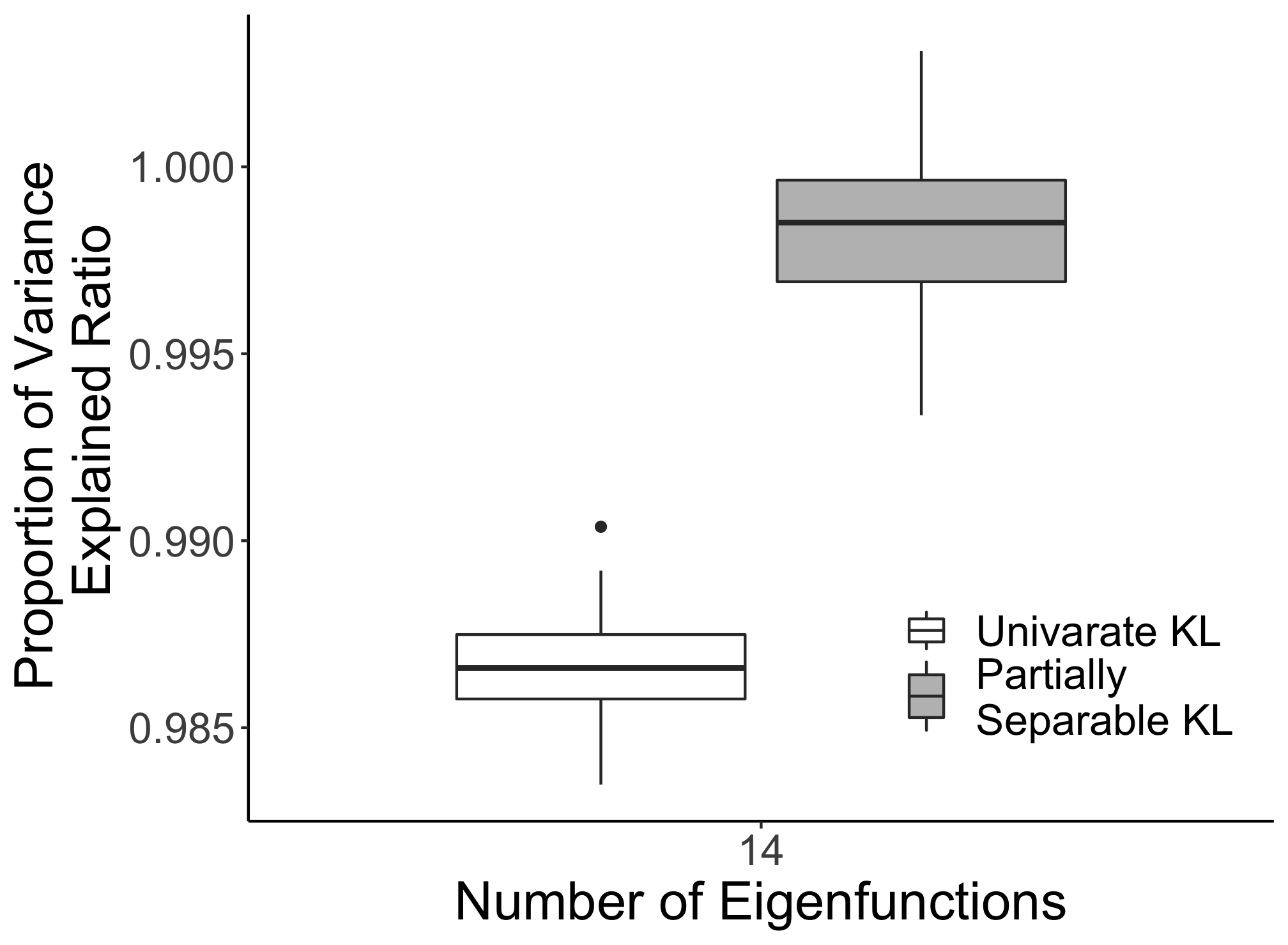}}%
\caption{ Estimated variance explained for partially separable and univariate Karhunen-Lo\`eve type expansions for left-hand task fMRI curves: (a) In-Sample, (b) Out-of-Sample and (c) Out-of- to In-Sample Ratio. 
}
\label{fig:motorTask_boxplots_lh}
\end{figure}

A second way to probe the partial separability assumption is to examine the correlations of the $\theta_{lj}.$ Part 3 of Theorem~\ref{thm: PSequiv} suggest that these should be zero for different $l$.  Figure \ref{fig:motorTask_blockcorrelations_rh} compares the correlation structures of the random coefficients in using both the functional principal component and partial separability bases, on a basis-first order (as illustrated in Figure \ref{fig:Cov_Structure}(c)) using the entire right-hand task dataset. The partially separable expansion coefficients have a similar correlation structure to their univariate Karhunen-Lo\`eve type counterpart, and show that the strongest correlations exist between scores obtained from the same basis function.  Similar conclusions can be drawn for the left-hand task in Figure \ref{fig:motorTask_blockcorrelations_lh}. All in all, no contraindication of the partial separability structure was found in the dataset.

\begin{figure}[H]
\subcaptionbox{\cite{qiao:19} \label{fig:fgm_blockCorr_rh}}{\includegraphics[width=0.5\textwidth]{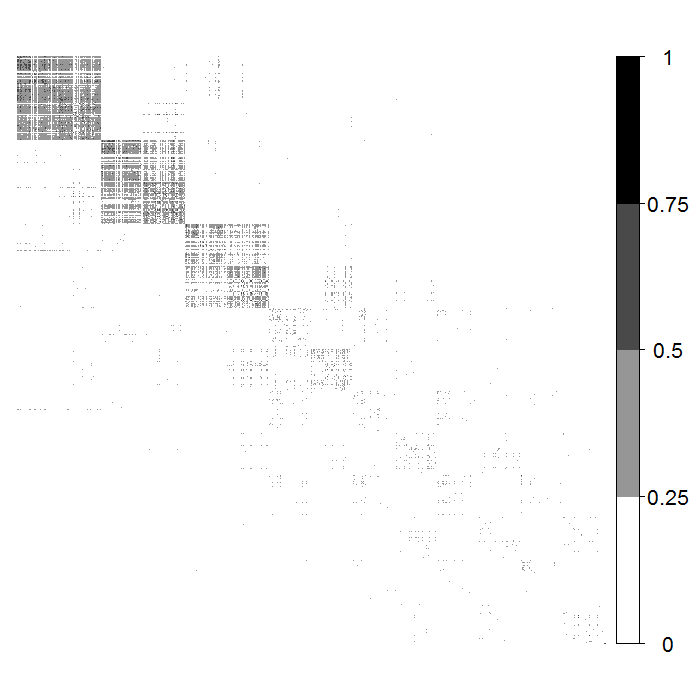}}%
\subcaptionbox{Partially separable \label{fig:psfgm_blockCorr_rh}}{\includegraphics[width=0.5\textwidth]{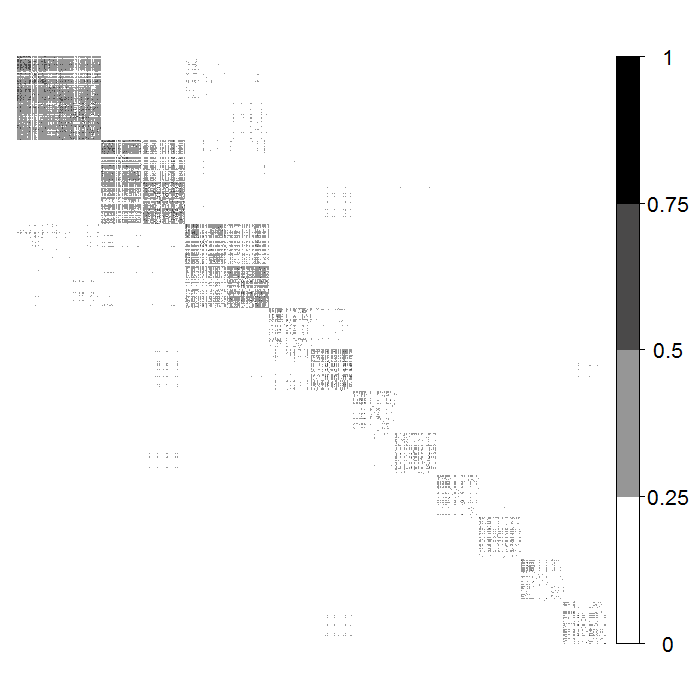}}%
\caption{ Estimated correlation structures of $\mathbb{R}^{Lp}$-valued random coefficients from different $L$-truncated  Karhunen-Lo\`eve type expansions for the right-hand task. The figure shows the upper left 7 x 7 basis blocks of the absolute correlation matrix in basis-first order for: (a) functional principal component coefficients $(\xi_1^T, \dots, \xi_p^T)^T$ in \eqref{eq: basisExp} as in \cite{qiao:19}, and (b) random coefficients $(\theta_1\T,\ldots,\theta_L\T)\T$ under partial separability in \eqref{eq: PS_KL}.}
\label{fig:motorTask_blockcorrelations_rh}
\end{figure}

\begin{figure}[H]
\subcaptionbox{\cite{qiao:19} \label{fig:fgm_blockCorr_lh_qiao}}{\includegraphics[width=0.5\textwidth]{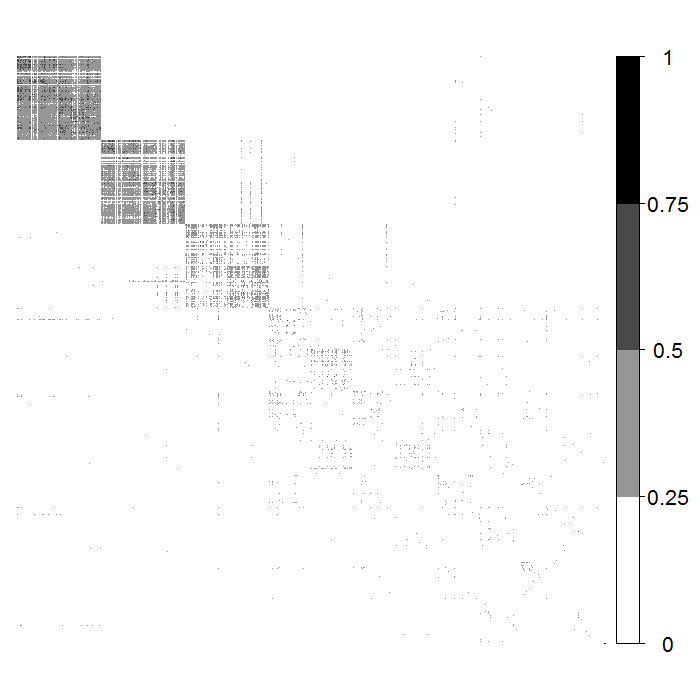}}%
\subcaptionbox{Partially separable \label{fig:fgm_blockCorr_lh_ps}}{\includegraphics[width=0.5\textwidth]{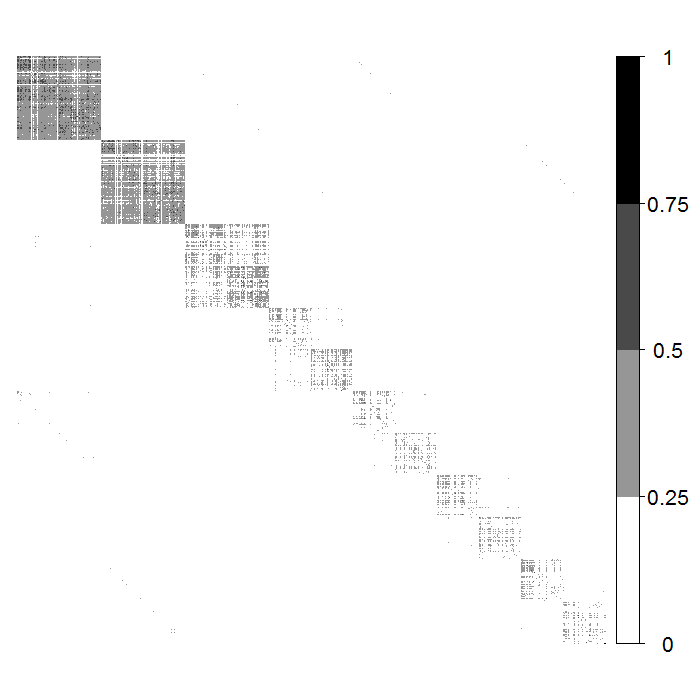}}%
\caption{Estimated correlation structures of $\mathbb{R}^{Lp}$-valued random coefficients from different $L$-truncated  Karhunen-Lo\`eve type expansions for the left-hand task. The figure shows the upper left 7 x 7 basis blocks of the absolute correlation matrix in basis-first order for: (a) functional principal component coefficients $(\xi_1^T, \dots, \xi_p^T)^T$ in \eqref{eq: basisExp} as in \cite{qiao:19}, and (b) random coefficients $(\theta_1\T,\ldots,\theta_L\T)\T$ under partial separability in \eqref{eq: PS_KL}.}
\label{fig:motorTask_blockcorrelations_lh}
\end{figure}

\clearpage

\bibliographystyle{biometrika}
\bibliography{FunGM_ParSep}

\end{document}